\documentclass[conference,onecolumn,11pt,letter]{IEEEtran} 
\usepackage{amsmath,amstext, amsfonts, amssymb,url,pstricks,psfrag,graphicx,enumerate,dsfont}
\setlength{\textwidth}{7.3in}  
\setlength{\topmargin}{-0.2in}  
\setlength{\headheight}{-.50in}  
\setlength{\textheight}{9.5in}   

\newtheorem{lemma}{Lemma} 
  
\newtheorem{theorem}{Theorem} 
\newtheorem{corollary}{Corollary}   
\newtheorem{definition}{Definition} 
\newenvironment{proofs}[1]{\flushleft{{\bf Proof [{#1}]:}}\\ }{\flushright{{\bf QED\\}}}  

\newcommand{\so}[1]   {{\sf o}{({#1})}}

\newcommand{\PX}[1]   {{\bf P}\!\left[{#1}\right]} %Probability  
\newcommand{\EX}[1]   {{\bf E}\!\left[{#1}\right]} %Expectation  
\newcommand{\PCX}[2]  {{\bf P}\!\left[\left.\! {#1} \right| {#2} \right]} %Conditional Probability  
\newcommand{\ECX}[2]  {{\bf E}\!\left[\left.\! {#1} \right| {#2} \right]} %Conditional Expectation  
\newcommand{\IND}[1]  {{\mathds{1}}_{\{ {#1} \} }}  %Indicator function
\newcommand{\DEF}[0]  {{\triangleq}}  
\newcommand{\supp}[1] {{\sf  supp}{{#1}}}       

\newcommand{\CX}[0] {{\cal C}}      %Capacity  
\newcommand{\DX}[0] {{\cal D}}      %Exponent  
\newcommand{\ENERG}[0] {{\cal S}}   %energy spend  
\newcommand{\AV}[0] {{\cal P}}      %average power constraint  
\newcommand{\ENC}[0] {{\it X}}       

\newcommand{\blx}[0] {{\sf n} }      %Block Length    
\newcommand{\SC}[0] {{\mathcal Q}}   %code sequence
\newcommand{\CTM}[0]{{W}}  
\newcommand{\CT}[2] {{\CTM}({#2}|{#1})}

\newcommand{\mes}[0]   {{\sf M}}      
\newcommand{\est}[0]   {\hat{{\sf M}}}

\newcommand{\dmes}[0]  {{\it m}}     
\newcommand{\mesS}[0]  {{\cal M}}

\newcommand{\inp}[0]   {{{{\sf X}}}}
\newcommand{\dinp}[0]  {{{\it x}}}
\newcommand{\inpS}[0]  {{{\cal X}}}

\newcommand{\out}[0]   {{{\sf Y}}}
\newcommand{\dout}[0]  {{{\it y}}}
\newcommand{\outS}[0]  {{{\cal Y}}}

\newcommand{\fsy}[0]   {{{\sf Z}}}
\newcommand{\dfsy}[0]  {{{\it z}}}
\newcommand{\fsyS}[0]  {{{\cal Z}}}

\newcommand{\rsy}[0]   {{{\sf A}}}

\newcommand{\ers}[1] {{\mu}_{#1}} 
\newcommand{\erst}[2] {\tilde{{\mu}}_{#1}({#2})}

%%%%%%%%%%%%%%%%%%%%%%%%%%%Error -Erasure Probabilities
\newcommand{\Era}[0]  {{\bf x}}
\newcommand{\Ero}[0]  {{\bf e}}

\newcommand{\Pe}[0]   {{\it P_{\Ero}}}         %Probability of error   
\newcommand{\Per}[0]  {{\it P_{\Era}}}         %Probability of erasure    
\newcommand{\Pem}[1]  {{\it P}_{\Ero|{#1}}}         %Probability of error   
\newcommand{\Perm}[1] {{\it P}_{\Era|{#1}}}         %Probability of erasure    

\newcommand{\PE}[0]   {{{\cal P}_{\Ero}}}        %Optimal Probability of error 
                    %Probability of error   
\newcommand{\PERA}[0] {{{\cal P}_{0,\Era}}}      %Optimal Probability of erasur
                  %Probability of erasure    

\newcommand{\XT}[0]    {{\mathsf P}}  % Conditional type
\newcommand{\XCT}[0]   {{\mathsf V}}  % Conditional type
\newcommand{\XCTB}[0]  {{\mathsf U}}  % Conditional type
\newcommand{\Ctype}[2] {\XCT_{{#1}| {#2}}} % Conditional type
\newcommand{\Ttype}[1] {\XCTB_{{#1}}} % Conditional type of second phase
\newcommand{\type}[1] {\XT_{{#1}}} % Conditional type
\newcommand{\XSOD}[0]  {{U}}
\newcommand{\XSID}[0]  {{\Pi}}

\newcommand{\Vshell}[2]  {T_{{#1}}\left( {#2} \right)     }            % V-shell
\newcommand{\Cshell}[0]  {T_{\XSOD}}               % V-shell intersaection
\newcommand{\Vsint}[4]  {F^{({#1})} \left({#2},{#3},{#4} \right)  }               % V-shell intersaection

\newcommand{\ACW}[0]  {{\dinp_{\blx_1+1}^{\blx}(a)}}
\newcommand{\RCW}[0]  {{\dinp_{\blx_1+1}^{\blx}(r)}}
\newcommand{\YA}[0]   {\dout^{\blx_1}}
\newcommand{\YB}[0]   {\dout_{\blx_1+1}^{\blx}}

\newcommand{\ZA}[0]   {\dfsy^{\blx_1}}
\newcommand{\ZB}[0]   {\dfsy_{\blx_1+1}^{\blx}}

\newcommand{\ts}[0]  {\alpha}
\newcommand{\tst}[0]  {\tilde{\alpha}}
\newcommand{\por}[0]  {\succ}
\newcommand{\npor}[0]  {\nsucc}
\newcommand{\Prand}[4]  {\eta\left({#1},{#2},{#3},{#4}\right)}

\newcommand{\apri}[0]  {{\it \varphi}}
\newcommand{\XLD}[2]{\Omega \left(  {#1}, {#2} \right)}

\newcommand{\ACH}[0]  {{\Psi}}
\newcommand{\ach}[0]  {{\psi}}
\newcommand{\DECS}[0] {\Upsilon}
\newcommand{\Ymar}[1] {({#1} )_{Y}}

%%%%%%%%%%%%%%%%%%%%%%%%%%%%%%%%%%%%SETS 
\newcommand{\PSet}[3]    {{\cal P} \left({#1},{#2},{#3}\right)}
\newcommand{\Cset}[0]    {{\cal V}}
\newcommand{\Eraset}[0]  {{\cal V_{\Era}}}
\newcommand{\Eroset}[0]  {{\cal V_{\Ero}}}

\newcommand{\XQ}[3]  {{{\it U}_{{#1}}({#2}|{#3})}}
\newcommand{\Xq}[1]  {{{\it U}_{{#1}}}}
\newcommand{\WS}[2]  {{\bf {\it S}}_{{#1},{#2}}}
\newcommand{\XPT}[2]    {{\it P}_{{#1}}\left[{#2}\right]} 
\newcommand{\XPTC}[3]   {{\it P}_{#1}\left[\left.\! {#2} \right| {#3} \right]} 
\newcommand{\XET}[2]    {{\it E}_{{#1}}\left[{#2}\right]} 
\newcommand{\XETC}[3]   {{\it E}_{#1}\left[\left.\! {#2} \right| {#3} \right]}

%%%%%%%%%%%%%%%%Exponents%%%%%%%%%%%%%%%%%%%%%
\newcommand{\cesp}[0]     {{\bf \zeta}   }
\newcommand{\cespal}[2]   {{R}^{*}_{l}({#1},{#2})}
\newcommand{\cespah}[2]   {{R}^{*}_{h}({#1},{#2})}
\newcommand{\hexp}[0]     {{E}_{H}}
\newcommand{\hexpn}[0]    {\tilde{{E}}_{H}}
\newcommand{\rexp}[0]     {{E}_{r}}
\newcommand{\spexp}[0]    {{E}_{sp}}
\newcommand{\bhexp}[2]    {{\Gamma} \left({#1},{#2} \right)}
\newcommand{\obhexp}[1]   {{\Gamma} \left({#1}\right)}

\newcommand{\TEX}[0]   {{\cal E}_{\Ero}} %true error exponent with erasures 
\newcommand{\TEXn}[0]  {{\cal E}}        %true error exponent without erasures 
\newcommand{\TEXz}[0]  {{\cal E}_{\Era}} %true  erasures  exponent without error
\newcommand{\XEtwo}[0] {{\cal E}_{\Era,2}} %two messages erasure exponent without errors

\newcommand{\EXo}[0]  {E_{\Ero}}
\newcommand{\EXor}[0]  {\tilde{E}_{\Ero}}
\newcommand{\EXa}[0]  {{E_{\Era}}}

\newcommand{\MI}[2]   {{\sf I}\left({#1} , {#2} \right) }   % Mutual Information

\newcommand{\CENT}[2] {{\sf H} \left( \left.\! {#1} \right |{#2} \right)}
\newcommand{\DIV}[2]  {{\sf D}\left(\left.{#1} \right\Vert {#2} \right) }   
\newcommand{\CDIV}[3] {{\sf D}\left(\left.{#1} \right\Vert {#2} \vert {#3} \right) }  

\pagenumbering{arabic}
\pagestyle{plain}
\begin{document} 
\title{Errors-and-Erasures Decoding for Block Codes with Feedback} 

\author{\authorblockN{Bar\i\c{s} Nakibo\u{g}lu}
\authorblockA{%Department of
 Electrical  Engineering and Computer Science\\ 
 Massachusetts Institute of Technology %(MIT) 
\\MA 02139 Cambridge}
\and
\authorblockN{Lizhong Zheng}
\authorblockA{% Department of 
Electrical Engineering and Computer Science\\ 
 Massachusetts Institute of Technology %(MIT)
 \\MA 02139 Cambridge 
}}

\maketitle 
\begin{abstract} 
Inner and outer bounds are derived on the optimal performance of fixed length block codes on discrete memoryless channels  with feedback  and errors-and-erasures decoding. First an inner bound is derived using a two phase encoding scheme with communication and control phases together with the optimal decoding rule for the given encoding scheme,  among decoding rules that can be represented in terms of pairwise comparisons between the messages.  Then an outer bound is derived  using a generalization of the  straight-line bound to errors-and-erasures decoders and the optimal error exponent trade off of a  feedback encoder with two messages.  In addition  upper and lower bounds are derived, for the optimal erasure exponent of error free block codes in terms of the rate. 
Finally we present  a proof of the fact that the optimal  trade off between error exponents of a two message code  does not increase with feedback  on  DMCs. 
\end{abstract}

\section{Introduction:}\label{sec:introduction}
Shannon showed in \cite{sha} that the capacity of discrete memoryless channels (DMCs) does not increase even when  a noiseless and delay free feedback link is available from the receiver to the   transmitter.  On symmetric DMCs the sphere packing exponent bounds the error exponent of fixed length block codes from above, as  shown by Dobrushin\footnote{Later Haroutunian, \cite{har}, established an upper bound on the error exponent of block codes with feedback. This upper bound is equal to sphere packing exponent for symmetric channels but it is  strictly larger than the sphere packing exponent for non-symmetric channels.}  in \cite{dob}.  Thus relaxations like errors-and-erasures decoding or variable length coding are needed for feedback to increase the error exponent of block codes at rates larger than the critical rate on symmetric DMCs. In this work we investigate one such relaxation, namely  errors-and-erasures decoding and find inner and outer bounds to the optimal error exponent erasure exponent trade off. 

Finding the optimal encoding and decoding schemes,  and hence finding optimal performance by characterizing the surface of achievable error exponent erasure exponent pairs is an important motivation for the investigation of errors-and-erasure decoding.  Note, however,  that  finding the optimal performance with erasures  will implicitly solve the problem of finding the optimal feedback encoder and determining the error exponent for the erasure free fixed length block codes with feedback which is a long standing open problem.  Finding the optimal performance, however, is far from being the only important aspect of the problem. Determining the performance of feedback encoding schemes that are easier to implement,  more robust to the degradations of the feedback link   and bounding the loss in the performance compared to the  more complicated encoding schemes  are  both important tasks practically  and interesting ones  intellectually. This will be our aim in this paper. We will first analyze the performance of a two phase encoding scheme inspired by the optimal encoding schemes for variable length block codes and derive inner bounds to the optimal performance. Then we will derive outer bounds to the performance of general feedback encoding schemes with erasures and quantify the loss of performance by restricting ourselves to the above mentioned two phase schemes. This analysis complements the research on two related block coding schemes: variable length block coding and errors-and-erasures decoding for block codes without feedback. We start with a very brief overview of the previous work on these problems to motivate our investigation further.

Burnashev \cite{bur,bur1,bur2} was the first one to consider  variable-length block codes with feedback, instead of fixed length ones. He  obtained the exact expression for the error exponent at all rates. Later Yamamoto and Itoh, \cite{ito}, suggested a coding scheme which achieves the best error exponent for variable-length block codes with feedback by using a fixed length block code with  an errors-and-erasures decoding and repeating the same codeword  until a non-erasure decoding occurs.\footnote{Including erasures will not  increase  the exponent for variable-length block codes with feedback.} In fact any fixed length block code with erasures can be used in this repetitive fashion, like it was done in \cite{ito},  to get a variable length block code with essentially the same error exponent as the original fixed length block code. Thus \cite{bur} can be reinterpreted to give an upper bound to the error exponent achievable by fixed length block codes with erasures. Furthermore  this upper bound is achieved by the fixed length block codes with erasures described  in \cite{ito}, when  erasure probability is decaying to zero subexponentially with block length. However the techniques used in this line of work are insufficient for deriving proper inner or outer bounds  for the situation when erasure probability is decaying exponentially with block length. As explained in the following paragraph  the case with strictly positive erasure exponent is important both for engineering applications and for a  better  understanding of soft decoding with feedback.  Our investigation provides proper tools for such an analysis, results in  inner and outer bounds to the trade off between error and erasure exponents, while recovering all previously known results for the zero erasure exponent case.

When considered together with higher layers, the codes in the physical layer are part of a variable length/delay communication scheme with feedback. However in the physical layer itself fixed length block codes are used instead of variable length ones because of their amenability to modular design and robustness  against the noise in the feedback link. In such an architecture retransmissions  affect the performance of higher layers. The average transmission time is only a first order measure of this effect: as long as the erasure probability is vanishing with increasing block length, average transmission time will essentially be equal to the block length of the fixed length block code. Thus with an analysis like the one in \cite{ito},  the  cost of retransmissions are ignored as long as the erasure probability goes to zero with increasing block length. In a communication system with multiple layers, however, retransmissions usually  have costs beyond their effect on average transmission time, which are described  by constraints on the probability distribution of the decoding time. Knowledge of error erasure exponent trade off is useful in coming up with designs to meet those constraints.  An example of this phenomena is  variable length block coding schemes with  hard deadlines for decoding time, which has already been investigated by Gopala {\it et. al. } \cite{GYK} for   block codes without feedback. They have used a block coding scheme with erasures and they  resend the message whenever an erasure occurs. But because of the hard deadline, they employ this scheme only for some fixed number of trials. If all those trials fail, i.e. lead to an erasure, they use a non-erasure block code. Using the error exponent erasure exponent trade off they were able to obtain the best over all error performance for the given architecture. 

 This brings us to the second line of research we complement with our investigation: errors-and-erasures decoding for block codes without feedback. Forney  \cite{for} was the first one to consider errors-and-erasures decoding without feedback. He obtained an achievable trade off  between the exponents of  error and erasure probabilities.  Then Csisz\'ar and K\"orner, \cite{CK} achieved the same performance using universal coding and decoding algorithms.  Later Telatar and Gallager, \cite{EG}, introduced a strict improvement on certain channels over the results presented in \cite{for} and \cite{CK}. Recently there has been a revived interest in the errors-and-erasures decoding  for universally achievable performances \cite{moulin,Feder}, for  alternative methods of analysis \cite{merhav}, for extensions to the channels with side information \cite{erez} and implementation with linear block codes \cite{hof}. The encoding schemes in these codes do not have access to any feedback. However  if the transmitter can learn whether the decoded message was an erasure or not, it can resend the message whenever it is erased. Because of this block retransmission variant, these problems are sometimes called   decision feedback problems.

 We complement the  results on the error exponent erasure exponent trade off without feedback and the results about error exponent of variable length block codes with feedback,  by finding inner and outer bounds to the error exponent erasure exponent trade off of fixed length block codes with feedback.  We  first introduce  our model and notation in Section \ref{sec:model}.  Then in Section \ref{sec:ach} we derive  a  lower  bound using  a two phase coding algorithm similar to the one described by Yamamoto and Ito in \cite{ito} and  decoding rule and analysis  techniques, inspired by Telatar's in \cite{emre} for the non-feedback case.  Note that the analysis and the decoding rule in \cite{emre} is tailored for a single phase scheme and without feedback and the two phase scheme of \cite{ito} is tuned specifically to zero-erasure exponent; coming up with framework in which both of the ideas can be used efficiently is the main technical challenge here.  In Section \ref{sec:converse} we first  extend the  straight line bound  idea introduced by Shannon, Gallager and Berlekamp in \cite{SGB} to block codes with erasures.  Then we use it together with the outer bound on the error exponent trade off between two codewords with feedback to establish an outer bound for the error exponent of fixed length block codes with feedback and erasures.   In Section \ref{sec:efree} we first introduce error free  block codes with erasures and discuss their relation to the fixed length block codes with errors-and-erasures-decoding, and then  we  present inner and outer bounds to the erasure exponent of error free block codes and point out its relation to the error exponent erasure exponent trade off.

Before presenting  our analysis, let us make a brief digression  and discuss two channel models in which the use of feedback had been investigated for block codes without erasures. First channel model is the well known  additive white Gaussian noise channel (AWGNC) model. In AWGNCs if the power constraint $\AV$ is on the expected value of the energy spent on a block $\EX{\ENERG_{\blx}}$ i.e. power constraint is of the form  $\EX{\ENERG_{\blx}} \leq \AV \blx$, the error probability can be made to decay faster than any exponential function with block length $\blx$.  Schalkwijk and Kailath suggested a coding algorithm in \cite{sch1} which achieves a doubly exponential decay in error probability for continuous time AWGNCs, i.e. infinite bandwidth case. Later Schalkwijk \cite{sch2} modified that scheme to achieve  the same performance in discrete time AWGNCs, i.e. finite bandwidth case.   Concatenating Schalkwijk and Kailath scheme with pulse amplitude modulation stages, gives  a multi-fold exponential decrease in the error probability \cite{pin,zig,RGnew}.  However this behavior relies on the absence of any amplitude limit,  the particular form of the power constraint and the noise free nature of the feedback link. First of all, as observed in \cite{bur2} and \cite{NG} when there is an amplitude limit,  error probability decays only exponentially with block length. More importantly if the power constraint restricts the energy spent in transmission of each message for all  noise realizations, i.e. if the power constraint is an almost sure power constraint\footnote{As Kim {\it et. al.} \cite{KLW} calls it.} of the form $\ENERG_{\blx} \leq \AV \blx$; then  sphere packing exponent is still an upper bound to the error exponent  for AWGNCs as shown by Pinsker, \cite{pin}. Furthermore if the feedback link is also an AWGNC and if there is a  power constraint\footnote{This constraint can be an expected or almost sure constraint.} on the feedback transmissions, then  even in the case when there are only two messages, error probability decays only exponentially as it has been recently shown  by Kim {\it et.al.} \cite{KLW}.

The second channel model is the DMC model. Although feedback can not increase the error exponent for rates over the critical rate, it can simplify the encoding scheme  \cite{zig,yak}. Furthermore, for rates below the critical rate it is possible to improve the error exponent using feedback. Zigangirov \cite{zig} has  established  lower bounds to the error exponent for BSCs using a simple encoding scheme.  Zigangirov's lower bound is equal to the sphere packing exponent for all rates in the interval $[R_{crit}^{'},\CX]$ where $R_{crit}^{'}<R_{crit}$ and  Zigangirov's lower bound is strictly larger than the corresponding non-feedback exponent for rates below $R_{crit}^{'}$. Later Burnashev \cite{burbsc}  introduced an improvement to Zigangirov's bound for all positive rates less than $R_{crit}^{'}$.  D'yachkov \cite{yak}  generalized Zigangirov's encoding scheme for general DMCs and established a lower bound to the error exponent for general  binary input channels and k-ary symmetric channels.  However it is still an open problem to find a constructive  technique that can be used for all DMCs which outperforms the random coding bound.  Like AWGNCs there has been a revived interest in the effect of a noisy feedback link and achievable performances with  noisy feedback  on  DMCs. Burnashev and Yamamoto recently showed that error exponent of BSC channel increases even with a noisy feedback link \cite{BYJ}, \cite{BY}.  Furthermore Draper and Sahai \cite{DS} investigated the use of noisy feedback link  in variable length schemes.

\section{Model and Notation:}\label{sec:model}
The input and output alphabets of the forward channel are $\inpS$ and $\outS$, respectively. The channel input and output symbols at time $t$ will be denoted by $\inp_t$ and $\out_t$ respectively. Furthermore, the sequences of input and output symbols from time $t_1$ to time $t_2$ are denoted by  $\inp_{t_1}^{t_2}$ and $\out_{t_1}^{t_2}$. When $t_1=1$ we omit $t_1$ and simply write $\inp^{t_2}$ and $\out^{t_2}$ instead of  $\inp_{1}^{t_2}$ and $\out_{1}^{t_2}$. The forward channel is  a stationary  memoryless channel characterized by an $|\inpS|$-by-$|\outS|$ transition probability matrix $\CTM$.   
\begin{equation}   
\label{eq:chans}
\PCX{\out_t}{\inp^t,\out^{t-1}}=\PCX{\out_t}{\inp_t}= \CT{\inp_t}{\out_t}  \qquad \forall t.
 \end{equation}
%The feedback channel is  noiseless and delay free, i.e. the transmitter  observes $\out_{t-1}$ before transmitting  $\inp_t$.
The feedback channel is  noiseless and  delay free i.e. the input of the feedback channel $\fsy_{t-1}$,  chosen at the receiver, is observed at the transmitter before transmission of  $\inp_t$. In addition we assume that feedback channel is of infinite capacity thus $\fsy_{t-1}$  includes all of the observation of the receiver at time $t-1$, i.e.\footnote{For $t=1$ we have  $\fsy_{0}=\rsy_{0}$}     $\fsy_{t-1}=(\out_{t-1},\rsy_{t-1})$.  The random variables $\rsy_0, \rsy_1,\ldots,\rsy_{\blx}$ are  there to  enable randomized encoding and decoding schemes as we will see shortly. It is assumed that the choice $\rsy$'s does not affect  the forward channels behavior, i.e. in addition to (\ref{eq:chans}) we have\footnote{We make a slight abuse of notation and denote $\fsy_{0}^{t}$ by $\fsy^{t}$.}
\begin{equation}   
\label{eq:chans2}
\PCX{\out_t}{\inp^t,\fsy^{t-1}}= \CT{\inp_t}{\out_t}  \qquad \forall t.
 \end{equation}
%The message $\mes$ is drawn from the message set $\mesS$  with a uniform probability distribution and is given to the transmitter at time zero.   At each time $t \in [1,\blx]$ the input symbol ${\ENC}_t(\mes,\out^{t-1})$ is sent. The sequence of functions $\ENC_t(\cdot): \mesS \times \outS^{t-1}$ which assigns an input symbol for each $\dmes \in \mesS$ and $\dout^{t-1}\in\outS^{t-1}$ is called the encoding function.
The message $\mes$ is drawn from the message set $\mesS$  with a uniform probability distribution and is given to the transmitter at time zero.   At each time $t \in [1,\blx]$ the input symbol ${\ENC}_t(\mes,\fsy^{t-1})$ is sent. The sequence of functions $\ENC_t(\cdot): \mesS \times \fsyS^{t-1}$ which assigns an input symbol for each $\dmes \in \mesS$ and $\dfsy^{t-1}\in\fsyS^{t-1}$ is called the encoding function. Note that the random variables $\rsy_0,\rsy_1,\ldots,\rsy_{\blx-1}$  enable randomized encoding schemes.  
 After receiving $\out^{\blx}$ the receiver draws the final $\rsy$, i.e. $\rsy_{\blx}$, and   decodes to the message $\est(\fsy^{\blx}) \in \{\Era\} \cup \mesS$ where $\Era$ is the erasure symbol. The  random variable $\rsy_{\blx}$ does not have any effect on the   encoding; it is used only   to enable randomized decoding schemes.

  The conditional  error and erasure probabilities $\Pem{\mes}$  and $\Perm{\mes}$ and unconditional  error and erasure  probabilities,  $\Pe$  and $\Per$ are defined as,
\begin{align*}
 \Pem{\mes} &\triangleq \PCX{\est\neq \mes}{\mes}-\Perm{\mes}
 &
 \Perm{\mes} &\triangleq \PCX{\est=\Era}{\mes}  
 \\ 
 \Pe & \triangleq  \PX{\est\neq \mes} -\Per
 &
 \Per&\triangleq \PX{\est=\Era}  
 \end{align*}
Since all the messages are equally likely we have,
\begin{align*}
\Pe&=\tfrac{1}{|\mesS|}\sum\nolimits_{\dmes}\Pem{\dmes}
&
\Per&=\tfrac{1}{|\mesS|} \sum\nolimits_{\dmes}    \Perm{\dmes}. 
\end{align*}
We use a somewhat abstract but rigorous approach in defining the rate and achievable exponent pairs. A reliable sequence $\SC$, is a sequence of codes indexed by their block lengths such that
\begin{equation*}
\lim_{\blx \rightarrow \infty} (\Pe^{(\blx)}+ \Per^{(\blx)}+\tfrac{1}{|\mesS^{(\blx)}|})=0.
\end{equation*}
In other words reliable sequences are sequences of codes whose overall error probability, detected and undetected,  vanishes and whose size of message set grows to infinity  with block length $\blx$.
\begin{definition}\label{def:sec}
 The rate,  erasure exponent,  and error exponent of a reliable sequence $\SC$ are given by
 \begin{equation*}
    {R}_{\SC}   \triangleq \liminf_{\blx \rightarrow \infty} \tfrac{\ln |\mesS^{(\blx)}|}{\blx}  \qquad
 {\EXa}_{\SC}  \triangleq \liminf_{\blx \rightarrow \infty} \tfrac{-\ln \Per^{(\blx)}}{\blx}      \qquad  
 {\EXo}_{\SC}  \triangleq \liminf_{\blx \rightarrow \infty} \tfrac{-\ln \Pe^{(\blx)}}{\blx}.    
 \end{equation*}
\end{definition}
Haroutunian, \cite[Theorem 2]{har}, has already established a strong converse for erasure free block codes with feedback which in our setting implies that $\lim_{\blx \rightarrow \infty} (\Pe^{(\blx)}+\Per^{(\blx)})=1$ for all codes whose rates are strictly above the capacity, i.e. $R>\CX$. Thus we  consider only rates that are less than or equal to the capacity, $R \leq \CX$. For all rates $R$ below capacity and for all non-negative erasure exponents $\EXa$, we  define the (true) error exponent  $\TEX(R,\EXa)$ of fixed length block codes with feedback to be the best error exponent of the reliable sequences\footnote{We  restrict ourselves to the reliable sequences in order to ensure  finite error exponent at zero erasure exponent. Note that a decoder which always declares erasures has zero erasure exponent and infinite error exponent.} whose rate is at least $R$ and whose erasure exponent is at least $\EXa$. 
\begin{definition}\label{def:exp} 
$\forall R \leq \CX$ and $ \forall \EXa \geq 0$ the error exponent, $  \TEX (R,\EXa)$ is,
\begin{equation} 
 \TEX (R,\EXa) \triangleq
% \sup_{\SC: \substack{ R_{\SC} \geq R \\  {\EXa}_{\SC} \geq \EXa} } {\EXo}_{\SC}.
 \sup_{\SC:  R_{\SC} \geq R,  {\EXa}_{\SC} \geq \EXa}  {\EXo}_{\SC}.
\end{equation}
\end{definition}
Note  that 
\begin{equation}
  \label{eq:highera}
\TEX(R,\EXa)=\TEXn(R)   \qquad \forall  \EXa > \TEXn(R)  
\end{equation}
 where $\TEXn(R)$ is the (true) error exponent of erasure-free block codes on DMCs with feedback.\footnote{In order to see this consider a reliable sequence with erasures $\SC$ and replace its decoding algorithm by an erasure free decoding algorithm such that  $\est'(\dfsy^{\blx}) =  \est(\dfsy^{\blx})$ if $\est(\dfsy^{\blx}) \neq \Era$, to obtain a new reliable sequence  $\SC'$. Then $\Pe^{(\blx)}_{\SC'} \leq \Per^{(\blx)}_{\SC} + \Pe^{(\blx)}_{\SC}$; thus ${\EXo}_{\SC'}=\min \{ {\EXa}_{\SC} , {\EXo}_{\SC}\}$ and $R_{\SC'}=R_{\SC}$. This together with the definition of $\TEXn(R)$ leads to  equation  (\ref{eq:highera}). }  Thus benefit of the  errors-and-erasures decoding is  the possible increase in the error exponent as the erasure exponent goes below $\TEXn(R)$.

Determining $\TEXn(R)$ for all $R$'s  and for all channels is still an open problem; only upper and lower bounds  to $\TEXn(R)$ are known. Our investigation  focuses on quantifying the gains of errors-and-erasures decoding instead of finding $\TEXn(R)$. Consequently, we  restrict ourselves to the region where the erasure exponent is lower than the error exponent for the encoding scheme. 

For future reference let us recall the expressions for the random coding exponent and the sphere packing exponent,
\begin{align}
\rexp(R,P)&=\min_{V} \CDIV{V}{\CTM}{P}+ |\MI{P}{V}-R|^{+} & \rexp(R)&=\max_{P} \rexp(R,P)
\label{eq:exp}\\ 
\spexp(R,P)&=\min_{V: \MI{P}{V}\leq R} \CDIV{V}{\CTM}{P}    & \spexp(R)&=\max_{P} \spexp(R,P)
\label{eq:spexp}
\end{align}
where $\CDIV{V}{\CTM}{P}$ stands for conditional Kullback Leibler divergence of $V$ and $\CTM$ under $P$,  and $\MI{P}{V}$ stands for mutual information for input distribution $P$ and channel $V$.

 We denote the $\dout$ marginal of a distribution like $P(\dinp)V(\dout|\dinp)$ by $\Ymar{PV}$.  The support of a probability distribution $P$ is denoted by $\supp{P}$.

\section{An Achievable Error Exponent - Erasure Exponent Trade Off}\label{sec:ach}
In this section we  establish a lower bound to the achievable error exponent as a function of  erasure exponent and rate. We  use a two phase encoding scheme similar to the one described by Yamamoto and Ito in \cite{ito} together with a decoding rule similar to the one described by Telatar in \cite{emre}.  In the first phase, the transmitter  uses a fixed-composition code of length $\ts \blx$ and rate $\tfrac{R}{\ts}$. At the end of the first phase, the receiver makes a maximum mutual information decoding to obtain a tentative decision $\tilde{\mes}$. The transmitter knows $\tilde{\mes}$ because of the feedback link. In the remaining $(\blx-\blx_1)$ time units, i.e. the second phase, the transmitter confirms the tentative decision by sending the accept codeword, if $\tilde{\mes}=\mes$, and  rejects it by sending the reject codeword  otherwise. At the end of the second phase the receiver either declares an erasure or declares the tentative decision as the decoded message. Receiver declares the tentative decision as the decoded message only when the tentative decision ``dominates'' all other messages. The word ``dominate'' will be made precise later in Section \ref{subsec:coding}. Our scheme is inspired by \cite{ito} and \cite{emre}. However, unlike \cite{ito} our decoding rule makes use of outputs of both of the phases instead of output of just second phase while deciding between declaring an erasure or declaring the tentative decision as the final one, and unlike \cite{emre} our encoding scheme is a feedback encoding scheme with two phases.

In the rest of this section, we analyze  the performance of this coding architecture and  derive an achievable error exponent expression in terms of a given rate $R$, erasure exponent $\EXa$, time sharing constant $\ts$, communication phase type $P$, control phase type (joint empirical type of the accept codeword and reject codeword) $\XSID$ and domination rule $\por$. Then we  optimize over $\por$,  $\XSID$, $P$ and  $\ts$, to obtain an achievable error exponent expression as a function of rate $R$ and erasure exponent $\EXa$.
\subsection{Fixed-Composition Codes and  The Packing Lemma}
We start with a very brief overview of certain properties of types.  Those readers who are not familiar with method types can use  \cite{CT} for  a concise introduction   or   \cite{CK} for   a thorough study. The empirical distribution of an $\dinp^{\blx} \in {\inpS}^{\blx}$ is called the type of $\dinp^{\blx}$ and the empirical distribution of transitions from a $\dinp^{\blx} \in {\inpS}^{\blx}$ to a  $\dout^{\blx} \in {\outS}^{\blx}$ is called the conditional type:\footnote{Note that $\type{\dinp^{\blx}}$  corresponds to  a distribution on $\inpS$ for all $\dinp^{\blx} \in {\inpS}^{\blx}$, where as  $ \Ctype{\dout^{\blx}}{\dinp^{\blx}}$ determines a channel from the support of $\type{\dinp^{\blx}}$ to $\outS$.}
\begin{align}
\label{eq:deftype}
\type{\dinp^{\blx}}(\tilde{\dinp})     
&\DEF 
\tfrac{1}{\blx}   \sum_{t=1}^{\blx} \IND{\dinp_t=\tilde{\dinp}}
&& \tilde{\dinp} \in \inpS.\\
\label{eq:defctype}
\Ctype{\dout^{\blx}}{\dinp^{\blx}} (\tilde{\dout}|\tilde{\dinp})    
&\DEF
\tfrac{1}{\blx \type{\dinp^{\blx}}(\tilde{\dinp})}   \sum_{t=1}^{\blx} \IND{\dinp_t=\tilde{\dinp}}\IND{\dout_t=\tilde{\dout}}
&& \forall  \tilde{\dout}\in \outS,~~\forall \tilde{\dinp}~\mbox{s.t.}~\type{\dinp^{\blx}}(\tilde{\dinp})>0.
 \end{align}
For any probability  transition matrix  $W:\supp{\type{\dinp^{\blx}}} \rightarrow {\outS}$ we have\footnote{Note that for any $W:{\inpS} \rightarrow {\outS} $ there is unique consistent    $W':\supp{\type{\dinp^{\blx}}} \rightarrow {\outS}$.}
\begin{align}
\label{eq:pb1}
\prod_{t=1}^{\blx} W(y_t|x_t)= e^{-\blx (\CDIV{\Ctype{y^{\blx}}{x^{\blx}}}{W}{\type{x^{\blx}}}+\CENT{\Ctype{y^{\blx}}{x^{\blx}}}{\type{x^{\blx}}})}
 \end{align}
The set of  all $\dout^{\blx}$'s with the same conditional type $V$ with respect to $\dinp^{\blx}$ is called the $V$-shell of $\dinp^{\blx}$ and  denoted by  $\Vshell{V}{\dinp^{\blx}}$:
 \begin{equation}
    \Vshell{V}{\dinp^{\blx}} =\{\dout^{\blx} : \Ctype{\dout^{\blx}}{\dinp^{\blx}} =V  \}.
 \end{equation}
Note that for any  transition probability matrix  from $\inpS$ to $\outS$  total probability of  $\Vshell{V}{\dinp^{\blx}}$ has to be less than one. Thus by assuming that transition probabilities are $V$ and  using equation (\ref{eq:pb1}) we can conclude that,
\begin{align}
\label{eq:pb2}
 | \Vshell{V}{\dinp^{\blx}} | \leq e^{\CENT{\Ctype{\dout^{\blx}}{\dinp^{\blx}}}{\type{\dinp^{\blx}}}}
 \end{align}
Codes  whose codewords all have the same empirical distribution, $\type{\dinp^{\blx}(\dmes)}=P$  $\forall \dmes\in \mesS$ are called fixed-composition codes.  In Section \ref{subsec:eran}  we will  describe the error and erasure events in terms of the intersections of $V-$shells of different codewords. For doing that let us define  $\Vsint{\blx}{V}{\hat{V}}{\dmes}$ as the intersection of $V$-shell of $\dinp^{\blx}(\dmes)$ and  the $\hat{V}$-shells of other codewords:
  \begin{equation}
\label{eq:Vsintdef}
 \Vsint{\blx}{V}{\hat{V}}{\dmes} \triangleq \Vshell{V}{\dinp^{\blx}(\dmes)}   \bigcap \cup_{\tilde{\dmes}\neq \dmes}  \Vshell{\hat{V}}{\dinp^{\blx}(\tilde{\dmes})}.
  \end{equation}
The following packing lemma, proved by Csisz\'ar and K\"orner \cite[Lemma 2.5.1]{CK},  claims the existence of a code with a guaranteed upper bound on the size of $\Vsint{\blx}{V}{\hat{V}}{\dmes}$.
\begin{lemma}
\label{lem:pac}
   For every block length $\blx\geq 1$, rate $R>0$  and type $P$  satisfying $H(P)>R$, there exist at least $\lfloor e^{\blx (R-\delta_{\blx})} \rfloor $ distinct  type $P$ sequences
in ${\inpS}^\blx$ such that for every pair of stochastic matrices $V: \supp{P} \rightarrow \outS$, $\hat{V}:  \supp{P} \rightarrow \outS $ and  $\forall \dmes \in \mesS$
   \begin{equation*}
\left\lvert  \Vsint{\blx}{V}{\hat{V}}{\dmes} \right\rvert \leq \lvert  \Vshell{V}{\dinp^{\blx}(\dmes)} \rvert e^{- \blx \lvert I(P, \hat{V})-R \rvert^+  } 
   \end{equation*}
where $\delta_{\blx}=\tfrac{\ln 4+ (4 |\inpS| +6 |\inpS| |\outS|) \ln(\blx+1)}{\blx}$.
\end{lemma}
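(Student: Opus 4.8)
The plan is to prove Lemma~\ref{lem:pac} by the probabilistic method: I would draw the codewords independently and uniformly at random from $T_{P}$, the set of sequences of type $P$, compute the expected size of $\Vsint{\blx}{V}{\hat V}{\dmes}$, and then expurgate the codewords that violate the stated inequality for some pair $(V,\hat V)$. The point that makes the positive part $|\MI{P}{\hat V}-R|^{+}$ appear is that the relevant exponent is governed either by the trivial containment $\Vsint{\blx}{V}{\hat V}{\dmes}\subseteq\Vshell{V}{\dinp^{\blx}(\dmes)}$, which already yields the bound with exponent $0$ whenever $\MI{P}{\hat V}\leq R$, or by a union bound over the competing codewords, which yields exponent $\MI{P}{\hat V}-R$ when $\MI{P}{\hat V}>R$.

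First I would fix the codeword $\dinp^{\blx}(\dmes)$ and a point $\dout^{\blx}\in\Vshell{V}{\dinp^{\blx}(\dmes)}$, and bound the probability that an independent uniform codeword $Z\in T_{P}$ has $\dout^{\blx}$ in its $\hat V$-shell, i.e.\ $\PX{\Ctype{\dout^{\blx}}{Z}=\hat V}$. Counting the type-$P$ sequences $Z$ that are of conditional type $\hat V$ with the fixed $\dout^{\blx}$, using the identity $\CENT{X}{Y}=\ENT{P}-\MI{P}{\hat V}$ for a pair $(X,Y)$ of joint distribution $P\hat V$, together with the standard shell-size and type-class-size estimates, gives $\PX{\Ctype{\dout^{\blx}}{Z}=\hat V}\leq(\blx+1)^{|\inpS|}e^{-\blx\MI{P}{\hat V}}$. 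A union bound over the at most $M-1$ remaining codewords, capped at $1$, then yields $\PX{\dout^{\blx}\in\bigcup_{\tilde{\dmes}\neq\dmes}\Vshell{\hat V}{\dinp^{\blx}(\tilde{\dmes})}}\leq\min\{1,M(\blx+1)^{|\inpS|}e^{-\blx\MI{P}{\hat V}}\}$. Summing this over $\dout^{\blx}\in\Vshell{V}{\dinp^{\blx}(\dmes)}$ bounds $\EX{|\Vsint{\blx}{V}{\hat V}{\dmes}|}$ by $|\Vshell{V}{\dinp^{\blx}(\dmes)}|$ times a polynomial factor times $e^{-\blx|\MI{P}{\hat V}-R|^{+}}$, once $M$ is taken of order $e^{\blx R}$, since the cap $\min\{1,\cdot\}$ is exactly what produces the positive part.

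Next I would make the bound hold simultaneously for all pairs $(V,\hat V)$ and for most codewords. Since the number of conditional types from $\supp{P}$ to $\outS$ is polynomial in $\blx$, I would start from $2\lfloor e^{\blx(R-\delta_{\blx})}\rfloor$ codewords, call a codeword bad if the target inequality fails for some pair $(V,\hat V)$, and combine Markov's inequality (applied per pair) with a union bound over pairs. Taking $\delta_{\blx}$ large enough to absorb the accumulated polynomial factors drives the expected number of bad codewords below half the total, so discarding them leaves at least $\lfloor e^{\blx(R-\delta_{\blx})}\rfloor$ codewords for which the inequality holds for every $(V,\hat V)$ at once. The precise form of $\delta_{\blx}$ is simply the bookkeeping of these factors: the $(\blx+1)^{|\inpS|}$ from the single-codeword estimate, the $(\blx+1)^{|\inpS||\outS|}$-type factors counting the two conditional types, the $(\blx+1)^{|\inpS|}$ from the type-class-size estimate, and two factors of $2$ (one from the Markov threshold, one from starting with twice the target count), which after a conservative accounting give the stated $\ln 4$ and polynomial exponent $4|\inpS|+6|\inpS||\outS|$.

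The hardest part is not any single estimate but getting the bound to hold \emph{per codeword and for every pair $(V,\hat V)$ simultaneously} while keeping the codeword count at $e^{\blx(R-\delta_{\blx})}$; the expurgation step is what converts the average statement of the second paragraph into this worst-case guarantee, and it forces the uniform tracking of every polynomial factor so that the rate loss is exactly the stated $\delta_{\blx}$ rather than an unquantified $\so{1}$. A secondary subtlety is the marginal-consistency condition: the intersection $\Vsint{\blx}{V}{\hat V}{\dmes}$ is automatically empty unless $\Ymar{P\hat V}=\Ymar{PV}$, and one must ensure the $|\cdot|^{+}$ bound holds (trivially) in the degenerate cases, which is precisely why the containment $\Vsint{\blx}{V}{\hat V}{\dmes}\subseteq\Vshell{V}{\dinp^{\blx}(\dmes)}$ is invoked for all pairs with $\MI{P}{\hat V}\leq R$.
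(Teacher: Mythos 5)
Your proposal is correct in substance, and it is essentially the proof that the paper itself relies on: the paper does not reprove Lemma~\ref{lem:pac} but cites Csisz\'ar and K\"orner \cite[Lemma 2.5.1]{CK}, whose argument is exactly your random-selection-plus-expurgation scheme --- the estimate $\PX{\dout^{\blx}\in \Vshell{\hat{V}}{Z}}\leq (\blx+1)^{|\inpS|}e^{-\blx \MI{P}{\hat{V}}}$ for a uniform $Z$ of type $P$, the union bound over the other codewords capped at one (which, together with the containment $\Vsint{\blx}{V}{\hat{V}}{\dmes}\subseteq \Vshell{V}{\dinp^{\blx}(\dmes)}$, is what produces $\lvert \MI{P}{\hat{V}}-R\rvert^{+}$), Markov's inequality per pair of conditional types, and the discarding of at most half of $2\lfloor e^{\blx(R-\delta_{\blx})}\rfloor$ initial codewords, with $\delta_{\blx}$ sized to absorb the polynomial factors (the paper's $\delta_{\blx}$ is generous, so any careful polynomial accounting suffices).

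The one point your write-up leaves open is the word \emph{distinct} in the statement. Sampling with replacement can produce repeated codewords, and a repetition is not necessarily caught by your expurgation: if $R\geq \ln|\outS|$ then $\MI{P}{\hat{V}}\leq R$ for every $\hat{V}$, so the packing inequality is vacuous and duplicates would survive --- yet in that regime distinctness is the entire content of the lemma. This is precisely where the hypothesis $\ENT{P}>R$, which you never invoke, is needed: the probability that a given codeword coincides with one of the other $M-1$ is at most $M(\blx+1)^{|\inpS|}e^{-\blx\ENT{P}}\leq M(\blx+1)^{|\inpS|}e^{-\blx R}$, which is of the same order as your per-pair badness probability, so collisions can simply be added to the list of bad events and removed in the same expurgation step. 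With that one addition your argument is complete.
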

Above lemma is stated in a slightly different way by the authors of  \cite{CK}, for a fixed $\delta$ and large enough $\blx$. However, this form follows  immediately  from  their proof.

  If we use Lemma \ref{lem:pac} together with equations (\ref{eq:pb1}) and (\ref{eq:pb2}) we can bound the conditional probability of observing a $\dout^{\blx} \in \Vsint{\blx}{V}{\hat{V}}{\dmes}$ when  $\mes=\dmes$  as follows.
\begin{corollary}
\label{cor:cor1}
In a code satisfying Lemma \ref{lem:pac}, when message $\dmes \in \mesS$ is sent, the probability of receiving a $\dout^{\blx} \in  \Vshell{V}{\dinp^{\blx}(\dmes)}$ which is also in $\Vshell{\hat{V}}{\dinp^{\blx}(\tilde{\dmes})}$, for some $\tilde{\dmes}\in \mesS$ such that  $\tilde{\dmes}\neq \dmes$ is bounded as follows,
  \begin{align}
 \PCX{\Vsint{\blx}{V}{\hat{V}}{\mes}}{\mes}
&\leq e^{-\blx \Prand{R}{P}{V}{\hat{V}}}   
\label{eq:boundonf}
  \end{align}
where
\begin{equation}
\label{eq:prand}
 \Prand{R}{P}{V}{\hat{V}} \DEF \CDIV{V}{\CTM}{P}+ \lvert \MI{P}{\hat{V}}-R \rvert^+
\end{equation}
\end{corollary}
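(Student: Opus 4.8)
The plan is to decompose the conditional probability into a cardinality factor, which the packing lemma controls, and a per-sequence likelihood factor that is \emph{constant} over the entire shell. First I would observe that in the fixed-composition codebook supplied by Lemma \ref{lem:pac}, conditioning on $\mes=\dmes$ fixes the transmitted block to the deterministic codeword $\dinp^{\blx}(\dmes)$, which has type $P$. Since the forward channel is memoryless, every output $\dout^{\blx}$ lying in $\Vshell{V}{\dinp^{\blx}(\dmes)}$ has conditional type $\Ctype{\dout^{\blx}}{\dinp^{\blx}(\dmes)}=V$, and hence, by equation (\ref{eq:pb1}) applied with $W=\CTM$, it carries the identical likelihood
\begin{equation*}
\PCX{\dout^{\blx}}{\dmes}=\prod_{t=1}^{\blx}\CT{\dinp_t(\dmes)}{\dout_t}=e^{-\blx\left(\CDIV{V}{\CTM}{P}+\CENT{V}{P}\right)}.
\end{equation*}
This uniformity of the likelihood over the shell is the structural fact that drives the whole estimate.

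Next I would write the left-hand side of (\ref{eq:boundonf}) as a sum of this common likelihood over the intersection set, so that
\begin{equation*}
\PCX{\Vsint{\blx}{V}{\hat{V}}{\dmes}}{\dmes}=\left\lvert\Vsint{\blx}{V}{\hat{V}}{\dmes}\right\rvert\,e^{-\blx\left(\CDIV{V}{\CTM}{P}+\CENT{V}{P}\right)}.
\end{equation*}
Then I would bound the cardinality in two stages: first invoke Lemma \ref{lem:pac} to replace $\lvert\Vsint{\blx}{V}{\hat{V}}{\dmes}\rvert$ by $\lvert\Vshell{V}{\dinp^{\blx}(\dmes)}\rvert\,e^{-\blx\lvert\MI{P}{\hat{V}}-R\rvert^+}$, and then apply the shell-size bound (\ref{eq:pb2}), namely $\lvert\Vshell{V}{\dinp^{\blx}(\dmes)}\rvert\leq e^{\blx\CENT{V}{P}}$. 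Substituting both, the two conditional-entropy contributions $\pm\blx\CENT{V}{P}$ cancel exactly, leaving $e^{-\blx(\CDIV{V}{\CTM}{P}+\lvert\MI{P}{\hat{V}}-R\rvert^+)}$, which is precisely $e^{-\blx\Prand{R}{P}{V}{\hat{V}}}$ by the definition (\ref{eq:prand}).

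Since every ingredient is already in hand, there is no genuine obstacle here; the only point deserving care is the first step, that conditioning on the sent message truly fixes the input sequence so that the single-letter factorization of (\ref{eq:pb1}) applies. This is exactly where the hypothesis ``a code satisfying Lemma \ref{lem:pac}'' is used: that lemma furnishes an ordinary (non-feedback) fixed-composition codebook, so the transmitted block is the deterministic $\dinp^{\blx}(\dmes)$ and the memoryless likelihood depends on $\dout^{\blx}$ only through its conditional type $V$. Once this uniformity is established, the remainder is the mechanical cancellation of the entropy terms described above.
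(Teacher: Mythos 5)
Your proof is correct and takes essentially the same route as the paper: the paper obtains Corollary \ref{cor:cor1} exactly by combining the uniform per-sequence likelihood from equation (\ref{eq:pb1}), the shell-size bound (\ref{eq:pb2}), and the cardinality bound of Lemma \ref{lem:pac}, with the conditional-entropy terms cancelling just as you describe. Your added care about why conditioning on the message fixes the input sequence (the Lemma \ref{lem:pac} codebook is a non-feedback fixed-composition code) is a correct and worthwhile clarification, but not a departure from the paper's argument.
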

\subsection{Coding Algorithm}\label{subsec:coding}
In the first phase, the communication phase, we use a length $\blx_1=\lceil \ts \blx \rceil$ type $P$ fixed-composition code with $\lfloor e^{\blx_1 ( \frac{R}{\ts}-\delta_{\blx_1})} \rfloor $ codewords which satisfies the property described in Lemma \ref{lem:pac}.  At the end of the first phase the receiver makes a tentative decision by choosing the codeword that has the maximum empirical mutual information with the output sequence $\out^{\blx_1}$.  If there is a tie, i.e. if there are  more than one codewords which have the maximum empirical mutual information, the receiver chooses the codeword which has the lowest index. 
\begin{equation}
\label{eq:encrule}
  \tilde{\mes}= \left\{\dmes  :
    \begin{array}{l}
\MI{P}{\Ctype{\out^{\blx_1}}{\dinp^{\blx}(\dmes)}} >     \MI{P}{\Ctype{\out^{\blx_1}}{\dinp^{\blx}(\tilde{\dmes})}} \quad \forall \tilde{\dmes}<\dmes \\
\MI{P}{\Ctype{\out^{\blx_1}}{\dinp^{\blx}(\dmes)}} \geq  \MI{P}{\Ctype{\out^{\blx_1}}{\dinp^{\blx}(\tilde{\dmes})}} \quad \forall \tilde{\dmes}>\dmes      
          \end{array}
\right\}
\end{equation}
In the remaining $(\blx-\blx_1)$ time units, the transmitter sends the accept codeword $\dinp_{\blx_1+1}^{\blx}(a)$ if $\tilde{\mes}=\mes$ and sends the reject codeword $\dinp_{\blx_1+1}^{\blx}(r)$ otherwise.

Note that our encoding scheme uses the feedback link actively  for the encoding neither within the first phase nor within the second phase. It does not even change the codewords it uses for accepting or rejecting the tentative decision depending on the observation in the first phase. Feedback is only used to reveal the tentative decision to the transmitter.

Accept and reject codewords have joint type $\XSID(\tilde{\dinp},\tilde{\tilde{\dinp}})$, i.e. the ratio of the number of time instances in which accept codeword has an $\tilde{\dinp} \in \inpS$ and reject codeword has a $\tilde{\tilde{\dinp}} \in \inpS$ to the length of the codewords, $(\blx-\blx_1)$, is $\XSID(\tilde{\dinp},\tilde{\tilde{\dinp}})$. The joint conditional type of the output sequence in the second phase, $\Ttype{\dout_{\blx_1+1}^{\blx}}$, is the empirical conditional distribution of $\dout_{\blx_1+1}^{\blx}$. We call set of all output sequences $\dout_{\blx_1+1}^{\blx}$ whose joint conditional type is $\XSOD$, the $\XSOD$-shell and denote it by  $\Cshell$.

Like we did in the  Corollary \ref{cor:cor1}, we can upper bound the probability of $\XSOD$-shells. Note that if $\out_{\blx_1+1}^{\blx}  \in \Cshell$ then,
\begin{align*}
\PCX{\out_{\blx_1+1}^{\blx}}{\inp_{\blx_1+1}^{\blx}=\ACW}  &=e^{-(\blx-\blx_1)(\CDIV{\XSOD}{W_a}{\XSID}+ \CENT{\XSOD}{\XSID})} \\
\PCX{\out_{\blx_1+1}^{\blx}}{\inp_{\blx_1+1}^{\blx}=\RCW}  &=e^{-(\blx-\blx_1)(\CDIV{\XSOD}{W_r}{\XSID}+ \CENT{\XSOD}{\XSID})}  
\end{align*}
where $\ACW$ is the accept codeword, $\RCW$ is the reject codeword,  $W_a(\dout|\tilde{x},\tilde{\tilde{x}})= \CT{\tilde{x}}{\dout}
$ and $W_r(\dout|\tilde{x},\tilde{\tilde{x}})= \CT{\tilde{\tilde{x}}}{\dout}$. Noting that $|T_{\XSOD}| \leq  e^{(\blx-\blx_1)\CENT{\XSOD}{\XSID}}$,  we get:
\begin{subequations}
\begin{align}
  \PCX{\Cshell}{{\inp_{\blx_1+1}^{\blx}}=\ACW} &\leq e^{-(\blx-\blx_1)\CDIV{\XSOD}{W_a}{\XSID}} 
\label{eq:boundonda}\\
  \PCX{\Cshell}{{\inp_{\blx_1+1}^{\blx}}=\RCW} &\leq e^{-(\blx-\blx_1)\CDIV{\XSOD}{W_r}{\XSID}}.
\label{eq:boundondr}
\end{align}
\end{subequations}
\subsection{Decoding Rule}\label{subsec:decoding}
For an  encoder like the one in Section \ref{subsec:coding}, a decoder  that  depends  only  on the conditional type of $\out^{\blx_1}$ for different codewords in the  communication  phase, i.e. $\Ctype{\out^{\blx_1}}{\dinp^{\blx_1}(\dmes)}$'s for $\dmes\in \mesS$, the conditional type of the channel output in the control phase, i.e. $\Ttype{\out_{\blx_1+1}^{\blx}}$, and the indices of the codewords can achieve the minimum error probability   for a given erasure probability. However finding that decoder becomes analytically intractable. Instead, we restrict ourselves to the decoders that can be written in terms of pair wise comparisons between messages given $\out^{\blx}$. Furthermore we assume that these pairwise comparisons depend only on the conditional type of $\out^{\blx_1}$ for  the messages compared,  the conditional output type in the control phase and the indices of the messages. Thus if the triplet corresponding to the tentative decision  $(\Ctype{\out^{\blx_1}}{\dinp^{\blx_1}(\tilde{\mes})}, \Ttype{\out_{\blx_1+1}^{\blx}},\tilde{\mes})$  dominates all other triplets of the form  $(\Ctype{\out^{\blx_1}}{\dinp^{\blx_1}(\dmes)}, \Ttype{\out_{\blx_1+1}^{\blx}},\dmes)$ for $\dmes\neq\tilde{\mes}$,  the tentative decision becomes final; else an erasure is declared.\footnote{Note that conditional probability, $\PCX{\out^{\blx}}{\mes=\dmes}$, is only a function of corresponding  $\Ctype{\out^{\blx_1}}{\dinp^{\blx}(\dmes)}$ and $ \Ttype{\out_{\blx_1+1}^{\blx}}$.  Thus all decoding rules, that accepts or rejects the tentative decision, $\tilde{\mes}$, based on a threshold test on likelihood ratios, $\frac{\PCX{\out^{\blx}}{\mes=\tilde{\mes}}}{\PCX{\out^{\blx}}{\mes=\dmes}}$, for $\dmes\neq \tilde{\mes}$ are in this family of decoding rules.} 
\begin{equation}
\label{eq:dec-rule}
\est=  \begin{Bmatrix} 
\tilde{\mes} & \mbox{if } \forall  \dmes \neq  \tilde{\mes}   \qquad
(\Ctype{\out^{\blx_1}}{\tilde{\mes}}, \Ttype{\out_{\blx_1+1}^{\blx}},\tilde{\mes})  \por (\Ctype{\out^{\blx_1}}{\dmes}, \Ttype{\out_{\blx_1+1}^{\blx}},\dmes) \\ 
\Era         & \mbox{if } \exists   \dmes \neq \tilde{\mes}    \mbox{ s.t. }   
(\Ctype{\out^{\blx_1}}{\tilde{\mes}}, \Ttype{\out_{\blx_1+1}^{\blx}},\tilde{\mes})  \npor (\Ctype{\out^{\blx_1}}{\dmes}, \Ttype{\out_{\blx_1+1}^{\blx}},\dmes) 
  \end{Bmatrix}
 \end{equation}
The binary relation $\por$ is such that  if  $(V,\XSOD,\dmes)$ dominates $(\hat{V},\XSOD,\tilde{\dmes})$ then  $(\hat{V},\XSOD,\tilde{\dmes})$ does not dominate $(V,\XSOD,\dmes)$:
\begin{equation*}
 (V,\XSOD,\dmes) \por(\hat{V},\XSOD,\tilde{\dmes}) \Rightarrow (\hat{V},\XSOD,\tilde{\dmes}) \npor (V,\XSOD,\dmes).
\end{equation*}
This property is a necessary and sufficient condition for a binary relation to be a domination rule. Decoder given by (\ref{eq:dec-rule}), however, either accepts or rejects the tentative decision $\tilde{\mes}$ given in (\ref{eq:encrule}). Consequently its domination rule also satisfies following two properties:
\begin{enumerate}[(a)]
\item If the empirical mutual information of the messages in  the communication phase are not  equal, only the message with larger mutual information  can dominate the other one.
\item If the empirical mutual information of the messages  in  the communication phase are      equal, only the message with lower  index               can dominate the other one.
\end{enumerate}
 For any such binary relation there is a corresponding decoder of the form given in equation (\ref{eq:dec-rule}). In our scheme we  either use the trivial domination rule leading to the trivial decoder $\est=\tilde{\mes}$ or the domination rule given in equation (\ref{eq:optord}), both of which satisfies these conditions.
{\small
\begin{equation}
\label{eq:optord}
(V,\XSOD,\dmes) \por(\hat{V},\XSOD,\tilde{\dmes})
\Leftrightarrow
\begin{cases}
\MI{P}{V}>\MI{P}{\hat{V}}     \mbox{~and~}\ts \Prand{\tfrac{R}{\ts}}{P}{V}{\hat{V}}+(1-\ts) \CDIV{\XSOD}{W_a}{\XSID}  \leq  \EXa  &\mbox{if~} \dmes\geq \tilde{\dmes}\\
\MI{P}{V}\geq \MI{P}{\hat{V}} \mbox{~and~}\ts \Prand{\tfrac{R}{\ts}}{P}{V}{\hat{V}}+(1-\ts) \CDIV{\XSOD}{W_a}{\XSID}  \leq  \EXa  &\mbox{if~} \dmes <   \tilde{\dmes}
\end{cases}
\end{equation}}
where $\Prand{R}{P}{V}{\hat{V}}$ is given by the equation (\ref{eq:prand}).

Among the family of decoders we are considering, i.e. among  the decoders that only depend on the pairwise comparisons between conditional types  and indices of the messages compared, the decoder given in (\ref{eq:dec-rule}) and (\ref{eq:optord}) is optimal in terms of error exponent erasure exponent trade off.  Furthermore,  in order to employ this  decoding rule, the receiver needs to determine only the two messages with the highest empirical mutual information in the first phase.  Then the receiver needs to check whether the triplet corresponding to the tentative decision dominates the triplet corresponding to the message with the second highest empirical mutual information. If it does then, for the rule given in (\ref{eq:optord}), it is guaranteed to dominate the rest of the triplets too.

\subsection{Error Analysis}\label{subsec:eran}
Using an encoder like the one described in Section \ref{subsec:coding} and a decoder like the one in  (\ref{eq:dec-rule}) we  achieve the performance given below. If $\EXa\leq \ts \rexp(\tfrac{R}{\ts},P)$ then the domination rule given in equation (\ref{eq:optord}) is used in the decoder; else a trivial domination rule that leads to a erasure-free decoding, $\est=\tilde{\mes}$, is used in the decoder.
\begin{theorem}\label{thm:ach1}
  For any block length $\blx\geq 1$, rate $R$, erasure exponent $\EXa$, time sharing constant $\ts$, communication phase type $P$ and control phase type  $\XSID$, there exists a length $\blx$ block code with feedback such that
\begin{equation*}
  \ln |{\mesS} | \geq e^{\blx(R-\delta_{\blx})} \qquad \Per\leq e^{-\blx(\EXa-\delta_\blx^{'})} \qquad \Pe\leq e^{-\blx(\EXo (R, \EXa, \ts, P, \XSID)-\delta_\blx^{'})}
\end{equation*}
where $\EXo(R,\EXa,\ts,P,\XSID)$ is given by,
\begin{subequations}
\label{eq:thm1}
\begin{align}   
\EXo
&\!=\!\left\{
\begin{array}{cl}
\ts \rexp(\tfrac{R}{\ts},P)  &\mbox{if~} \EXa>\ts \rexp(\tfrac{R}{\ts},P)  \\
\hspace{-1cm}\displaystyle{\min_{\substack{(V,\hat{V},\XSOD):(V,\hat{V},\XSOD) \in \Cset\\ \hspace{1cm}\ts \Prand{\tfrac{R}{\ts}}{P}{V}{\hat{V}}+(1-\ts) \CDIV{\XSOD}{W_a}{\XSID} \leq \EXa}}} \hspace{-1cm}
\ts \Prand{\tfrac{R}{\ts}}{P}{\hat{V}}{V}+(1-\ts) \CDIV{\XSOD}{W_r}{\XSID}
&\mbox{if~}\EXa\leq \ts \rexp(\tfrac{R}{\ts},P)
\end{array}\right\}\\
\Cset
&\!=\!\left\{(V_1,V_2,\XSOD) : \MI{P}{V_1} \geq \MI{P}{V_2} ~\mbox{and}~  \Ymar{P V_1}=\Ymar{P V_2} \right\}\\
\delta_{\blx}^{'}
&\!=\!\tfrac{(|{\inpS}|+1)^2|{\outS}|\log(\blx+1)}{\blx}
\end{align}
\end{subequations}
\end{theorem}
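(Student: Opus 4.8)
The plan is to analyze the two-phase scheme of Sections~\ref{subsec:coding} and \ref{subsec:decoding} by bounding the erasure probability $\Per$ and the error probability $\Pe$ separately, in each case decomposing the decoding event according to the conditional types of the channel output in the two phases and then invoking the packing estimates of Corollary~\ref{cor:cor1} and the control-phase bounds (\ref{eq:boundonda})--(\ref{eq:boundondr}). First I would take the phase-1 code to be the length-$\blx_1$ fixed-composition code of type $P$ guaranteed by Lemma~\ref{lem:pac} at rate $\tfrac{R}{\ts}$; since $\blx_1=\lceil\ts\blx\rceil$, this directly gives the stated lower bound on $\ln|\mesS|$ once the $\blx_1$-versus-$\ts\blx$ discrepancy is absorbed into $\delta_\blx$, and the accept/reject codewords are chosen to realize the prescribed joint type $\XSID$. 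The recurring structural fact is that the two conditional types $V,\hat V$ of the \emph{same} phase-1 output $\out^{\blx_1}$ relative to two distinct type-$P$ codewords must share the output marginal, $\Ymar{PV}=\type{\out^{\blx_1}}=\Ymar{P\hat V}$, and this is exactly what confines every union bound to the set $\Cset$.

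For the erasure probability I would condition on the sent $\mes$ and split $\{\est=\Era\}$ into the correct-tentative case $\tilde\mes=\mes$ and the wrong-tentative case. In the correct case the accept codeword is transmitted, so by (\ref{eq:optord}) an erasure occurs only if some competitor produces a pair $(\hat V,\XSOD)$ with $\ts\Prand{\tfrac{R}{\ts}}{P}{V}{\hat V}+(1-\ts)\CDIV{\XSOD}{W_a}{\XSID}>\EXa$; bounding its phase-1 contribution by Corollary~\ref{cor:cor1} and its phase-2 contribution by (\ref{eq:boundonda}) makes each union-bound term at most $e^{-\blx[\ts\Prand{\tfrac{R}{\ts}}{P}{V}{\hat V}+(1-\ts)\CDIV{\XSOD}{W_a}{\XSID}]}\leq e^{-\blx\EXa}$, and summing over the polynomially many types folds into $\delta'_\blx$. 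The wrong-tentative case is a phase-1 decoding error, controlled at exponent $\ts\rexp(\tfrac{R}{\ts},P)\geq\EXa$ in the nontrivial regime, so it does not spoil $\Per\leq e^{-\blx(\EXa-\delta'_\blx)}$.

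For the error probability I would use that $\est=\dmes\neq\mes$ forces $\tilde\mes=\dmes$, hence the \emph{reject} codeword is transmitted and $\dmes$'s triplet must dominate $\mes$'s. Writing $V_1$ for the larger-information conditional type of the output relative to the wrong codeword $\dinp^{\blx_1}(\dmes)$ and $V_2$ for the type relative to the true codeword, (\ref{eq:optord}) demands $\ts\Prand{\tfrac{R}{\ts}}{P}{V_1}{V_2}+(1-\ts)\CDIV{\XSOD}{W_a}{\XSID}\leq\EXa$, which is precisely the feasibility constraint over $\Cset$ in (\ref{eq:thm1}). The probability of this event, however, is governed by Corollary~\ref{cor:cor1} applied with the true codeword's $V_2$-shell and the wrong codeword's $V_1$-shell, giving $e^{-\blx_1\Prand{\tfrac{R}{\ts}}{P}{V_2}{V_1}}$, together with (\ref{eq:boundondr}) for the reject codeword, giving $e^{-(\blx-\blx_1)\CDIV{\XSOD}{W_r}{\XSID}}$. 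A union bound over all feasible $(V_1,V_2,\XSOD)\in\Cset$, extracting the smallest exponent, reproduces the minimization in (\ref{eq:thm1}), with the type count again absorbed into $\delta'_\blx$.

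The main obstacle, and the point where the two phases must be combined with care, is the asymmetry between the $W_a$ in the \emph{constraint} and the $W_r$ in the \emph{objective}: the decoder applies a single threshold rule calibrated to the accept hypothesis (hence $W_a$), whereas an undetected error can occur only when the reject codeword was actually sent (hence $W_r$), and the roles of $V_1$ and $V_2$ must likewise be swapped between the domination constraint and the packing bound. Verifying that this bookkeeping is consistent with the binary relation (\ref{eq:optord})---in particular that the index-based tie-breaking leaves no decoding event uncovered and that all sub-exponential losses genuinely collapse into the single correction $\delta'_\blx$---is the delicate part; the branch $\EXa>\ts\rexp(\tfrac{R}{\ts},P)$ is then immediate, since the trivial rule $\est=\tilde\mes$ declares no erasures and incurs only the phase-1 error $\ts\rexp(\tfrac{R}{\ts},P)$.
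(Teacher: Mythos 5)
Your proposal is correct and follows essentially the same route as the paper's own proof: the same two-phase construction, the same event decomposition (erasures split into wrong-tentative-decision versus correct-tentative with the $W_a$-threshold exceeded; undetected errors forced through domination of the true message by the tentative one, hence the reject codeword and $W_r$), the same use of Corollary~\ref{cor:cor1} with (\ref{eq:boundonda})--(\ref{eq:boundondr}), and the same absorption of polynomial type counts into $\delta'_\blx$. The asymmetry you flag — $W_a$ with $\Prand{\cdot}{P}{V}{\hat V}$ in the feasibility constraint versus $W_r$ with the arguments swapped in the objective — is exactly the bookkeeping the paper carries out via its $\Eraset$/$\Eroset$ partition of $\Cset$, including the index-based tie-breaking conventions.
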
 
The optimization problem given in (\ref{eq:thm1}) is a convex optimization problem: it is  minimization of a convex function over a convex set. Thus the value of the exponent, $\EXo (R, \EXa, \ts, P, \XSID)$ can numerically be  calculated relatively easily. Furthermore $\EXo (R, \EXa, \ts, P, \XSID)$ can  be written in terms of solutions of lower dimensional optimization problems (see equation (\ref{eq:altoptdef}).  However problem of finding the optimal  $(\ts,P,\XSID)$ triple for a given $(R,\EXa)$ pair is not that easy in general, as we will discuss in more detail in Section \ref{sec:how}. 

Note that for all control phase types $\XSID$ and control phase output types $\XSOD$, $\CDIV{\XSOD}{W_a}{\XSID}\geq 0$, $\CDIV{\XSOD}{W_r}{\XSID}\geq 0$. Using this fact together with the definitions of  $\rexp(R,P)$, $\Prand{R}{P}{\hat{V}}{V}$  and $\EXo(R,\EXa,\ts,P,\XSID)$  given in (\ref{eq:exp}), (\ref{eq:prand}) and (\ref{eq:thm1}) we get:
\begin{align}
\label{eq:trs}
\EXo(R,\EXa,\ts,P,\XSID)&\geq \ts \rexp(\tfrac{R}{\ts},P)  &&\forall (R,\EXa,\ts,P,\XSID) \mbox{~s.t.~} \EXa\leq  \ts \rexp(\tfrac{R}{\ts},P) 
\end{align}
Since we are interested in quantifying the gains of errors-and-erasures decoding over the decoding schemes without erasures   we are ultimately interested only  in the region where $\EXa \leq \ts \rexp(\tfrac{R}{\ts},P)$ holds. However equation (\ref{eq:thm1}) gives us the whole achievable region for the family of codes we are considering.

\begin{proof}
A decoder of the form given in (\ref{eq:dec-rule})  decodes correctly when $\tilde{\mes}=\mes$ and $(\out^{\blx}, \mes) \por (\out^{\blx},\dmes)$ for all\footnote{We use the short hand  $(\out^{\blx}, \mes) \por (\out^{\blx},\dmes)$ for  $(\Ctype{\out^{\blx_1}}{\mes}, \Ttype{\out_{\blx_1+1}^{\blx}},\mes)  \por (\Ctype{\out^{\blx_1}}{\dmes}, \Ttype{\out_{\blx_1+1}^{\blx}},\dmes)$ in the rest of this section.} $\dmes \neq  \mes$.  Thus  an error or an erasure  occur only when the correct message does not dominate all other messages, i.e. when $ \exists \dmes  \neq \mes $ such that $ (\out^{\blx}, \mes) \npor (\out^{\blx},\dmes)$. Consequently, we can write the sum of conditional error and erasure probabilities for a message $\dmes \in  {\mesS}$ as,   
\begin{equation}
\label{eq:ero-erap}
\Pem{\dmes} +\Perm{\dmes} = \PCX{ \left\{\dout^{\blx} : \exists \tilde{\dmes} \neq \dmes \mbox{ s.t.} (\dout^{\blx}, \dmes) \npor (\dout^{\blx},\tilde{\dmes}) \right\}}{\mes=\dmes} 
 \end{equation}
This can happen in two ways, either there is an error in the first phase, i.e. $\tilde{\mes}\neq \dmes$  or first phase tentative decision is correct, i.e. $\tilde{\mes} = \dmes$, but the second phase observation $\YB$ leads to an erasure i.e. $\est=\Era$.  For a decoder using a domination rule satisfying constraints described in Section \ref{subsec:decoding},
\begin{align*}
\Pem{\dmes}+\Perm{\dmes} 
&\leq \sum_{V} \sum_{\substack{\hat{V}:\MI{P}{\hat{V}}\geq  \MI{P}{V}}} \sum_{\substack{\YA \in \Vsint{\blx_1}{V}{\hat{V}}{\dmes}}} \PCX{\YA}{\dmes}\\
& \qquad +\sum_{V}\sum_{\hat{V}: \MI{P}{\hat{V}} \leq \MI{P}{V}}\sum_{\substack{\YA \in \Vsint{\blx_1}{V}{\hat{V}}{\dmes}}} \PCX{\YA}{\dmes}
\sum_{\XSOD: (V,\XSOD,\dmes) \npor  (\hat{V},\XSOD,\dmes+1)  }\sum_{\YB \in \Cshell} \PCX{\YB}{\ACW}.
\end{align*}
where\footnote{Note that for the case when $\dmes=|{\mesS}|$, we need to replace $(V,\XSOD,\dmes) \npor  (\hat{V},\XSOD,\dmes+1)$ with $(V,\XSOD,\dmes-1) \npor  (\hat{V},\XSOD,\dmes)$.} $\Vsint{\blx_1}{V}{\hat{V}}{\dmes}$ is the intersection of $V$-shell of message $\dmes \in \mesS$ with the $\hat{V}$-shells of other messages, defined in equation (\ref{eq:Vsintdef}). As a result of Corollary \ref{cor:cor1} we have
\begin{align*}
\sum_{\YA \in \Vsint{\blx_1}{V}{\hat{V}}{\dmes}} \PCX{\YA}{\dmes}
&= \PCX{\Vsint{\blx_1}{V}{\hat{V}}{\dmes}}{\mes=\dmes}  \\ 
&\leq e^{-\blx_1 \Prand{\frac{R}{\ts}}{P}{V}{\hat{V}}}.   
\end{align*}
Furthermore because of equation (\ref{eq:boundonda}) 
\begin{align*}
\sum_{\YB \in \Cshell} \PCX{\YB}{\ACW}     
&=\PCX{\Cshell}{{\inp_{\blx_1+1}^{\blx}}=\ACW}\\ 
&\leq e^{-(\blx-\blx_1)\CDIV{\XSOD}{W_a}{\XSID}}. 
\end{align*}
In addition the number of different non-empty $V$-shells  in the communication phase is less than $(\blx_1+1)^{|{\inpS}||{\outS}|}$ and the number of non-empty $\XSOD$-shells in the control phase is less than $(\blx-\blx_1+1)^{|{\inpS}|^2|{\outS}|}$. We denote the set of  $(V,\hat{V},\XSOD)$ triples that corresponds to erasures with  a correct tentative decision  by $\Eraset$:
\begin{equation}
\label{eq:deferas}
\Eraset \triangleq \left\{(V,\hat{V},\XSOD): \MI{P}{V} \geq  \MI{P}{\hat{V}} ~\mbox{and}~ \Ymar{P V}=\Ymar{P \hat{V}} ~\mbox{and}~ (V,\XSOD,\dmes)\npor (\hat{V},\XSOD,\dmes+1) \right\}.
\end{equation}
In the above definition $\dmes$ is a dummy variable and $\Eraset$ is the same set for all $\dmes\in \mesS$. Thus using (\ref{eq:deferas}) we get
\begin{align*}
\Pem{\dmes}+\Perm{\dmes}
&\leq (\blx_1+1)^{2|{\inpS}||{\outS}|}\max_{V,\hat{V}: \MI{P}{V}\leq \MI{P}{\hat{V}}} e^{-\blx_1 \Prand{R/\ts}{P}{V}{\hat{V}}}\\
&+(\blx_1+1)^{2|{\inpS}||{\outS}|}(\blx-\blx_1+1)^{|{\inpS}|^2|{\outS}|}
\max_{(V,\hat{V},\XSOD) \in \Eraset}
e^{-\blx_1 (\Prand{R/\ts}{P}{V}{\hat{V}})-(\blx-\blx_1) \CDIV{\XSOD}{W_a}{\XSID}}.
\end{align*}
Using the definition of $\rexp(\tfrac{R}{\ts},P)$ given in (\ref{eq:exp}) we get
\begin{equation}
\label{eq:erasurebound}
\Pem{\dmes}+\Perm{\dmes}
\leq e^{\blx\delta_{\blx}^{'}} \max\left \{ e^{-\blx \ts \rexp(R/\ts,P)}, e^{-\blx \min_{(V,\hat{V},\XSOD) \in \Eraset } \ts \Prand{R/\ts}{P}{V}{\hat{V}}+(1-\ts) \CDIV{\XSOD}{W_a}{\XSID}}\right\}. 
 \end{equation}
On the other hand an error occurs only when an incorrect message dominates all other messages, i.e.  when $\exists \tilde{\dmes} \neq \dmes $ such that  $(\out^{\blx}, \tilde{\dmes}) \por (\out^{\blx},\tilde{\tilde{\dmes}})$ for all $\tilde{\dmes} \neq \tilde{\tilde{\dmes}}$:
\begin{equation*}
\Pem{\dmes} = \PCX{ \left\{\dout^{\blx}: \exists \tilde{\dmes} \neq \dmes \mbox{~s.t.~} (\dout^{\blx}, \tilde{\dmes}) \por (\dout^{\blx},\tilde{\tilde{\dmes}}) \quad  \forall \tilde{\tilde{\dmes}} \neq \tilde{\dmes} \right\}}{\mes=\dmes}.
 \end{equation*}
Note that when a $\tilde{\dmes} \in {\mesS}$ dominates all other $\tilde{\tilde{\dmes}}\neq \tilde{\dmes}$, it also dominates $\dmes$, i.e.
\begin{equation*}
\left\{\dout^{\blx}: \exists \tilde{\dmes} \neq \dmes \mbox{ s.t.} (\dout^{\blx}, \tilde{\dmes}) \por (\dout^{\blx},\tilde{\tilde{\dmes}}) \quad  \forall \tilde{\tilde{\dmes}} \neq \tilde{\dmes} \right\}  \subset \left\{\dout^{\blx}: \exists \tilde{\dmes} \neq \dmes \mbox{ s.t.} (\dout^{\blx}, \tilde{\dmes}) \por (\dout^{\blx},\dmes) \right\}.
\end{equation*}
Thus,
\begin{align}
\Pem{\dmes} 
&\leq \PCX{ \left\{\dout^{\blx}: \exists \tilde{\dmes} \neq \dmes \mbox{ s.t.} (\dout^{\blx}, \tilde{\dmes}) \por (\dout^{\blx},\dmes) \right\}}{\mes=\dmes} \notag\\
&= \sum_{V} \sum_{\hat{V}:\MI{P}{\hat{V}} \geq   \MI{P}{V}}
\sum_{\YA \in \Vsint{\blx_1}{V}{\hat{V}}{\dmes}} \PCX{\YA}{\mes=\dmes}
\sum_{\XSOD: (\hat{V},\XSOD,\dmes-1) \por  (V,\XSOD,\dmes)  } \sum_{\YB \in \Cshell} \PCX{\YB}{\RCW}.
\label{eq:errorboundbir}
\end{align}
The tentative decision is not equal to $\dmes$ only if there is a message with a strictly higher empirical mutual information or if there is a messages which has equal mutual information but smaller index. This is the reason why we sum over $ (\hat{V},\XSOD,\dmes-1) \por  (V,\XSOD,\dmes)$. Using the inequality (\ref{eq:boundondr}) in the inner most two sums and  then applying inequality (\ref{eq:boundonf})  we get,
 \begin{align}
\Pem{\dmes} 
&\leq (\blx+1)^{(|{\inpS}|^2+2|{\inpS}|)|{\outS}|}  \max_{
(V,\hat{V},\XSOD): \substack{\MI{P}{\hat{V}} \geq  \MI{P}{V}\\  (\hat{V},\XSOD,\dmes-1) \por  (V,\XSOD,\dmes) }
}
e^{-\blx (\ts \Prand{R/\ts}{P}{V}{\hat{V}}+(1-\ts) \CDIV{\XSOD}{W_r}{\XSID})}
\notag\\
&\leq e^{\blx \delta_{\blx}^{'}}e^{-\blx \min_{(\hat{V},V,\XSOD) \in \Eroset}  (\ts \Prand{R/\ts}{P}{V}{\hat{V}}+(1-\ts) \CDIV{\XSOD}{W_r}{\XSID})}
\notag\\
&= e^{\blx \delta_{\blx}^{'}}e^{-\blx \min_{(V,\hat{V},\XSOD) \in \Eroset}  (\ts \Prand{R/\ts}{P}{\hat{V}}{{V}}+(1-\ts) \CDIV{\XSOD}{W_r}{\XSID})}
\label{eq:errorbound}
 \end{align}
where $\Eroset$ is the complement of $\Eraset$ in $\Cset$ given by
\begin{equation}
\label{eq:deferos}
  \Eroset \triangleq \left\{(V,\hat{V},\XSOD) : \MI{P}{V}\geq \MI{P}{\hat{V}} ~\mbox{and}~  \Ymar{P V}=\Ymar{P\hat{ V}} ~\mbox{and}~ (V,\XSOD,\dmes)\por (\hat{V},\XSOD,\dmes+1) \right\}.
\end{equation}
Note that $\dmes$ in the definition of $\Eroset$ is also a dummy variable. The domination rule $\por$ divides  the set $\Cset$  into two subsets: the erasure subset $\Eraset$ and the error subset $\Eroset$.  Choosing domination rule  is equivalent to choosing the  $\Eroset$. Depending on the value of $\ts \rexp(\tfrac{R}{\ts},P)$ and $\EXa$ we chose different $\Eroset$'s as follows:
\begin{enumerate}[(i)]
\item$\EXa\!>\!\ts\!\rexp(\tfrac{R}{\ts},P)$: $\Eroset=\Cset$.  Then $\Eraset=\emptyset$ and Theorem \ref{thm:ach1} follows from equation (\ref{eq:erasurebound}).
\item$\EXa\!\leq\!\ts\!\rexp(\tfrac{R}{\ts},P)$: $\Eroset=\left\{(V,\hat{V},\XSOD):\begin{array}{c}
  \MI{P}{V} \geq  \MI{P}{\hat{V}}\!\mbox{~and~}\!\Ymar{P V}=\Ymar{P \hat{V}}\!\mbox{~and}\\ \ts \Prand{\tfrac{R}{\ts}}{P}{V}{\hat{V}}+(1-\ts) \CDIV{\XSOD}{W_a}{\XSID} \leq  \EXa  \end{array}\right\}$. Then all the $(V,\hat{V},\XSOD)$ triples satisfying $\ts\!\Prand{\tfrac{R}{\ts}}{P}{V}{\hat{V}}\!+\!(1\!-\!\ts) \CDIV{\XSOD}{W_a}{\XSID} \leq  \EXa$ are in the the error subset. Thus as a result of equation (\ref{eq:erasurebound}) erasure probability is bounded as $\Per\leq e^{-\blx(\EXa-\delta_\blx^{'})}$ and  Theorem \ref{thm:ach1} follows from equation (\ref{eq:errorbound}). 
\end{enumerate}
\end{proof}

\subsection{Lower Bound to $\TEX(R,\EXa)$:}\label{sec:how}
In this section we use Theorem \ref{thm:ach1} to derive a lower bound to the optimal error exponent $\TEX(R,\EXa)$. We do that by optimizing the achievable performance $\EXo(R,\EXa,\ts,P,\XSID)$ over
 $\ts$, $P$  and $\XSID$.  

\subsubsection{High Erasure Exponent Region (i.e. $\EXa> \rexp(R)$)}
As a result of (\ref{eq:thm1}), $\forall R\geq 0$ and $\forall \EXa>\rexp(R)$
\begin{subequations}
\begin{align}
\EXo(R,\EXa,\ts,P,\XSID)
&= \ts \rexp(\tfrac{R}{\ts},P)\leq \rexp(R)
&&\forall \ts \in [0,1],&& \forall P, &&\forall \XSID\\
\EXo(R,\EXa,\tilde{\ts},\tilde{P},\XSID)
&=\rexp(R)
&&\tilde{\ts}=1, &&\tilde{P}=\arg\max_{P} \rexp(R,P), &&
\forall\XSID.
\end{align}
\end{subequations}
Thus for all $(R,\EXa)$ pairs such that  $\EXa> \rexp(R)$: optimal time sharing constant is 1, optimal input distribution is the optimal input distribution for random coding exponent at rate $R$, we use maximum mutual information decoding and never declare erasures. Furthermore since $\ts=1$ we have only a single phase in our scheme. 
\begin{equation}
\label{eq:pc00}
  \EXo(R,\EXa)=\EXo (R,\EXa,1, P_{r(R)}, \XSID)=\rexp(R)  \qquad  \forall R\geq 0 \qquad  \forall \EXa> \rexp(R)
\end{equation}
where $P_{r(R)}$ satisfies $\rexp(R,P_{r(R)})=\rexp(R)$ and $\XSID$ can be any control phase type. Evidently benefits of errors-and-erasures decoding is not observed in this region.

\subsubsection{Low Erasure Exponent Region (i.e. $\EXa \leq \rexp(R)$)}
We observe and quantify the benefits of errors-and-erasures decoding for $(R,\EXa)$ pairs such that $\EXa\leq \rexp(R)$.  Since $\rexp(R)$ is a non-negative non-increasing and convex function of $R$, we have  
\begin{equation*}
\ts \in [\ts^*(R, \EXa),1 ] \Leftrightarrow   \EXa \leq \ts \rexp(\tfrac{R}{\ts})  \qquad \forall R\geq 0\qquad \forall 0<\EXa \leq \rexp(R)
\end{equation*}
where $\ts^*(R, \EXa)$ is the unique solution of the equation $\ts E_{r} (\tfrac{R}{\ts}) = \EXa$.

 For the case $\EXa=0$, however, $\ts E_{r} (\tfrac{R}{\ts}) =0$ has multiple solutions and Theorem \ref{thm:ach1} holds but resulting  error exponent, $\EXo(R,0,\ts,P,\XSID)$, does not correspond to the error exponent of a reliable sequence. Convention  introduced below in equation (\ref{eq:tsop})  addresses both issues at once, by choosing the minimum of those solutions as $\ts^*(R,0)$.  In addition by this convention $\ts^*(R,\EXa)$ is also continuous at $\EXa=0$: $\lim_{\EXa \rightarrow 0} \ts^*(R, \EXa)=\ts^*(R, 0)$.
\begin{equation}
  \label{eq:tsop}
\ts^*(R, \EXa) \triangleq
\begin{cases}
%1 & \EXa\in(\rexp(R),\infty)\\
\tfrac{R}{g^{-1}(\EXa/R)}  
& \EXa\in (0,\rexp(R)]\\
R/\CX
& \EXa=0
\end{cases}
\end{equation}
where $g^{-1}(\cdot)$ is the inverse of the function $g(r)=\tfrac{\rexp(r)}{r}$.

As a result equations (\ref{eq:thm1}) and (\ref{eq:tsop}), $\forall R\geq 0$ and $\forall 0<\EXa\leq \rexp(R)$ we have 
\begin{subequations}
\label{eq:pc0}
\begin{align}
\EXo(R,\EXa,\ts,P,\XSID)
&=\ts \rexp(\tfrac{R}{\ts},P)\leq \rexp(R)  
&&\forall \ts \in [0,\ts^*(R, \EXa)), &&\forall P, &&\forall \XSID \\
\EXo(R,\EXa,\tilde{\ts},\tilde{P},\XSID)
&=\rexp(R)
&&\tilde{\ts}=1, &&\tilde{P}=\arg\max_{P}\rexp(R,P), &&\forall \XSID.
\end{align}
\end{subequations}
Thus for all  $(R,\EXa)$ pairs such that  $\EXa\leq \rexp(R)$ optimal time sharing constant is in the interval $[\ts^*(R, \EXa),1]$. 

For an $(R,\EXa,\ts)$ triple such that $R\geq 0$, $\EXa\leq \rexp(R)$ and $\ts \in [\ts^*(R, \EXa),1]$   let $\PSet{R}{\EXa}{\ts}$ be
\begin{equation}
\label{eq:psetdef}
\PSet{R}{\EXa}{\ts} \triangleq \{ P :  \ts \rexp(\tfrac{R}{\ts},P)\geq \EXa~,~\MI{P}{W}\geq \tfrac{R}{\ts} \}. 
\end{equation}
The constraint on mutual information is there to ensure that $\EXo(R,0,\ts,P,\XSID)$'s are corresponding to error exponent of reliable sequences. The set $\PSet{R}{\EXa}{\ts}$ is convex because  $\rexp(R,P)$ and $\MI{P}{W}$ are  concave in $P$.

Note that  $\forall R \geq 0$ and $\forall \EXa\in (0,  \rexp(R)]$,
\begin{subequations}
\label{eq:pc1}
  \begin{align}
\EXo(R,\EXa,\ts,P,\XSID)
&=\ts \rexp(\tfrac{R}{\ts},P) 
&&\forall \ts \in [\ts^*(R, \EXa),1], &&\forall P \notin \PSet{R}{\EXa}{\ts}, &&\forall \XSID\\
\EXo(R,\EXa,\ts,\tilde{P},\XSID)
&\geq \ts \rexp(\tfrac{R}{\ts})
&&\forall \ts \in [\ts^*(R, \EXa),1],  &&\tilde{P}=\arg \max_{P} \rexp(\tfrac{R}{\ts},P), &&\forall\XSID.
 \end{align}
\end{subequations}
As a result of (\ref{eq:pc1})  we can restrict the optimization over $P$ to $\PSet{R}{\EXa}{\ts}$ when $ R \geq 0$ and $\EXa\in (0,  \rexp(R)]$. For $\EXa=0$ case  if we require the expression $\EXo(R,0,\ts,P,\XSID)$ to correspond to the error exponent of a reliable sequence, we get the restriction given in equation (\ref{eq:pc1}). Thus  using  Theorem \ref{thm:ach1} we conclude that   $\EXo(R,\EXa)$ given below is an achievable error exponent at rate $R$  and erasure exponent $\EXa$.
\begin{equation}
\label{eq:pc3}
  \EXo(R,\EXa)=\max_{\ts\in [\ts^*(R,\EXa) ,1]} \max_{P \in \PSet{R}{\EXa}{\ts}} \max_{\XSID}   \EXo (R,\EXa, \ts, P, \XSID) \qquad  \forall R\geq 0 \qquad  \forall \EXa\leq \rexp(R)
\end{equation}
where $\ts^*(R,\EXa)$, $\PSet{R}{\EXa}{\ts}$ and $\EXo (R,\EXa, \ts, P, \XSID)$ are given in equations (\ref{eq:tsop}), (\ref{eq:psetdef}) and (\ref{eq:thm1}). 

Note that  unlike $\EXo(R,\EXa,\ts,P,\XSID)$ itself, $\EXo(R,\EXa)$ as defined in (\ref{eq:pc3}) corresponds to error exponent of reliable code sequences even at $\EXa=0$.

If the  maximizing $P$ for the inner maximization in equation (\ref{eq:pc3}) is same for all $\ts \in [\ts^*(R,\EXa),1]$, the  optimal value of $\ts$ is $\ts^{*}(R,\EXa)$. In order to see that, we first observe that any  fixed $(R,\EXa, P,\XSID)$ such that $\rexp(R,P)\geq \EXa$, function $\EXo (R,\EXa, \ts, P, \XSID)$  is convex in $\ts$ for all $\ts\in [\ts^*(R,\EXa,P),1]$ where $\ts^*(R,\EXa,P)$ is  the unique solution of the equation\footnote{Evidently we need to make a minor modification for $\EXa=0$ case as before to ensure that we consider only the $\EXor (R,\EXa, \ts, P, \XSID)$'s that correspond to the reliable sequences: $\ts^*(R,0,P)=\tfrac{R}{\MI{P}{W}}$.} $\ts \rexp(\tfrac{R}{\ts},P)= \EXa$ as it is shown Lemma \ref{lem:conveityints} in Appendix \ref{app:convexityints}. Since the maximization preserves the convexity, $\max_{\XSID}\EXo (R,\EXa, \ts, P,\XSID)$ is also  convex in $\ts$ for all $\ts\in [\ts^*(R,\EXa,P),1]$.  Thus for any  $(R,\EXa, P)$ triple,   $\max_{\XSID}\EXo (R,\EXa, \ts, P,\XSID)$, takes its maximum value either at the minimum possible value of $\ts$, i.e. $\ts^*(R,\EXa,P)=\ts^*(R,\EXa)$,  or at the maximum possible value of $\ts$, i.e. $1$. It is shown in  Appendix \ref{app:specialcase} $\max_{\XSID}\EXo (R,\EXa, \ts, P,\XSID)$ takes its  maximum value  at  $\ts=\ts^*(R,\EXa)$. 

Furthermore if the maximizing $P$ is not only the same for all $\ts \in [\ts^*(R,\EXa),1]$ for a given  $(R,\EXa)$ pair but also for all   $(R,\EXa)$  pairs such that $\EXa \leq \rexp(R)$ then we can  find the optimal $\EXo (R,\EXa)$ by simply maximizing over $\XSID$'s. In symmetric channels,  for example, uniform distribution is the optimal distribution for all $(R,\EXa)$ pairs. Thus
\begin{equation}
\label{eq:symach1}
\EXo(R,\EXa)=\left\{
\begin{array}{cl}
\EXo (R,\EXa, 1, P^*, \XSID)  
&\mbox{if~} \EXa> \rexp(R,P^*)  \\
\max_{\XSID}   \EXo (R,\EXa, \ts^*(R,\EXa), P^*, \XSID)
&\mbox{if~} \EXa\leq \rexp(R,P^*)  
\end{array}
\right\}
\end{equation}
where $P^*$ is the uniform distribution. 

\subsection{Alternative Expression for Exponent:}
The  minimization given in (\ref{eq:thm1}) for $\EXo(R,\EXa,\ts,P,\XSID)$ is over transition probability matrices and control phase output types. In order to get a better grasp of the resulting expression, we simplify the  analytical expression in this section. We do that by expressing the minimization in (\ref{eq:thm1}) in terms of solutions of lower dimensional optimization problems.

Let $\cesp(R,P,Q)$ be the minimum Kullback-Leibler divergence under $P$ with respect to $W$ among the transition probability matrices whose mutual information under $P$ is less than $R$ and whose output distribution under $P$ is $Q$. It is shown in Appendix \ref{app:convexityints} that for a given $P$, $\cesp(R,P,Q)$ is convex in $(R,Q)$ pair. Evidently for a given $(P,Q)$ pair $\cesp(R,P,Q)$ is a non-increasing in $R$. Thus for a given $(P,Q)$ pair $\cesp(R,P,Q)$ is strictly decreasing on a closed interval and is an extended real valued function  of the form:
\begin{subequations}
\label{eq:cesp} 
\begin{align}
\cesp(R,P,Q) &=
\left\{
\begin{array}{cl}
\infty  
&R<\cespal{P}{Q}\\
\min_{V:\substack{ \MI{P}{V}\leq R \\ \Ymar{P V}=Q}} \CDIV{V}{W}{P}
&R \in [\cespal{P}{Q},\cespah{P}{Q} ]\\
\min_{V:\Ymar{P V}=Q} \CDIV{V}{W}{P}
&R >\cespah{P}{Q}
\end{array}\right\}\\
\cespal{P}{Q}
&=\min\nolimits_{V:\substack{ P V\gg P W \\ \Ymar{P V}=Q}}  \MI{P}{V}\\
\cespah{P}{Q}
&=\min\nolimits_{R} \left\{R:\min\nolimits_{V:\substack{ \MI{P}{V}\leq R \\ \Ymar{P V}=Q}} \CDIV{V}{W}{P}
=\min\nolimits_{V:\substack{\Ymar{P V}=Q}} \CDIV{V}{W}{P} \right\}
\end{align}
\end{subequations}
where  $P V\gg PW$ iff for all $(\dinp,\dout)$ pairs such that $P(\dinp)W(\dout|\dinp)$ is zero, $P(\dinp)V(\dout|\dinp)$ is also zero.

Let $\bhexp{T}{\XSID}$ be the minimum Kullback-Leibler divergence with respect to $W_r$ under $\XSID$, among the $\XSOD$'s whose Kullback-Leibler divergence  with respect to $W_a$ under $\XSID$ is less than or equal to $T$. 
\begin{equation}
\label{eq:bhtexpd}
\bhexp{T}{\XSID}\triangleq \min_{\XSOD: \CDIV{\XSOD}{W_a}{\XSID}\leq T  } \CDIV{\XSOD}{W_r}{\XSID} 
 \end{equation}
For a given $\XSID$, $\bhexp{T}{\XSID}$ is  non-increasing and convex in $T$, thus $\bhexp{T}{\XSID}$ is  strictly decreasing in $T$ on a closed interval.  An equivalent expressions for  $\bhexp{T}{\XSID}$ and boundaries of this closed interval is derived in Appendix \ref{app:E2=F2},
\begin{equation}
\label{eq:bhtexp}
\bhexp{T}{\XSID}=
\left\{\begin{array}{cll}
\infty
& \mbox{if~} &T< \CDIV{\Xq{0}}{W_a}{\XSID}\\
\CDIV{\Xq{s}}{W_r}{\XSID}
& \mbox{if~} &T=\CDIV{\Xq{s}}{W_a}{\XSID} \quad \mbox{for~some~}  s \in [0,1]\\
\CDIV{\Xq{1}}{W_r}{\XSID}
& \mbox{if~} &T> \CDIV{\Xq{1}}{W_a}{\XSID} 
\end{array}\right\}
\end{equation} 
where
\begin{equation*}
\XQ{s}{y}{x_1,x_2}=
\left\{
\begin{array}{lcl}
\tfrac{\IND{W(y|x_2)>0}}{\sum_{\tilde{y}:W(\tilde{y}|x_2)>0}W(\tilde{y}|x_1)} W(y|x_1)
&\mbox{ if }&s=0\\
\tfrac{W(y|x_1)^{1-s} W(y|x_2)^s }{\sum_{\tilde{y}}W(\tilde{y}|x_1)^{1-s} W(\tilde{y}|x_2)^s}
&\mbox{ if }&s\in (0,1)\\
\tfrac{\IND{W(y|x_1)>0}}{\sum_{\tilde{y}:W(\tilde{y}|x_1)>0} W(\tilde{y}|x_2)} W(y|x_2)
&\mbox{ if} &s=1\\
\end{array}\right\}
\end{equation*}

For a $(R,\EXa,\ts,P,\XSID)$ such that $\EXa\leq \ts\rexp(\tfrac{R}{\ts},P)$, using the definition  of $\EXo(R,\EXa,\ts,P,\XSID)$ in (\ref{eq:thm1}) together with the equations (\ref{eq:prand}), (\ref{eq:cesp}) and (\ref{eq:bhtexp}) we get 
\begin{align*}
\!\EXo\!(R,\EXa,\ts,P,\XSID)
&=\displaystyle{\min_{\substack{Q,T,R_1,R_2:\\R_1\geq R_2\geq 0,~T\geq 0 \\ \ts \cesp(\frac{R_1}{\ts},P,Q)+ |R_2- R|^{+}+T \leq \EXa}}}
\ts\cesp(\tfrac{R_2}{\ts},P,Q)+ |R_1-R|^{+} +(1-\ts) \bhexp{\tfrac{T}{1-\ts}}{\XSID}
\end{align*} 
For any $(R,\EXa,\ts,P,\XSID)$  above minimum is also achieved at a $(Q,R_1,R_2,T)$ such that  $R_1\geq R_2\geq R$. In order to see this take any minimizing  $(Q^*,R_1^*,R_2^*,T^*)$, then there are three possibilities:
\begin{enumerate}[(a)]
\item $R_1^*\geq R_2^*\geq R$ claim holds trivially.
\item $R_1^*\geq R> R_2^*$, since  $\cesp(\tfrac{R_2}{\ts},P,Q)$ is non-increasing function $(Q^*,R_1^*,R,T^*)$, is also minimizing, thus claim holds.
\item $R>R_1^*> R_2^*$, since  $\cesp(\tfrac{R}{\ts},P,Q)$ is non-increasing function $(Q^*,R,R,T^*)$, is also minimizing, thus claim holds.
\end{enumerate}
Thus we obtain the following expression for $\EXo(R,\EXa,\ts,P,\XSID)$,
\begin{equation}
\label{eq:altoptdef}
\!\EXo\!(R,\EXa,\ts,P,\XSID)
\!=\!\left\{
\begin{array}{cl}
\ts \rexp(\tfrac{R}{\ts},P)
&\mbox{if~}\EXa\!>\!\ts\!\rexp(\tfrac{R}{\ts},P)\\
\hspace{-2.4cm}
\displaystyle{\min_{\substack{\hspace{0.2cm}Q,T,R_1,R_2:\\\hspace{0.9cm}R_1\geq R_2\geq R,~T\geq 0 \\\hspace{2.4cm} \ts \cesp(\frac{R_1}{\ts},P,Q)+ R_2- R+T \leq \EXa}}}
\hspace{-1.6cm}
\ts\cesp(\tfrac{R_2}{\ts},P,Q)+ R_1-R +(1-\ts) \bhexp{\tfrac{T}{1-\ts}}{\XSID}
&\mbox{if~}\EXa\!\leq\!\ts\!\rexp(\tfrac{R}{\ts},P)
\end{array}\right\}
\end{equation}
Equation (\ref{eq:altoptdef})  is simplified further for symmetric channels. For symmetric channels,
\begin{equation}
  \label{eq:symachs}
\spexp(R)= \cesp(R,P^*,Q^*)=\min_{Q} \cesp(R,P^*,Q)
\end{equation}
where $P^*$ is the uniform input distribution and $Q^*$ is the corresponding output distribution under $W$. 

Using alternative expression for $\EXo(R,\EXa,\ts,P,\XSID)$  given in  (\ref{eq:altoptdef}) together with equations (\ref{eq:symach1}) and (\ref{eq:symachs}) for symmetric channels we get,
\begin{equation}
\label{eq:symach2}
\EXo(R,\EXa)=\left\{
\begin{array}{cl}
\rexp(R)
&\mbox{if~} \EXa> \rexp(R)  \\
 \displaystyle{\max_{\XSID}
\hspace{-.5cm}
\min_{ 
\substack{R'',R',T:\\
R''\geq R'\geq R~T\geq 0\\
\ts^* \spexp(\tfrac{R''}{\ts^*})+ R'-R+T\leq \EXa}}}
\hspace{-.5cm}
\ts^* \spexp\left(\tfrac{R'}{\ts^*}\right)+ R''-R+(1-\ts^*)\bhexp{\tfrac{T}{1-\ts^*}}{\XSID}
&\mbox{if~} \EXa\leq \rexp(R)  
\end{array}
\right\}
\end{equation}
where $\ts^*(R,\EXa)$ is given in equation (\ref{eq:tsop}).

Although (\ref{eq:symachs}) does not hold in general  using definition of  $\cesp(R,P,Q)$ and  $\spexp(R,P)$ we can assert that
\begin{equation}
  \label{eq:sxtr}
 \cesp(R,P,Q)\geq\min_{\tilde{Q}} \cesp(R,P,\tilde{Q})=\spexp(R,P)
\end{equation}
Note that (\ref{eq:sxtr}) can be used to bound the minimized expression  in (\ref{eq:altoptdef}) from below. In addition recall that  if the set that a minimization is done over is enlarged resulting minimum can not increase. We can use (\ref{eq:altoptdef}) also to enlarge the set that minimization is done over in  (\ref{eq:sxtr}). Thus we get  an exponent $\EXor(R,\EXa,\ts,P,\XSID)$ which is smaller than or equal to $\EXo(R,\EXa,\ts,P,\XSID)$ in all channels and for all $\EXor(R,\EXa,\ts,P,\XSID)$'s:
\begin{equation}
\label{eq:def2}
\EXor(R,\EXa,\ts,P,\XSID)
\!=\!\left\{
\begin{array}{cl}
\ts \rexp(\tfrac{R}{\ts},P)  &\mbox{if~} \EXa>\ts \rexp(\tfrac{R}{\ts},P) \\
\hspace{-2.4cm}
\displaystyle{\min_{ 
\substack{R'',R',T:\\
\hspace{0.8cm}R''\geq R'\geq R~T\geq 0\\
\hspace{2.4cm}\ts \spexp(\tfrac{R''}{\ts},P)+ R'-R+T\leq \EXa}}}
\hspace{-2cm}
\ts \spexp\left(\tfrac{R'}{\ts},P\right)+ R''-R+(1-\ts)\bhexp{\tfrac{T}{1-\ts}}{\XSID}
&\mbox{if~} \EXa \leq \ts \rexp(\tfrac{R}{\ts},P)
\end{array}\right\}
\end{equation}
After an investigation very similar to the one we have already done for $\EXo (R,\EXa, \ts, P, \XSID)$ in Section \ref{sec:how}, we obtain the below expression for the optimal error exponent for reliable sequences emerging from (\ref{eq:def2}):
\begin{equation}
\label{eq:optexp3}
\EXor(R,\EXa)=
\left\{\begin{array}{c l l}
\rexp(R) 
&\forall R\geq 0 &  \forall \EXa> \rexp(R)\\
\displaystyle{\max_{\ts\in [\ts^*(R,\EXa) ,1]} \max_{P \in \PSet{R}{\EXa}{\ts}} \max_{\XSID}   \EXor (R,\EXa, \ts, P, \XSID) }
&\forall R\geq 0 &  \forall \EXa\leq \rexp(R)
\end{array}
\right\}
\end{equation}
where $\ts^*(R,\EXa)$, $\PSet{R}{\EXa}{\ts}$ and $\EXor (R,\EXa, \ts, P, \XSID)$ are given in equations (\ref{eq:tsop}), (\ref{eq:psetdef}) and (\ref{eq:def2}), respectively.

\subsection{Special Cases}
\subsubsection{Zero Erasure Exponent Case, $\TEX(R,0)$}
 Using a simple repetition-at-erasures scheme, fixed length errors-and-erasures codes, can be converted into variable length block codes, with the same error exponent. Thus the error exponents of variable length block codes given by Burnashev  in \cite{bur}  is an upper bound to the error exponent of fixed length block codes with erasures:
 \begin{equation*}
  \TEX(R,\EXa) \leq  \left( 1-\tfrac{R}{\CX} \right) \DX \qquad \forall R\geq 0, \EXa\geq 0
 \end{equation*}
where $\DX= \max_{\dinp,\tilde{\dinp}} \sum_{\dout} \CT{\dinp}{\dout} \log \frac{\CT{\dinp}{\dout}}{\CT{\tilde{\dinp}}{\dout}}$.

We show below that, $\EXor (R,0)\geq ( 1-\tfrac{R}{\CX}) \DX$. This implies that our coding scheme is optimal for $\EXa=0$  for all rates i.e. $\EXor (R,0)=\TEX(R,0)=( 1-\tfrac{R}{\CX})\DX$.

 Recall that  for all $R$ less than capacity  $\ts^*(R,0)=\tfrac{R}{\CX}$. Furthermore for any $\ts\geq \tfrac{R}{\CX}$ 
\begin{equation*}
\PSet{R}{0}{\ts}=\{ P :   \MI{P}{W} \geq \tfrac{R}{\ts} \} 
\end{equation*}
Thus for any $(R,0,\ts,P)$ such that $P\in\PSet{R}{0}{\ts}$,  $R''\geq R'\geq R$, $T\geq 0$ and $\ts \spexp(\tfrac{R''}{\ts},P)+ R'-R+T\leq 0 $, imply that $R'=R$, $R''=\ts \MI{P}{W}$, $T=0$. Consequently
\begin{equation}
 \EXor (R,0,\ts,P,\XSID) = \ts \left[\spexp\left(\tfrac{R}{\ts},P \right)+\MI{P}{W}-\tfrac{R}{\ts} \right] +(1-\ts)   \CDIV{W_r}{W_a}{\XSID}
\end{equation}
When we maximize over $\XSID$ and $ P \in  \PSet{R}{0}{\ts}$ we get:
\begin{equation}
\EXor (R,0,\ts)= \max_{P\in \PSet{R}{0}{\ts}}  \ts \spexp \left(\tfrac{R}{\ts},P \right)+\ts \MI{P}{W}-R +( 1-\ts)\DX
\qquad \forall\ts \in [\tfrac{R}{\CX},1].
\end{equation}
Simply inserting the minimum possible value of $\ts$ i.e. $\ts^*(R,0)=\tfrac{R}{\CX}$:
\begin{align*} 
\EXor (R,0,\tfrac{R}{\CX})
&= \max_{P\in \PSet{R}{0}{\frac{R}{\CX}}}  \tfrac{R}{\CX} \spexp \left(\CX,P \right)+\tfrac{R}{\CX} \MI{P}{W}-R +( 1-\tfrac{R}{\CX})\DX\\
&=( 1-\tfrac{R}{\CX})\DX.
\end{align*}
Thus $\EXor (R,0)\geq ( 1-\tfrac{R}{\CX})\DX$.

Indeed one need not to rely on the converse on  variable length block codes in order  to establish the fact that $\EXor (R,0)=( 1-\frac{R}{\CX}) \DX$. The lower bound to probability of error presented in the next section, not only recovers this particular  optimality result but also upper bounds the optimal  error exponent, $\TEX(R,\EXa)$, as a function of rate $R$ and erasure exponents $\EXa$.

\subsubsection{Channels with non-zero Zero Error Capacity}
 For channels with a non-zero zero-error capacity, as a result of  equation (\ref{eq:thm1})   $\EXo(R,\EXa)=\infty$    for any $\EXa<\rexp(R)$. This implies that we can get error-free block codes with this two phase coding scheme for any rate $R<\CX$ and any erasure exponent $\EXa\leq \rexp(R)$. As we  discuss in Section \ref{sec:efree}  in more detail, this is the best erasure exponent for rates over the critical rate, at least for symmetric channels.

\section{An Outer Bound for Error Exponent Erasure Exponent Trade Off}\label{sec:converse}
  In this section we derive an upper bound on $\TEX(R,\EXa)$ using previously known results on erasure free block codes with feedback and a generalization of the straight line bound of Shannon, Gallager and Berlekamp \cite{SGB}. We first present a lower bound on the minimum error probability of block codes with feedback and erasures, in terms of that of shorter codes in Section \ref{sec:converse1}. Then in  Section \ref{sec:converse2} we make a brief overview of the outer bounds on the error exponents of erasure free block codes with feedback. Finally  in Section \ref{sec:converse3}, we use the relation we have derived in Section \ref{sec:converse1} to tie the previously known results we have summarized in Section \ref{sec:converse2}  to bound $\TEX(R,\EXa)$.

\subsection{A Trait of Minimum Error Probability of block codes with Erasures}\label{sec:converse1}
Shannon, Gallager and Berlekamp in \cite{SGB} considered fixed length block codes, with list decoding and established a family of lower bounds on the minimum error probability in terms of the product of minimum error probabilities of certain shorter codes. They have shown,  \cite[Theorem 1]{SGB}, that  for fixed length block codes  with list decoding and without feedback
 \begin{equation}
\label{eq:sgb}
\tilde{\PE} (M, \blx, L) \geq  \tilde{\PE} (M, \blx_1, L_1)  \tilde{\PE} (L_1+1, \blx-\blx_1, L)
 \end{equation}
 where  $\tilde{\PE} (M, \blx, L)$ denotes the minimum error probability of erasure free block  codes of length $\blx$  with $M$ equally probable messages and with decoding list size  $L$. As they have already pointed out in \cite{SGB} this theorem continues to hold in the case when a feedback link is available from receiver to the transmitter; although $\tilde{\PE}$'s are different when  feedback is available, the relation given in equation (\ref{eq:sgb}) still holds. They were interested in erasure free codes. We, on the other hand, are interested in  block codes which might have  non-zero erasure probability.  Accordingly we need to incorporate erasure probability as one of the parameters of the optimal error probability. This is what this section is dedicated to.

In a  size $L$ list decoder with  erasures, decoded set $\est$  is either a subset\footnote{Note that if $\est \subset \mesS$ then $\Era\notin \est$ because  $\Era \notin \mesS$.} of $\mesS$ whose size is at most $L$, like the erasure-free case,  or a set which only includes the erasure symbol, i.e. either $\est \subset \mesS$ such that $|\est|\leq L$ or $\est=\{\Era\}$. An erasure  occurs whenever $\est=\{\Era\}$ and an error  occurs whenever $\est \neq \{\Era\}$ and $\mes \notin \est$. We will denote the minimum error probability of  length $\blx$ block codes, with  $M$ equally probable messages, decoding list size $L$ and erasure probability $\Per$ by $\PE(M,\blx, L, \Per)$.

 Theorem \ref{thm:con1} below bounds the error probability of block  codes with erasures and list decoding using the error probabilities of shorter codes with erasures and list decoding, like \cite[Theorem 1]{SGB} does in the erasure free case. Like its counter part in erasure free case Theorem \ref{thm:con1} is later used to establish outer bounds to error exponents.
\begin{theorem}
\label{thm:con1}
      For any $\blx$, $M$, $L$, $\Per$, $\blx_1 \leq \blx$, $L_1$, and  $0\leq s \leq 1$ the  minimum error probability of fixed length block codes with feedback satisfy
      \begin{equation}  
\label{eq:thmcon1}
(1-s) \PE  (M, \blx, L, \Per)  \geq \PE (M, \blx_1, L_1, s)  \PE \left( L_1+1, \blx-\blx_1, L, \tfrac{(1-s)\Per}{\PE (M, \blx_1, L_1, s)}  \right) 
 \end{equation}  
\end{theorem}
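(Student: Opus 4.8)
The plan is to follow the two-codes / straight-line structure of Shannon, Gallager and Berlekamp \cite{SGB} that yields (\ref{eq:sgb}) in the erasure-free case, and to graft onto it a careful accounting of the erasure probabilities across the two phases. Fix an optimal length-$\blx$ feedback code achieving $\PE(M,\blx,L,\Per)$, with decoder $\est(\cdot)$, error probability $\PE(M,\blx,L,\Per)$ and erasure probability $\Per$. I split the block into a first phase (symbols $1,\dots,\blx_1$) and a second phase (symbols $\blx_1+1,\dots,\blx$), and I write $G$ for the undetected-error event $\{\mes\notin\est\}\cap\{\est\neq\{\Era\}\}$ and $H$ for the erasure event $\{\est=\{\Era\}\}$, so that $\PX{G}=\PE(M,\blx,L,\Per)$ and $\PX{H}=\Per$. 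The whole argument rests on the feedback factorization $\PCX{\out^{\blx}}{\dmes}=\PCX{\out^{\blx_1}}{\dmes}\PCX{\out_{\blx_1+1}^{\blx}}{\dmes,\out^{\blx_1}}$, exactly as in \cite{SGB}.

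For the first phase I use the list decoder that, for each received $\out^{\blx_1}$, outputs the set $\mathcal{L}_1(\out^{\blx_1})$ of the $L_1$ messages with largest likelihood $\PCX{\out^{\blx_1}}{\dmes}$, except that an independent coin forces an erasure with probability exactly $s$, independently of $\out^{\blx_1}$. This is a legitimate length-$\blx_1$, $M$-message, list-$L_1$ code with erasure probability $s$, so by the definition of $\PE$ its error probability is at least $\PE(M,\blx_1,L_1,s)$. Let $F_1$ be the event that the coin keeps and $\mes\notin\mathcal{L}_1(\out^{\blx_1})$; by independence of the coin, $\PX{F_1}=(1-s)\,\PX{\mes\notin\mathcal{L}_1(\out^{\blx_1})}\geq\PE(M,\blx_1,L_1,s)$, and this factor $(1-s)$ is precisely what will produce the $(1-s)$ in the second-phase erasure parameter.

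Next I reduce the second phase to a code with $L_1+1$ messages. On $F_1$ the true message $\mes$ has strictly smaller first-phase likelihood than each of the $L_1$ messages in $\mathcal{L}_1(\out^{\blx_1})$, and these $L_1$ messages together with $\mes$, carrying their phase-2 codewords (which under feedback depend on $\out^{\blx_1}$), form a length-$(\blx-\blx_1)$, $(L_1+1)$-message, list-$L$ code whose error and erasure events coincide with $G$ and $H$. Hence $\PCX{G}{F_1}\geq\PE(L_1+1,\blx-\blx_1,L,\rho)$ with $\rho=\PCX{H}{F_1}$. Coin independence gives $\PX{H\cap F_1}=(1-s)\,\PX{H\cap\{\mes\notin\mathcal{L}_1(\out^{\blx_1})\}}\leq (1-s)\Per$, so $\rho=\PX{H\cap F_1}/\PX{F_1}\leq (1-s)\Per/\PE(M,\blx_1,L_1,s)$. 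Since $\PE(L_1+1,\blx-\blx_1,L,\cdot)$ is non-increasing in its erasure argument, combining $\PX{G}\geq\PX{F_1}\,\PCX{G}{F_1}$ with $\PX{F_1}\geq\PE(M,\blx_1,L_1,s)$ and the bound on $\rho$ yields exactly (\ref{eq:thmcon1}).

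The real work is the second-phase reduction of the previous paragraph. Conditioned on $F_1$, the induced $(L_1+1)$-message code is not a single fixed code but varies with the realization $(\mes,\out^{\blx_1})$, and the conditional law of the true message given $F_1$ is not uniform over the $L_1+1$ messages, whereas $\PE$ is defined with a uniform prior; consequently one cannot bound $\PCX{G}{F_1}$ message-by-message, and a naive pointwise-plus-convexity argument fails. This is exactly the combinatorial difficulty that \cite[Theorem 1]{SGB} resolves in the erasure-free setting (and which, as noted there, survives the presence of feedback, since only the factorization above is used). I would therefore reuse their global averaging over $\out^{\blx_1}$, the only new bookkeeping being that the erasure mass $\PX{H}$ is carried through the same averaging and enters the second factor through the monotonicity of $\PE(L_1+1,\blx-\blx_1,L,\cdot)$; the independent phase-1 coin is the device that cleanly decouples this erasure mass to produce the stated ratio $(1-s)\Per/\PE(M,\blx_1,L_1,s)$.
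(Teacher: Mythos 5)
Your scaffolding matches the paper's proof of Theorem \ref{thm:con1}: the split at $\blx_1$, the phase-one ML list-$L_1$ decoder randomized by an independent coin of bias $s$ (this is literally the paper's construction, and it is indeed what produces both the factor $\PE (M,\blx_1,L_1,s)$ and the $(1-s)$ in the second argument), and the final monotonicity step. The trivial pieces $\PX{G}\geq \PX{F_1}\PCX{G}{F_1}$ and $\PX{F_1}\geq \PE(M,\blx_1,L_1,s)$ are also fine. The gap is your step 5: the inequality $\PCX{G}{F_1}\geq \PE\left(L_1+1,\blx-\blx_1,L,\PCX{H}{F_1}\right)$ is not merely unproved, it is false in general, and your justification (restriction to the subset $\mathcal{L}_1(\out^{\blx_1})\cup\{\mes\}$) cannot be sound because it makes no use of any property of the code. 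Conditioned on $F_1$ the true message is \emph{never} one of the $L_1$ list members, so the ``induced $(L_1+1)$-message code'' carries a degenerate prior: all conditional mass sits on the extra message, whose identity follows the phase-one posterior restricted outside the list, and a decoder can exploit that skew. Concretely, take $M=3$, $L=L_1=1$, $s=0$, make the second phase uninformative (the encoder ignores the message after time $\blx_1$), choose codewords so that for typical $\out^{\blx_1}$ the posterior is $\approx(1-\epsilon-\epsilon^2,\,\epsilon,\,\epsilon^2)$ with $\epsilon$ exponentially small in $\blx_1$, and let the decoder always output the second most likely message and never erase. Then $\PCX{H}{F_1}=0$ while $\PCX{G}{F_1}\approx\epsilon$, which for $\blx_1$ large relative to $\blx-\blx_1$ is far below $\PE(2,\blx-\blx_1,1,0)$. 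Any correct version of the step must be weakened by the posterior mass lying outside the list, and the conditioning must be on $\out^{\blx_1}$, not on the error event $F_1$.

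That weakened, $\out^{\blx_1}$-conditioned statement is exactly the paper's Lemma \ref{lem:ero-era}, and supplying it is much more than the ``bookkeeping'' you describe: for a code with arbitrary prior $\apri$ it shows $\Pe \geq \XLD{\apri}{K}\,\PE\left(K+1,\blx,L,\tfrac{\Per}{\XLD{\apri}{K}}\right)$, with the erasure probability rescaled by the same factor $\XLD{\apri}{K}$, and its proof requires an iterative convex decomposition of the prior over size-$(K+1)$ subsets together with scaling and convexity properties of the achievable error--erasure region $\ACH(K+1,\blx,L)$; this is precisely the new content relative to the erasure-free argument of \cite{SGB}. Moreover, once that lemma is applied for each $\YA$ with prior $\apri_{\YA}(\dmes)=\PCX{\dmes}{\YA}$, combining over $\YA$ requires \emph{convexity} of $\PE(L_1+1,\blx-\blx_1,L,\cdot)$ in its erasure argument (Jensen's inequality), not just the monotonicity you invoke; monotonicity enters only at the very end, when $\sum_{\YA}\PX{\YA}(1-s)\XLD{\apri_{\YA}}{L_1}$ is replaced by the smaller quantity $\PE(M,\blx_1,L_1,s)$ inside the erasure argument. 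So you have the correct overall plan and the correct phase-one device, but the central erasure-aware reduction lemma and the Jensen step are missing, and the shortcut proposed in their place does not hold.
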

Note that given a $(M,\blx, L)$ triple if the error probability erasure probability pairs  $({\Pe}_{1},{\Per}_{1})$  and $({\Pe}_{2},{\Per}_{2})$ are achievable,  then for any $\gamma \in [0,1]$  using the initial symbol  $\rsy_0$ of the feedback link we can construct a code that uses   the code achieving  $({\Pe}_{1},{\Per}_{1})$ with probability $\gamma$,  the code achieving   $({\Pe}_{2},{\Per}_{2})$  with probability $(1-\gamma$). This new code achieves  error probability erasure probability pair  $(\gamma{\Pe}_{a}+(1-\gamma){\Pe}_{b}, \gamma{\Per}_{a}+(1-\gamma) {\Per}_{b})$. As a result for any $(M,\blx, L)$ triple the  set of achievable error probability erasure probability pairs is convex. We  use this fact twice in order to prove Theorem \ref{thm:con1}.

Let us first consider the following lemma which bounds the achievable error probability erasure probability,  pairs for block codes with nonuniform a priori probability distribution, in terms of block  codes with a uniform  a priori probability distribution  but fewer  messages.
\begin{lemma}
\label{lem:ero-era}
For any length $\blx$  block code with message set $\mesS$,  a priori probability distribution $\apri(\cdot)$ on $\mesS$, erasure probability $\Per$,  decoding list size $L$,  and integer $K$
  \begin{align}
\label{eq:ero-era}
\Pe &\geq \XLD{\apri}{K} \PE\left(K+1, \blx, L, \tfrac{\Per}{\XLD{\apri}{K}} \right) &&\mbox{where}&&
\XLD{\apri}{K}= \min_{ \substack{{\cal S} \subset \mesS, |{\cal S}|\leq  K}} \apri({\cal S}^{{\bf c}})
,~~~{\cal S}^{{\bf c}}=\mesS/{\cal S}.
\end{align}
Recall that $\PE\left(K+1, \blx, L, \Per \right)$ is the minimum error probability of length $\blx$ codes with $(K+1)$ equally probable messages and decoding list size $L$, with feedback if the original code does have feedback and without feedback if the original code does not.
\end{lemma}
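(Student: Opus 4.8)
The plan is to deduce (\ref{eq:ero-era}) directly from the definition of $\PE(\cdot)$ as a minimum, by exhibiting — out of the given length-$\blx$ code — a \emph{sub-code} carrying $K+1$ equiprobable messages whose erasure probability is at most $\tfrac{\Per}{\XLD{\apri}{K}}$ and whose error probability is at most $\tfrac{\Pe}{\XLD{\apri}{K}}$. Indeed, if such a sub-code exists then, since $\PE(K+1,\blx,L,\cdot)$ is the minimum over codes of the stated kind and is non-increasing in its erasure argument, we immediately obtain $\PE\!\left(K+1,\blx,L,\tfrac{\Per}{\XLD{\apri}{K}}\right)\leq \tfrac{\Pe}{\XLD{\apri}{K}}$, which rearranges to (\ref{eq:ero-era}). (If $\XLD{\apri}{K}=0$ the inequality is vacuous, so I assume $\XLD{\apri}{K}>0$.)

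To build the sub-code I would fix a set ${\cal S}\subset\mesS$ with $|{\cal S}|=K+1$, retain the original encoding functions verbatim (so the feedback structure is inherited, if present), and use the decoder $\dout^{\blx}\mapsto \est(\dout^{\blx})\cap{\cal S}$, declaring an erasure exactly when the original decoder does. For every $\dmes\in{\cal S}$ the three events (correct list-decoding, undetected error, erasure) coincide with those of the original code, because $\dmes\in\est\cap{\cal S}\Leftrightarrow\dmes\in\est$ and $|\est\cap{\cal S}|\leq L$; hence $\Pem{\dmes}$ and $\Perm{\dmes}$ are unchanged and the sub-code has error probability $\tfrac{1}{K+1}\sum_{\dmes\in{\cal S}}\Pem{\dmes}$ and erasure probability $\tfrac{1}{K+1}\sum_{\dmes\in{\cal S}}\Perm{\dmes}$. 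It therefore suffices to choose ${\cal S}$ with $\sum_{\dmes\in{\cal S}}\Pem{\dmes}\leq\tfrac{K+1}{\XLD{\apri}{K}}\Pe$ \emph{and} $\sum_{\dmes\in{\cal S}}\Perm{\dmes}\leq\tfrac{K+1}{\XLD{\apri}{K}}\Per$ at the same time.

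The clean way to secure both inequalities at once is to choose ${\cal S}$ so that every message in it has a priori probability at least $\tfrac{\XLD{\apri}{K}}{K+1}$: then $\Pe=\sum_{\dmes}\apri(\dmes)\Pem{\dmes}\geq\sum_{\dmes\in{\cal S}}\apri(\dmes)\Pem{\dmes}\geq\tfrac{\XLD{\apri}{K}}{K+1}\sum_{\dmes\in{\cal S}}\Pem{\dmes}$, and identically for the erasures, so that a \emph{single} lower bound on the priors controls both exponents. The existence of $K+1$ such messages I would prove by contradiction, using the counting that defines $\XLD{\apri}{K}$: if fewer than $K+1$ messages had probability $\geq\tfrac{\XLD{\apri}{K}}{K+1}$, then the $|\mesS|-K$ least likely messages would each be strictly below $\tfrac{\XLD{\apri}{K}}{K+1}$, whence their total probability — which by definition equals $\XLD{\apri}{K}$ — would be strictly below $\tfrac{|\mesS|-K}{K+1}\XLD{\apri}{K}$, a contradiction as soon as $|\mesS|-K\leq K+1$.

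I expect the simultaneous control of the error \emph{and} the erasure probability, with the sharp constant $1/\XLD{\apri}{K}$, to be the main obstacle; it is precisely the ingredient that is new relative to the erasure-free lemma \cite[Theorem~1]{SGB}, where one optimizes over a single sum rather than asking one set ${\cal S}$ to be simultaneously good for two objectives. The prior-threshold selection above disposes of this elegantly whenever $|\mesS|-K\leq K+1$; in the complementary range the required $K+1$ high-probability messages need not exist (e.g.\ for nearly uniform $\apri$ with $|\mesS|\gg 2K$), and the factor-$1$ constant cannot be reached by thresholding the two probabilities separately, since a union bound over the ``error-expensive'' and ``erasure-expensive'' messages only bounds their a priori mass by $2\,\XLD{\apri}{K}$. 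For that regime I would instead select ${\cal S}$ by a joint argument on the two sums — starting from the strictly feasible fractional selection $x_{\dmes}=\tfrac{K+1}{|\mesS|}$, whose two induced sums are both exactly $\tfrac{|\mesS|-K}{|\mesS|}\,\tfrac{K+1}{\XLD{\apri}{K}}$ times the respective averages and hence strictly within the targets, and rounding it to an integral $(K+1)$-subset while preserving both constraints — and this rounding step is where I expect the real work to lie.
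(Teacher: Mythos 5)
Your reduction (same encoder, decoder $\est\cap{\cal S}$, so that each message in ${\cal S}$ inherits its conditional error and erasure probabilities) is sound, and so is the threshold selection; but, as you note, that selection only settles the case $|\mesS|\leq 2K+1$. The trouble is that the complementary regime is the only one the paper actually needs: Lemma~\ref{lem:ero-era} is invoked in the proof of Theorem~\ref{thm:con1} with $|\mesS|=M$ exponentially large, $K=L_1$ as small as $1$, and with the non-uniform posterior priors $\apri_{\YA}(\dmes)=\PCX{\dmes}{\YA}$ --- and there your argument fails at its first step. The claimed feasibility of the fractional selection $x_{\dmes}=\tfrac{K+1}{|\mesS|}$ is false for non-uniform $\apri$: your identity equating the induced sums to $\tfrac{|\mesS|-K}{|\mesS|}\cdot\tfrac{K+1}{\XLD{\apri}{K}}$ times $\Pe$ (resp.\ $\Per$) conflates the prior-weighted averages with unweighted ones and holds only when $\apri$ is uniform, i.e.\ when $\XLD{\apri}{K}=\tfrac{|\mesS|-K}{|\mesS|}$. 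Concretely, with $|\mesS|=N\gg 2K$, let one message have prior $\epsilon\to 0$ and $\Pem{\dmes}=1$, and let every other message have $\Pem{\dmes}=0$: then $\Pe=\epsilon$ and $\XLD{\apri}{K}\to\tfrac{N-K-1}{N-1}$, so the target $\tfrac{K+1}{\XLD{\apri}{K}}\Pe$ vanishes, while your fractional error sum equals $\tfrac{K+1}{N}$, bounded away from zero. So the starting point of your rounding argument is already infeasible, and the rounding itself --- which you flag as ``the real work'' --- is never supplied. Worse, it is not clear that the statement you are reducing to (a single $(K+1)$-subset simultaneously within a factor $\tfrac{1}{\XLD{\apri}{K}}$ of \emph{both} weighted averages) is even true; nothing in the lemma forces the code achieving $\PE(K+1,\blx,L,\cdot)$ to be a sub-code of the given code.

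The paper's proof never needs one subset to be good for both objectives. It records that every $(K+1)$-subset $\mesS'$ yields a point $\tfrac{1}{K+1}\sum_{\dmes\in\mesS'}(\Perm{\dmes},\Pem{\dmes})$ of the achievable region $\ACH(K+1,\blx,L)$, and then iteratively strips off the smallest remaining nonzero prior so as to write $(\Per,\Pe)$ as a positive combination of such points taken over \emph{many different} subsets, plus a componentwise nonnegative remainder supported on at most $K$ messages; by the definition of $\XLD{\apri}{K}$, the total combination weight is at least $\XLD{\apri}{K}$. Convexity of $\ACH(K+1,\blx,L)$, together with its closure under scaling by $a\geq 1$ and adding nonnegative vectors, then gives $\left(\tfrac{\Per}{\XLD{\apri}{K}},\tfrac{\Pe}{\XLD{\apri}{K}}\right)\in\ACH(K+1,\blx,L)$, which is the lemma. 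That convexity of the achievable region is exactly the ingredient replacing your missing combinatorial selection: to complete your route you would have to either import it (at which point you have essentially reproduced the paper's proof) or establish the stronger single-subset claim, which your proposal does not do.
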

Note that  $\XLD{\apri}{K}$ is the error probability of a decoder which decodes to the set of $K$ most likely messages under $\apri$. In other words $\XLD{\apri}{K}$ is the minimum error probability for a size $K$ list decoder when the posterior probability distribution is $\apri$.

\begin{proof} 
If  $\XLD{\apri}{K}=0$, the lemma holds trivially. Thus we assume $\XLD{\apri}{K}>0$ henceforth. For any size $(K+1)$ subset $\mesS'$ of $\mesS$, one can use the encoding scheme  and the decoding rule of the original code for $\mesS$, to construct the following block code for $\mesS'$:
\begin{itemize}
\item {\bf Encoder:}$\forall \dmes \in \mesS'$ use the encoding scheme for message $\dmes$ in the original code, i.e.
\begin{equation*}
  \ENC'_t(\dmes,\dfsy^{t-1})=\ENC_t(\dmes,\dfsy^{t-1})  \qquad \forall \dmes \in \mesS',~~ t \in [1,\blx],~~\dfsy^{t-1}\in {\fsyS}^{t-1}
\end{equation*}
\item {\bf Decoder:}   For all $\dfsy^{\blx}\in {\fsyS}^{\blx}$ if the original decoding rule declares erasure, declare erasure, else the decode to the  intersection of the original decoded list and $\mesS'$.
\begin{equation*}
\est'=
\begin{cases}
\Era                                   &\mbox{if~}~\est=\Era\\
\est\cap \mesS'                &\mbox{else}
\end{cases}
\end{equation*} 
\end{itemize}
This is a length $\blx$ code with $(K+1)$ messages and  decoding list size  $L$. Furthermore for all $\dmes$ in $\mesS'$  the conditional error probability $\Pem{\dmes}'$ and  the conditional erasure probability $\Perm{\dmes}'$ are equal to the conditional error probability $\Pem{\dmes}$ and the conditional  erasure probability $\Perm{\dmes}$ in the original code, respectively. 

Note that  
\begin{equation}
\label{eq:peperreg}
\tfrac{1}{K+1}\sum\nolimits_{\dmes \in \mesS' } \left(\Pem{\dmes}, \Perm{\dmes} \right) \in \ACH(K+1, \blx, L) \qquad \forall \mesS' \subset \mesS \mbox{ such that~} |\mesS'|=K+1
\end{equation}
where $\ACH(K+1, \blx, L)$ is the set of achievable error probability, erasure probability pairs for length $\blx$  block codes with $(K+1)$ equally probable messages and with decoding list size $L$. 

Let the smallest non-zero element of $\{\apri(1), \apri(2), \ldots \apri(|\mesS|) \}$ be $\apri(\xi_1)$. For any size $(K+1)$ subset of $\mesS$ which includes $\xi_1$ and all  whose elements have non-zero probabilities, say $\mesS_1$, we have,
\begin{align*}
 \left( \Pe, \Per  \right) 
&= \sum\nolimits_{\dmes \in  \mesS} \apri(\dmes) (\Pem{\dmes}, \Perm{\dmes})\\
&= \sum\nolimits_{\dmes \in  \mesS} [\apri(\dmes)  -\apri(\xi_1)  \IND{\dmes \in \mesS_1}] 
(\Pem{\dmes}, \Perm{\dmes})+ \apri(\xi_1) \sum\nolimits_{\dmes \in \mesS_1}(\Pem{\dmes},\Perm{\dmes})
\end{align*}
Equation (\ref{eq:peperreg}) and the definition of $\ACH(K+1, \blx, L)$, implies that  $\exists\ach_{1} \in  \ACH(K+1, \blx, L)$ such that
\begin{align}
 \left( \Pe, \Per \right)
&= \sum\nolimits_{\dmes \in  \mesS }  \apri^{(1)}(\dmes) (\Pem{\dmes}, \Perm{\dmes})+ \apri(\ach_{1}) \ach_{1}\\
1
&=\apri(\ach_{1})+\sum\nolimits_{\dmes \in \mesS} \apri^{(1)}(\dmes)
\end{align}
where $\apri(\ach_{1})=(K+1) \apri(\xi_1)$ and $\apri^{(1)}(\dmes)= \apri(\dmes)-\apri(\xi_1)  \IND{\dmes \in \mesS_1}$.  Furthermore the number of non-zero  $\apri^{(1)}(\dmes)$'s is at least one less than that of non-zero $\apri(\dmes)$'s. The remaining probabilities, $\apri^{(1)}(\dmes)$, have a minimum, $\apri^{(1)}(\xi_2)$ among its non-zero elements. One can repeat the same argument once more using that element and reduce the number of non-zero elements at least one more.  After at most $|\mesS|-K$  such iterations  one reaches to a $\apri^{(\ell)}$ which is  non-zero for $K$ or fewer messages:
\begin{align}
\label{eq:yicon}
 \left( \Pe, \Per \right) = \sum\nolimits_{j=1}^{\ell} \apri (\ach_{j}) \ach_{j}+ \sum\nolimits_{\dmes \in \mesS}  \apri^{(\ell)}(\dmes)  (\Pem{\dmes},\Perm{\dmes}) 
\end{align}
where $\apri^{(\ell)}(\dmes) \leq \apri(\dmes)$ for all $\dmes$ in $\mesS$ and $\sum_{ \dmes \in \mesS} \IND{\apri^{(\ell)}(\dmes)>0} \leq K$.

In equation (\ref{eq:yicon}), the first sum  is equal to a convex combination of $\ach_{j}$'s multiplied by $ \sum_{j=1}^{\ell} \apri(\ach_{j})$;  the second sum is equal to a pair with non-negative entries. 
As a result of definition of $\XLD{\apri}{K}$ given in equation (\ref{eq:ero-era}),
\begin{equation}
  \label{eq:ero-erab}
 \XLD{\apri}{K}\leq \sum\nolimits_{j=1}^{\ell} \apri (\ach_{j}).
\end{equation}
Then as a result of convexity of $\ACH(K+1, \blx, L)$ we can conclude that there exists a $\ach  \in \ACH(K+1, \blx , L)$  such that $( \Pe,\Per) = a \XLD{\apri}{K} \tilde{\ach}    +(b_1,b_2)$ for some $a\geq 1$, $b_1\geq0$ and $b_2\geq 0$. Thus 
\begin{equation}
\label{eq:ero-eraf}
\exists \ach  \in \ACH(K+1, \blx , L) \mbox{~such that~} ( \tfrac{\Pe}{\XLD{\apri}{K}},\tfrac{\Per}{\XLD{\apri}{K}}) =    \ach+ (b_3,b_4) 
\mbox{~for some~}  b_3\geq0, b_4\geq 0.
\end{equation}
Then the lemma follows   from equation (\ref{eq:ero-eraf}),
the fact that $\PE(M, \blx, L, \Per)$ is decreasing in $\Per$
 and the fact that $\PE(M, \blx, L, \Per)$
 is  uniquely determined by   $\ACH(M, \blx, L)$  for $s_{\Era}\in [0,1]$ as follows
\begin{equation}
\label{eq:pedefz}
\PE\left(M, \blx, L, \Per  \right)=\min_{\ach_{\Era}:(\ach_{\Ero},\ach_{\Era}) \in \ACH(M, \blx, L)} \ach_{\Ero}   \qquad \forall (M, \blx, L,\ach_{\Era}).
\end{equation}
\end{proof}

For proving Theorem \ref{thm:con1}, we express the error and erasure probabilities, as a convex combination of error and erasure probabilities of  $(\blx-\blx_1)$ long block codes with a priori probability distribution $\apri_{\ZA}(\dmes)=\PCX{\dmes}{\ZA}$ over the messages and apply Lemma \ref{lem:ero-era} together with convexity arguments  similar to the ones above.

\begin{proofs}{Theorem \ref{thm:con1}}
For all $\dmes$ in $\mesS$, let $\DECS (\dmes)$  be the decoding region of $\dmes$, $\DECS(\Era)$ be the decoding region of the erasure symbol $\Era$ and  $\tilde{\DECS}(\dmes)$ the error region of $\dmes$:
\begin{align}
\DECS(\dmes)
&\DEF\{\dfsy^{\blx} : \dmes \in \est\}     &
\DECS(\Era) 
&\DEF\{\dfsy^{\blx} : \Era \in \est\}         &
\tilde{\DECS}(\dmes) \DEF
&\DECS(\dmes)^{c} \cap \DECS(\Era)^{c} &
&\mbox{where~} \DECS^{c}=\fsyS^{\blx}/\DECS  .
\end{align}
Then  for all $\dmes \in \mesS$,
\begin{equation}
  ( \Pem{\dmes}, \Perm{\dmes}) = \left(   \PCX{\tilde{\DECS}(\dmes)}{\dmes} , \PCX{\DECS(\Era)}{\dmes} \right). 
\end{equation}
Note that\footnote{There is a slight abuse of notation here, if $\rsy$'s include  real valued random variables with densities, we should integrate, rather than sum, over them. Since it is clear from the context what needs to be done we omit that subtlety in below calculations.}
\begin{align*}
\Perm{\dmes}
&= \sum\nolimits_{\dfsy^{\blx}:\dfsy^{\blx} \in \DECS(\Era)} \PCX{\dfsy^{\blx}}{\dmes} \\
&=  \sum\nolimits_{\ZA} \PCX{\ZA}{\dmes}   \sum\nolimits_{\ZB: (\ZA,\ZB) \in \DECS(\Era)} \PCX{\ZB}{\dmes,\ZA}.
\end{align*}
 Then the erasure probability is 
\begin{align*}
\Per
&= \sum\nolimits_{\dmes \in \mesS} \tfrac{1}{|\mesS|}  \sum\nolimits_{\ZA} \PCX{\ZA}{\dmes}   \sum\nolimits_{\ZB: (\ZA,\ZB) \in \DECS(\Era)} \PCX{\ZB}{\dmes,\ZA}\\
&=  \sum\nolimits_{\ZA} \PX{\ZA} \left(\sum\nolimits_{\dmes \in \mesS}  \PCX{\dmes}{\ZA}    \sum\nolimits_{\ZB: (\ZA,\ZB) \in \DECS(\Era)} \PCX{\ZB}{\dmes,\ZA}\right)\\
&=  \sum\nolimits_{\ZA} \PX{\ZA} \Per(\ZA).
\end{align*}
Note that for every  $\ZA$, $\Per(\ZA)$ is the erasure  probability of a code of length $(\blx-\blx_1)$  with a priori  probability distribution $\apri_{\ZA}(\dmes)=\PCX{\dmes}{\ZA}$. Furthermore one can write the error probability, $\Pe$ as 
\begin{align*}
\Pe
&=\sum\nolimits_{\ZA} \PX{\ZA} \left(\sum_{\dmes \in \mesS}  \PCX{\dmes}{\ZA}    \sum\nolimits_{\ZB: (\ZA,\ZB) \in \tilde{\DECS}(\dmes)} \PCX{\ZB}{\dmes,\ZA}\right)\\
&=  \sum\nolimits_{\ZA} \PX{\ZA} \Pe(\ZA)
\end{align*}
where $\Pe(\ZA)$ is the error probability of the very same  length $(\blx-\blx_1)$ code.  As a result of Lemma \ref{lem:ero-era}, the pair $(\Pe(\ZA), \Per(\ZA))$ satisfies  
\begin{equation}
\Pe(\ZA)\geq  \XLD{\apri_{\ZA}}{L_1} \PE\left(L_1+1, (\blx-\blx_1), L, \tfrac{\Per(\ZA)}{ \XLD{\apri_{\ZA}}{L_1} } \right). 
\end{equation}
Then for any  $s \in [0,1]$.
\begin{align}
(1-s)\Pe
%&\geq  \sum\nolimits_{\ZA} \PX{\ZA}  \XLD{\apri_{\ZA}}{L_1} \PE\left(L_1+1, (\blx-\blx_1), L, \tfrac{\Per(\ZA)}{ \XLD{\apri_{\ZA}}{L_1} } \right)\\
\notag
&\!\geq\!  \sum\nolimits_{\ZA} \PX{\ZA} (1-s) \XLD{\apri_{\ZA}}{L_1} \PE\left(L_1+1, (\blx-\blx_1), L, \tfrac{\Per(\ZA)}{ \XLD{\apri_{\ZA}}{L_1} } \right)\\
\notag
&\!\geq\!  \left(\sum\nolimits_{\ZA} \PX{\ZA} (1-s) \XLD{\apri_{\ZA}}{L_1} \right) \PE\left(L_1+1, (\blx-\blx_1), L, \tfrac{\sum_{\ZA} \PX{\ZA} (1-s) \Per(\ZA)}{\sum_{\ZA} \PX{\ZA} (1-s) \XLD{\apri_{\ZA}}{L_1}} \right)\\
\label{eq:feb1}
&\!=\!  \left(\sum\nolimits_{\ZA}\!\!\! \PX{\ZA}
\! (1\!-\!s) \XLD{\apri_{\ZA}}{L_1} \right) \PE\!\left(\!L_1\!+\!1, (\!\blx\!-\!\blx_1\!), L, \tfrac{(1\!-\!s)\Per}{\sum_{\ZA} \!\PX{\ZA}  (1\!-\!s) \XLD{\apri_{\ZA}}{L_1}}\! \right)
\end{align}
where the second inequality follows from the  convexity of $\PE(M, \blx, L, \Per)$ in $\Per$. Note that  $\PE(M, \blx, L, \Per)$ is convex in $\Per$ because of the equation (\ref{eq:pedefz}) and the convexity of the region   $\ACH(M, \blx, L)$.

Now consider a code which uses the  first $\blx_1$ time units of the original encoding scheme as its encoding scheme. Decoder of this new code draws a real number from $[0,1]$ uniformly at random, independently of $\fsy^{\blx_1}$ of the original code  (and the message evidently). If this number is less than $s$ it declares erasure else it makes a maximum likelihood decoding with list of size $L_1$. Then the sum on the left hand side of the below expression (\ref{eq:feb2}) is its error probability. But that probability is lower bounded by $\PE \left(M, \blx_1, L_1, s \right)$ which is minimum error probability over all length $\blx_1$ block codes with $M$ messages and decoding  list size $L_1$, i.e.
\begin{equation}
\label{eq:feb2}
  \sum\nolimits_{\ZA} \PX{\ZA} (1-s) \XLD{\apri_{\ZA}}{L_1}\geq  \PE \left(M, \blx_1, L_1, s \right).
\vspace{-.1cm}
\end{equation}
Then the theorem   follows from the fact that $\Pe(M,\blx,L_1,\Per)$ is decreasing function of $\Per$ and 
 the equations (\ref{eq:feb1}) and (\ref{eq:feb2}).
\vspace{-.3cm}
\end{proofs}
Like the result of Shannon, Gallager and Berlekamp in \cite[Theorem 1]{SGB}, Theorem \ref{thm:con1} is correct both with and without feedback. Although $\PE$'s are different in each case, the relationship between them given in equation (\ref{eq:thmcon1}) holds in both cases.

\subsection{Classical  Results on Error Exponent of Erasure-free  Block Codes with Feedback:}\label{sec:converse2}
 In this section we give a very brief overview of the previously known results on the error probability of erasure free block codes with feedback. These result are  used in Section \ref{sec:converse3} together with Theorem \ref{thm:con1} to bound $\TEX(R,\EXa)$ from above. Note that Theorem \ref{thm:con1} only relates the error probability  of longer codes to that of the shorter ones. It does not in and of itself bound the error probability. It is in a sense a tool to glue together various bounds on the error probability.

First bound we consider is on the error exponent of erasure free block codes with feedback. Haroutunian proved in \cite{har} that, for any $(M_{\blx}, \blx, L_{\blx})$ sequence of triples, such that $\lim_{\blx \rightarrow \infty} \tfrac{\ln M_{\blx}-\ln L_{\blx}}{\blx}=R$,
\begin{equation}
\label{eq:har1}
  \lim_{\blx \rightarrow \infty} \tfrac{-\ln \PE  (M_{\blx}, \blx, L_{\blx}, 0)}{\blx} \leq  \hexp (R)
\end{equation}
where 
\begin{equation}
\label{eq:har2}
 \hexp (R)=\min_{V: \CX(V)\leq R} \max_{P} \CDIV{V}{W}{P} 
\qquad \mbox{and} \qquad   \CX(V)=\max_{P} \MI{P}{V}.
\end{equation}

Second bound we consider is on the trade off between the error exponents of two messages in a two message erasure free block code with feedback. Berlekamp mentions this result in passing in \cite{ber} and attributes it to  Gallager and  Shannon.
\begin{lemma}
\label{lem:E2=F2}
  For any feedback encoding scheme with two messages and erasure free decision rule and  for all  $T \geq T_0$: 
\begin{equation}
\label{eq:ber1}
 \mbox{either} \qquad \Pe_1\geq  \tfrac{1}{4}  e^{-\blx T+ \sqrt{\blx} 4  \ln P_{min}}
\qquad \mbox{or} \qquad 
 \Pe_2> \tfrac{1}{4} e^{-\blx \obhexp{T}+ \sqrt{\blx} 4  \ln P_{min}}
\end{equation}
where $P_{min}=\min_{\dinp,\dout:\CT{\dinp}{\dout}>0} \CT{\dinp}{\dout}$.
\begin{align}
 \label{eq:tzero}
 T_0
&\triangleq  \max\nolimits_{\dinp,\tilde{\dinp}}  -\ln \sum\nolimits_{\dout:\CT{\tilde{\dinp}}{\dout}>0}  \CT{\dinp}{\dout} \\
\label{eq:ber2}
 \obhexp{T} 
&\triangleq  \max\nolimits_{\XSID} \bhexp{T}{\XSID}.  
\end{align}
\end{lemma}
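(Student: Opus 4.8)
The plan is to adapt the binary hypothesis testing argument of Shannon, Gallager and Berlekamp \cite{SGB} to the feedback setting, using the single-letter tilted distributions $\Xq{s}$ that already appear in (\ref{eq:bhtexp}). Let $P_1$ and $P_2$ be the output distributions on $\outS^{\blx}$ induced by the two messages. Because the feedback link only lets the symbol sent at time $t$ depend on $\out^{t-1}$, these factor causally as $P_i(\dout^{\blx})=\prod_{t=1}^{\blx}\CT{x_i(t,\dout^{t-1})}{\dout_t}$, where $x_1(t,\cdot),x_2(t,\cdot)$ are the two encoding functions. Since the decision rule is erasure free, it partitions $\outS^{\blx}$ into two regions $Y_1,Y_2$ with $\Pe_1=P_1(Y_2)$ and $\Pe_2=P_2(Y_1)$, and the whole problem becomes a correlated binary hypothesis test.

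First I would fix $s\in[0,1]$ and introduce the causally tilted measure $Q_s$ on $\outS^{\blx}$, defined through the chain rule by letting $\dout_t$, given $\dout^{t-1}$, be distributed as $\XQ{s}{\cdot}{x_1(t,\dout^{t-1}),x_2(t,\dout^{t-1})}$. With $Z\DEF\ln\tfrac{dP_2}{dP_1}$ a sum of per-letter increments, a short computation gives the single-pair-type identities $\ln\tfrac{dP_1}{dQ_s}=\mu_{\blx}(s)-sZ$ and $\ln\tfrac{dP_2}{dQ_s}=\mu_{\blx}(s)+(1-s)Z$, where $\mu_{\blx}(s)=\sum_t\ln\sum_{\dout}\CT{x_1(t)}{\dout}^{1-s}\CT{x_2(t)}{\dout}^{s}$ is the (path-dependent) block normalizer exponent. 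Under $Q_s$ the conditional mean of the $t$-th increment of $Z$ is exactly $\CDIV{\Xq{s}}{W_a}{(x_1(t),x_2(t))}-\CDIV{\Xq{s}}{W_r}{(x_1(t),x_2(t))}$, and the same tilting identity shows $\mu_{\blx}(s)$ minus $s$ times the drift collapses to $-\blx\,\CDIV{\Xq{s}}{W_a}{\XSID}$, with $\XSID$ the realized joint type of the transmitted pairs. The increments of $Z$ are bounded in magnitude by $|\ln P_{min}|$, so $Z$ is a bounded-increment process whose conditional variance is $O(\blx)$ along every sample path.

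Next I would run the concentrate-and-tilt-back step. A Chebyshev bound on the bounded-increment process shows that the $Q_s$-measure of the central set, on which $Z$ lies within $O(\sqrt{\blx})$ of its mean, is at least $\tfrac12$; intersecting this set with $Y_1$ and with $Y_2$, at least one piece carries $Q_s$-measure at least $\tfrac14$. On the $Y_2$ piece, $\tfrac{dP_1}{dQ_s}=e^{\mu_{\blx}(s)-sZ}$ is bounded below by $e^{-\blx\CDIV{\Xq{s}}{W_a}{\XSID}-O(\sqrt{\blx})}$, giving $\Pe_1\geq\tfrac14 e^{-\blx\CDIV{\Xq{s}}{W_a}{\XSID}+\sqrt{\blx}\,4\ln P_{min}}$; on the $Y_1$ piece the symmetric bound gives $\Pe_2\geq\tfrac14 e^{-\blx\CDIV{\Xq{s}}{W_r}{\XSID}+\sqrt{\blx}\,4\ln P_{min}}$, where the $\tfrac14$ prefactor is the measure split and the $\sqrt{\blx}\,4\ln P_{min}$ slack is the concentration window. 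Choosing $s$ so that $\CDIV{\Xq{s}}{W_a}{\XSID}=T$ — possible precisely when $T\geq T_0$ of (\ref{eq:tzero}), with the boundary distributions $\Xq{0},\Xq{1}$ of (\ref{eq:bhtexp}) covering the endpoints — makes the second exponent equal to $\bhexp{T}{\XSID}$ by (\ref{eq:bhtexpd}); since $\bhexp{T}{\XSID}\leq\obhexp{T}$ by (\ref{eq:ber2}), replacing the realized type by the maximizing one only weakens the $\Pe_2$ bound, yielding the stated dichotomy.

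The step I expect to be the main obstacle is the randomness of the joint type $\XSID$ forced by feedback: unlike the non-feedback case of \cite{SGB}, the pair $(x_1(t),x_2(t))$ depends on the random $\dout^{t-1}$, so both the drift and the normalizer $\mu_{\blx}(s)$ are path dependent rather than deterministic constants. The causal-tilting formulation tames the variance, since the per-letter increments stay bounded by $|\ln P_{min}|$ on every path and the $O(\sqrt{\blx})$ window survives; the delicate part is matching the drift to a single target exponent $T$ with a single fixed $s$ while keeping the $\tfrac14$ and $\sqrt{\blx}$ constants uniform. Reconciling this either by conditioning on the realized type and choosing $s$ pathwise, or by showing the worst case over realized types is exactly the maximization defining $\obhexp{T}$, is where care is needed, and the support technicalities that split off the $s=0,1$ cases in (\ref{eq:bhtexp}) together with the threshold $T_0$ are precisely the bookkeeping that makes such a choice of $s$ legitimate.
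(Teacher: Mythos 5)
Your overall architecture — causal tilting, bounded-increment Chebyshev concentration, splitting the tilted mass between the two decoding regions, then tilting back — is exactly the skeleton of the paper's proof. But there is a genuine gap at the one step you yourself flag, and neither of your two proposed repairs closes it. With a \emph{single fixed} $s$, the step ``choosing $s$ so that $\CDIV{\Xq{s}}{W_a}{\XSID}=T$'' is not a well-defined operation: $\XSID$ there is the \emph{realized} empirical joint type of the transmitted pairs, which is a random, path-dependent quantity (and, under the tilted measure, distributed differently than under $P_1$ or $P_2$). Your fix (a), conditioning on the realized type and choosing $s$ pathwise, is circular: the realized type is not adapted (it is only known at time $\blx$), so it cannot be used to define a causal tilted measure, and for any candidate $s$ the type statistics under $Q_s$ change with $s$. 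Your fix (b) is simply false for a fixed $s$: on a path whose realized type $\XSID_\omega$ satisfies $\CDIV{\Xq{s}}{W_a}{\XSID_\omega}<T$ strictly, one only gets $\CDIV{\Xq{s}}{W_r}{\XSID_\omega}\geq\bhexp{T}{\XSID_\omega}$ (the tilted distribution is then a \emph{suboptimal} point for the minimization defining $\bhexp{T}{\XSID_\omega}$), and this quantity can exceed $\obhexp{T}$ — e.g.\ when the encoder can switch between an input pair requiring a small tilting parameter and another requiring a large one, the common $s$ forced by the first pair makes the $W_r$-divergence of the second pair blow up. So the worst case over realized types, at fixed $s$, is not dominated by the pair $(T,\obhexp{T})$, and the resulting bound on $\Pe_2$ is strictly weaker than the lemma.

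The paper's resolution is the idea your proposal is missing: choose the tilting parameter \emph{adaptively per time step}, as a function of the current input pair rather than of any block-level type. At time $t$ the feedback encoder fixes the pair $(\ENC_t(\dmes_1,\dout^{t-1}),\ENC_t(\dmes_2,\dout^{t-1}))$; taking $\XSID_t$ to be the point mass on this pair, the paper tilts the conditional law of $\dout_t$ by $\XQ{\WS{T}{\XSID_t}}{\cdot}{\ENC_t(\dmes_1,\dout^{t-1}),\ENC_t(\dmes_2,\dout^{t-1})}$, where $\WS{T}{\cdot}$ solves the per-pair constraint. Since $T\geq T_0$ and $T_0$ is a \emph{maximum} over pairs, this choice is legitimate for every pair simultaneously, and it yields deterministic per-step drift bounds: each step's conditional divergence toward message $1$ is at most $T$, and toward message $2$ it equals $\bhexp{T}{\XSID_t}\leq\obhexp{T}$. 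Summing over $t$ gives total exponents at most $\blx T$ and $\blx\obhexp{T}$ on \emph{every} path, so no control of the realized type is ever needed; the martingale concentration and the $\tfrac14$ mass-splitting argument you describe then go through verbatim. Until your proof replaces the fixed-$s$ tilting by this per-step, per-pair choice (or supplies some other mechanism making the drift bounds uniform over paths), it does not establish the stated dichotomy.
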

 Result is old and somewhat intuitive to those who are familiar with the calculations in the non-feedback case. Thus  probably it has been proven  a number of times. But we are not aware of a published proof, hence we have included one in Appendix \ref{app:E2=F2}.

Although Lemma  \ref{lem:E2=F2}  establishes only the converse part $(T,\obhexp{T})$ is indeed the optimal trade off for the error exponents of two messages in an erasure free block code, both with and without feedback.  Achievablity of this trade off has already been  established in  \cite[Theorem 5]{SGB} for the case without feedback; evidently this implies the achievablity with feedback. Furthermore  $T_0$ does have  an operational meaning, it is the maximum error exponent first message can have, while the second message has zero error probability. This fact is also proved in Appendix \ref{app:E2=F2}.

For some channels Lemma \ref{lem:E2=F2} gives us a bound on the error exponent of erasure free-codes at zero rate,  which is tighter than Haroutunian's bound at zero rate.  In order to see this  let us first define  $T^{*}$ to be 
\begin{equation}
\label{eq:defTS}
T^{*}=\max_{T} \min\{T,\obhexp{T}\}.
\end{equation}
Note that  $T^*$ is finite iff $\sum_{\dout}\CT{\dinp}{\dout}\CT{\tilde{\dinp}}{\dout}>0$ for all $\dinp$, $\tilde{\dinp}$ pairs. Recall that this is also the necessary and sufficient condition of zero-error capacity, $\CX_0$, to be zero.  $\hexp(R)$ on the other hand is infinite for all $R\leq R_{\infty}$ like $\spexp(R)$ where $R_{\infty}$ is given by,
\begin{equation}
\label{eq:Rinf}
  R_{\infty}=-\min\nolimits_{P(\cdot)} \max\nolimits_{\dout} \ln \sum\nolimits_{\dinp :\CT{\dinp}{\dout}>0} P(\dinp)
\end{equation}
Even in the cases where $\hexp(0)$ is finite, $\hexp (0)\geq T^*$. We can use this fact, Lemma \ref{lem:E2=F2}, and Theorem \ref{thm:con1}, or \cite[Theorem 1]{SGB} for that matter, to strengthen  Haroutunian bound at low rates, as follows.
\begin{lemma}
\label{lem:harimp}
For all channels with zero zero-error capacity, $\CX_{0}=0$ and any sequence of $M_{\blx}$, such that $\lim_{\blx \rightarrow \infty} \tfrac{\ln M_{\blx}}{\blx}=R$,
\begin{equation}
\label{eq:har1n}
  \lim_{\blx \rightarrow \infty} \tfrac{-\ln \PE  (M_{\blx}, \blx, 1, 0)}{\blx} \leq  \hexpn (R)
\end{equation}
where 
\begin{equation*}
 \hexpn (R)=\left\{ 
   \begin{array}{l c l}
\hexp(R) & \mbox{if} & R\geq R_{ht}\\     
T^*+\tfrac{\hexp(R_{ht})-T^*}{R_{ht}} R
 & \mbox{if} & R\in [0, R_{ht})     
   \end{array}\right\}
\end{equation*}
and $R_{ht}$ is the unique solution of the equation $T^*=\hexp(R)-R\hexp'(R)$ if it exists, $R_{ht}=\CX$ otherwise.  
\end{lemma}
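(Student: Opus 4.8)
The plan is to prove this as a straight-line bound of the Shannon--Gallager--Berlekamp type, with Theorem \ref{thm:con1} supplying the gluing step that joins two endpoint bounds: Haroutunian's exponent (\ref{eq:har1}) evaluated at a raised rate, and the two-message exponent $T^{*}$ produced by Lemma \ref{lem:E2=F2}. Geometrically, $\hexpn$ is the line through the zero-rate point $(0,T^{*})$ that is tangent to the convex curve $\hexp$. First I would fix a fraction $\gamma\in(0,1)$, set $\blx_1=\lceil\gamma\blx\rceil$, and apply Theorem \ref{thm:con1} in the erasure-free regime $L=1$, $L_1=1$, $\Per=0$, $s=0$, for which (\ref{eq:thmcon1}) collapses to
\begin{equation*}
\PE(M_{\blx},\blx,1,0)\geq\PE(M_{\blx},\blx_1,1,0)\,\PE(2,\blx-\blx_1,1,0).
\end{equation*}
Applying $-\tfrac{1}{\blx}\ln(\cdot)$ turns the product into a convex combination of the two per-letter exponents, with weights $\tfrac{\blx_1}{\blx}\to\gamma$ and $\tfrac{\blx-\blx_1}{\blx}\to1-\gamma$.

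Next I would bound each factor in the limit. Viewed as a length-$\blx_1$, list-$1$ code with $M_{\blx}$ messages, the first code has effective rate $\tfrac{\ln M_{\blx}}{\blx_1}\to\tfrac{R}{\gamma}$, so (\ref{eq:har1}) gives $\limsup_{\blx}\tfrac{-\ln\PE(M_{\blx},\blx_1,1,0)}{\blx_1}\leq\hexp(\tfrac{R}{\gamma})$. For the second factor, every two-message feedback code satisfies, for each $T\geq T_0$, the bound $\max\{\Pe_1,\Pe_2\}\geq\tfrac14 e^{-(\blx-\blx_1)\max\{T,\obhexp{T}\}+4\sqrt{\blx-\blx_1}\ln P_{min}}$ obtained by combining the two alternatives of (\ref{eq:ber1}); since $\obhexp{T}$ is non-increasing while $T$ is increasing, $\min_{T\geq T_0}\max\{T,\obhexp{T}\}=\max_{T}\min\{T,\obhexp{T}\}=T^{*}$ (which is finite because $\CX_0=0$ and automatically $\geq T_0$). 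Hence $\PE(2,\blx-\blx_1,1,0)\geq\tfrac18 e^{-(\blx-\blx_1)T^{*}+4\sqrt{\blx-\blx_1}\ln P_{min}}$, whose per-letter exponent tends to at most $T^{*}$ after the $O(\sqrt{\blx})$ term is divided out. Combining the three estimates gives, for every $\gamma\in(0,1)$,
\begin{equation*}
\limsup_{\blx\to\infty}\tfrac{-\ln\PE(M_{\blx},\blx,1,0)}{\blx}\leq\gamma\,\hexp(\tfrac{R}{\gamma})+(1-\gamma)\,T^{*}.
\end{equation*}

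The final step is to minimize the right-hand side over the free split $\gamma$, together with the option $\gamma=1$ (no control phase) that recovers $\hexp(R)$ directly. Substituting $r=\tfrac{R}{\gamma}$ rewrites the bound as $T^{*}+\tfrac{R}{r}\big(\hexp(r)-T^{*}\big)$ to be minimized over $r\geq R$; as $\tfrac{\hexp(r)-T^{*}}{r}$ is the slope of the chord from $(0,T^{*})$ to $(r,\hexp(r))$, minimizing it selects the tangent from $(0,T^{*})$ to the curve $\hexp$, and differentiation yields the stationarity condition $T^{*}=\hexp(r)-r\,\hexp'(r)$, whose root is exactly $R_{ht}$. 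For $R<R_{ht}$ the tangent point $r=R_{ht}$ is feasible and the bound equals the straight line $T^{*}+\tfrac{\hexp(R_{ht})-T^{*}}{R_{ht}}R$; for $R\geq R_{ht}$ the minimizer is pinned at $r=R$, giving $\hexp(R)$. These two regimes are precisely $\hexpn(R)$. I expect this optimization to be the main obstacle, since it relies on properties of $\hexp$ that must be established separately: that $\hexp$ is convex and non-increasing (so the tangent picture is valid and $(0,T^{*})$ lies below the curve, using $\hexp(0)\geq T^{*}$), that the stationarity equation has a unique root when a tangent point exists, and that when none exists the infimum is approached as $r\to\CX$ with $\hexp(\CX)=0$, matching the convention $R_{ht}=\CX$.
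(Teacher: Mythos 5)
Your proposal is correct and takes essentially the same route as the paper's proof: the same application of Theorem \ref{thm:con1} with $s=0$, $\Per=0$, $L=L_1=1$, Haroutunian's bound (\ref{eq:har1}) on the long first segment, and the two-message exponent $T^{*}$ extracted from Lemma \ref{lem:E2=F2} on the remainder. The only difference is presentational: the paper plugs in the optimal split $\blx_1\approx\tfrac{R}{R_{ht}}\blx$ directly so that the first-phase rate is exactly $R_{ht}$, whereas you leave the split free and recover $R_{ht}$ via the tangent-line optimization, whose stationarity condition $T^{*}=\hexp(r)-r\,\hexp'(r)$ is precisely the equation the paper uses to define $R_{ht}$.
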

Before going into the proof let us note that $\hexpn(R)$ is obtained simply by drawing the tangent line to the curve $(R,\hexp(R))$ from the point $(0,T^{*})$. The curve $(R,\hexpn(R))$ is same as the tangent line, for the rates between $0$ and $R_{ht}$, and it is same as the curve $(R,\hexp(R))$ from then on where $R_{ht}$ is the rate of the point at which the tangent from $(0, T^*)$ meets the curve $(R,\hexp(R))$.
 \begin{proof}
For $R\geq R_{ht}$ this Lemma immediately follows from Haroutunian's result in \cite{har} for $L_1=1$. If $R< R_{ht}$ then we apply Theorem \ref{thm:con1}.
\begin{equation}
\label{eq:e2f2a}
(1-s)  \PE  (M, \blx, L_1, \Per)  \geq \PE  (M, \tilde{\blx}, L_1, s)  \PE \left(L_1+1, \blx-\tilde{\blx}, \tilde{L}, \tfrac{(1-s) \Per}{\PE  (M, \blx, L_1, s)} \right)
\end{equation}
 with\footnote{Or \cite[Theorem 1]{SGB} with  $L_1=1$ and $\blx_1=\lfloor \tfrac{R}{R_{ht}} \rfloor$.} $s=0$, $\Per=0$, $L_1=1$ and $\tilde{\blx}=\lfloor \tfrac{R}{R_{ht}} \rfloor$.   Furthermore, by Lemma \ref{lem:E2=F2}  and the definition of $T^{*}$ given in (\ref{eq:defTS}) we have,
\begin{equation}
\label{eq:e2f2n}
  \PE  (2, \blx-\tilde{\blx}, L, 0)\geq \tfrac{e^{-(\blx-\tilde{\blx}) T^*+ \sqrt{\blx-\tilde{\blx}} \ln  P_{min}}}{8}
\end{equation}
Using equations (\ref{eq:e2f2a}) and  (\ref{eq:e2f2n}) we get,
 \begin{equation*}
   \tfrac{-\ln \PE  (M, \blx, 1, 0)}{\blx} \leq  \tfrac{-\ln \PE  (M, \tilde{\blx}, 1, 0)}{\tilde{\blx}} \tfrac{R}{R_{ht}} 
+\left[1-\tfrac{R}{R_{ht}}+\tfrac{1}{\tilde{\blx}}\right]T^*+  \left( \sqrt{\tfrac{1}{\tilde{\blx}}} \right)\left( \sqrt{\tfrac{R_{ht}-R}{R_{ht}}} \right)\ln  \tfrac{P_{min}}{8}
 \end{equation*}
where $\tfrac{\ln M_{\blx}}{\tilde{\blx}}=R_{ht}$. Lemma follows by simply applying Haroutunian's result to the first terms on the right hand side.
 \end{proof}

\subsection{Generalized Straight Line Bound for Error-Erasure Exponents}\label{sec:converse3}
Theorem \ref{thm:con1} bounds the minimum error probability length $\blx$ block codes  from below in terms of the minimum error probability of length $\blx_1$  and  length $(\blx-\blx_1)$ block codes. The rate and erasure probability of the longer code constraints the rates and erasure probabilities of the shorter ones, but  does not specify them completely.  We use this fact together with the improved  Haroutunian's bound on the error exponents of erasure free  block codes with feedback, i.e. Lemma \ref{lem:harimp}, and the error exponent trade off of the erasure free feedback block codes with two messages, i.e. Lemma \ref{lem:E2=F2}, to obtain a family of  upper bounds on the error exponents of feedback block codes with erasure.
\begin{theorem}
\label{thm:con2}
For any DMC with $\CX_0=0$ rate $R\in [0,\CX]$ and $\EXa \in [0,\hexp(R)]$  and for any $r \in  [r_h(R,\EXa),\CX]$
\begin{equation*}
\TEX \left(R,\EXa \right) \leq \tfrac{R}{r} \hexpn(r) + (1-\tfrac{R}{r}) \obhexp{\tfrac{\EXa- \frac{R}{r} \hexpn (r)}{1-\frac{R}{r}}}
\end{equation*}
where $r_h(R,\EXa),$ is the unique solution of $R\hexpn(r)-r\EXa=0$.
\end{theorem}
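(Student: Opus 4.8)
The plan is to access $\TEX(R,\EXa)$ through the minimum error probability $\PE(M,\blx,1,\Per)$ of the codes in a reliable sequence and to split this quantity with the generalized straight line bound of Theorem \ref{thm:con1}, choosing the split point so that the first block behaves like an erasure free code at rate $r$ and the second block is a two message code whose induced erasure exponent is exactly the argument of $\obhexp{\cdot}$ in the statement. Concretely, fix a reliable sequence $\SC$ with $R_{\SC}\geq R$ and $\EXa_{\SC}\geq\EXa$ whose error exponent approaches $\TEX(R,\EXa)$. For each $\blx$ its code has $M\approx e^{\blx R}$ messages and $\Per\leq e^{-\blx(\EXa-\so{1})}$, so its error probability is at least $\PE(M,\blx,1,\Per)$. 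I would apply Theorem \ref{thm:con1} with $L=L_1=1$, $s=0$ and $\blx_1=\lceil\tfrac{R}{r}\blx\rceil$, which gives
\[
\PE(M,\blx,1,\Per)\geq \PE(M,\blx_1,1,0)\,\PE\!\left(2,\blx-\blx_1,1,\tfrac{\Per}{\PE(M,\blx_1,1,0)}\right).
\]

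The first factor is the minimum error probability of an erasure free, list size one code of length $\blx_1$ whose rate is $\tfrac{\ln M}{\blx_1}\approx r$; since $\CX_0=0$, Lemma \ref{lem:harimp} bounds it below by $e^{-\blx_1(\hexpn(r)+\so{1})}=e^{-\frac{R}{r}\blx(\hexpn(r)+\so{1})}$, which is where the term $\tfrac{R}{r}\hexpn(r)$ enters. For the second factor I would compute the erasure exponent of the length $\blx-\blx_1\approx(1-\tfrac{R}{r})\blx$ two message code: its erasure probability is $\tfrac{\Per}{\PE(M,\blx_1,1,0)}\leq e^{-\blx(\EXa-\frac{R}{r}\hexpn(r)-\so{1})}$, so, measured per its own length, the erasure exponent is at least $T\DEF\tfrac{\EXa-\frac{R}{r}\hexpn(r)}{1-\frac{R}{r}}$. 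The hypothesis $r\geq r_h(R,\EXa)$, i.e.\ $R\hexpn(r)\leq r\EXa$, is precisely what makes $T\geq 0$ and the induced erasure probability admissible; monotonicity of $\tfrac{R}{r}\hexpn(r)$ in $r$ is what pins down $r_h(R,\EXa)$ as the solution of $R\hexpn(r)-r\EXa=0$.

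It then remains to show that a two message, list size one code of length $\blx-\blx_1$ whose erasure exponent is at least $T$ cannot have error probability below $e^{-(\blx-\blx_1)(\obhexp{T}+\so{1})}$. To see this I would reassign the erasure region to one of the two messages, producing an erasure free two message code, and invoke Lemma \ref{lem:E2=F2}: reassigning to the first message leaves that message's error probability unchanged and only inflates the second message's error probability by its conditional erasure mass, whose exponent is at least $T$. Matching the parameter of Lemma \ref{lem:E2=F2} to $T$, the erasure free tradeoff $(T,\obhexp{T})$ then forces one of the two conditional error probabilities to be no smaller than $\tfrac14 e^{-(\blx-\blx_1)(\obhexp{T}+\so{1})}$, hence the same lower bound on the average error probability. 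Substituting both factor bounds into the displayed inequality, taking $-\tfrac{1}{\blx}\ln(\cdot)$ and letting $\blx\to\infty$ yields $\TEX(R,\EXa)\leq\tfrac{R}{r}\hexpn(r)+(1-\tfrac{R}{r})\obhexp{T}$, the asserted bound.

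The hard part will be exactly this last reduction from the two message code with erasures to the erasure free tradeoff of Lemma \ref{lem:E2=F2}. The two delicate points are (i) choosing the parameter in Lemma \ref{lem:E2=F2} so that the conditional erasure mass absorbed into a message does not by itself account for the $\obhexp{T}$ bound, and (ii) deciding which of the two messages should absorb the erasures; near the value of $T$ where $T$ and $\obhexp{T}$ coincide one must exploit the symmetry of $\obhexp{\cdot}$ under interchanging the two codewords to select the favorable reassignment. The remaining steps — verifying $\tfrac{\ln M}{\blx_1}\to r$, checking that $r\in[r_h(R,\EXa),\CX]$ keeps both shorter codes in the regime where Lemmas \ref{lem:harimp} and \ref{lem:E2=F2} apply (in particular $T\geq T_0$, so that $\obhexp{T}$ is finite), and absorbing the $\so{1}$ and $\sqrt{\blx}$ terms from (\ref{eq:ber1}) into the exponent — are routine bookkeeping.
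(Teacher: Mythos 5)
Your treatment of the interior case $r\in(r_h(R,\EXa),\CX]$ is essentially the paper's own proof: the same application of Theorem \ref{thm:con1} with $s=0$, $L=L_1=1$, $\blx_1\approx\tfrac{R}{r}\blx$, the same use of Lemma \ref{lem:harimp} on the first factor, and the same reduction of the two-message factor to Lemma \ref{lem:E2=F2} by merging the erasure region into one message's decoding region (the paper packages this as a proof by contradiction of its claim (\ref{eq:assumptionxt})). Your delicate point (i) is real --- the paper's write-up hides it behind $\so{1}$ notation --- and the standard fix is precisely the contradiction framing: suppose the error exponent exceeds $\obhexp{T}$ by some fixed $\epsilon>0$, run Lemma \ref{lem:E2=F2} at a parameter slightly below $T$, and use monotonicity and continuity of $\obhexp{\cdot}$. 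Note also that besides $T\geq T_0$ you need $T\leq T^*$ (so that $\obhexp{T}\geq T$ and the inflated message's exponent actually lines up with the first alternative of Lemma \ref{lem:E2=F2}); the paper secures this using the convexity of $\hexpn$.

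The genuine gap is the boundary point $r=r_h(R,\EXa)$, which the theorem includes and which your argument cannot reach. At $r=r_h$ the induced exponent is $T=0$, and your chain breaks in two places. First, the induced erasure probability $\Per/\PE(M,\blx_1,1,0)$ is then only bounded by $e^{\blx\,\so{1}}$: it need not decay at all and can even exceed one, in which case the factor $\PE(2,\blx-\blx_1,1,\cdot)$ can be zero and the split inequality yields nothing. Second, even along sequences where the induced erasure probability does vanish (subexponentially), the merge-and-invoke-Lemma-\ref{lem:E2=F2} device fails: after merging, the inflated message's error probability is at least its erasure mass, hence has zero exponent, so whichever alternative of Lemma \ref{lem:E2=F2} is attached to that message is satisfied trivially for every parameter, and no contradiction (hence no lower bound of exponent $\obhexp{0}$) results. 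This is exactly why the paper treats $r=r_h$ separately: it chooses $\blx_1$ adaptively, $\blx_1=\max\{\ell:\PE(M,\ell,1,0)>\Per\ln\tfrac{1}{\Per}\}$, which forces the induced erasure probability to be at most $\tfrac{1}{-\ln\Per}\rightarrow 0$, and it replaces Lemma \ref{lem:E2=F2} by a typicality/change-of-measure argument (its equation (\ref{eq:newconx1e})) to bound the two-message error exponent by $\obhexp{0}$. Alternatively, you could recover $r=r_h$ from your interior-case bound by letting $r\downarrow r_h$, using that $\lim_{T\rightarrow 0^{+}}\obhexp{T}\leq\obhexp{0}$ by monotonicity together with continuity of $\hexpn$ at $r_h$; but your proposal does neither, so the theorem as stated is not fully proved.
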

Theorem \ref{thm:con2} simply states that any line connecting any two points of the curves $(R,\EXa,\EXo)=(R,\hexpn(R),\hexpn(R))$  and  $(R,\EXa,\EXo)=(0,\EXa,  \obhexp{\EXa})$ lies above the surface $(R,\EXa,\EXo)=(R,\EXa,\TEX(R,\EXa))$. The condition $\CX_0=0$ is not merely a technical condition due to the proof technique; as we will see in Section \ref{sec:efree} for channels with $\CX_0>0$,  there are zero-error codes with erasure exponent  as high as $\spexp(R)$ for any rate $R\leq \CX$.

\begin{proof}
We will consider the cases $r \in  (r_h(R,\EXa),\CX]$ and $r =r_h(R,\EXa)$ separately.
\begin{itemize}
\item {$r \in  (r_h(R,\EXa),\CX]$:}  Apply Theorem \ref{thm:con1} with $s=0$, $L=1$, $L_1=1$, take the logarithm of both sides of equation (\ref{eq:thmcon1}) and  divide by $\blx$, 
  \begin{equation}
\label{eq:newconx1}
\tfrac{-\ln\PE  (M, \blx, 1, \Per)}{\blx} \leq \left(\tfrac{\blx_1}{\blx}\right) \tfrac{-\ln \PE (M, \blx_1, 1, 0)}{\blx_1}+\left(1-\tfrac{\blx_1}{\blx}\right)\tfrac{- \ln \PE \left(2, \blx-\blx_1, 1, \frac{\Per}{\PE (M, \blx_1, 1, 0)}\right)}{\blx-\blx_1}. 
      \end{equation}
For any  $(M,\blx,\Per)$ sequence such that  $\liminf_{\blx \rightarrow \infty} \tfrac{\ln M}{\blx}=R$, $\liminf_{\blx \rightarrow \infty} \tfrac{-\ln\Per}{\blx}=\EXa$, if we choose $\blx_1=\lfloor \tfrac{R}{r}\blx \rfloor$ since  $r> r_h(R,\EXa)$ we have,
 \begin{equation*}
\liminf_{\blx \rightarrow \infty}  \tfrac{-1}{\blx-\blx_1} \ln \tfrac{\Per}{\PE (M, \blx_1, 1, 0)}>0.
 \end{equation*}
Furthermore as a result of Lemma \ref{lem:harimp}  and the convexity of $\hexpn(R)$ we have
 \begin{equation*}
\liminf_{\blx \rightarrow \infty}  \tfrac{-1}{\blx-\blx_1} \ln \tfrac{\Per}{\PE (M, \blx_1, 1, 0)}   \leq T^*.
 \end{equation*}
Assume for the moment that for any $T\in (0, T^*]$ and for any sequence of $\Per^{(\blx)}$ such that $\liminf_{\blx \rightarrow \infty} \tfrac{- \ln \Per^{(\blx)}}{\blx}=T$ we have
\begin{equation}
  \label{eq:assumptionxt}
\liminf_{\blx \rightarrow \infty} \tfrac{- \ln \PE(2,\blx,1,\Per^{(\blx)})}{\blx} \leq  
\obhexp{T}.
\end{equation}
Using  equation (\ref{eq:newconx1})  and taking the limit as $\blx$ goes to infinity we get
\begin{equation*}
\TEX \left(R,\EXa \right) \leq \tfrac{R}{r}  \TEX (r) + (1-\tfrac{R}{r})\obhexp{\tfrac{r\EXa- R \TEX (r)} {r-R}}.
\end{equation*}
Then  Theorem \ref{thm:con2}  follows from Lemma \ref{lem:harimp} and the fact that $\obhexp{T}$ is nondecreasing function of $T$,   

In order to establish  equation  (\ref{eq:assumptionxt}); note that if $T_0>0$ and $T\leq T_0$ then $\obhexp{T}=\infty$. Thus equation  (\ref{eq:assumptionxt}) holds trivially. For $T>T_0$ case we  prove  equation (\ref{eq:assumptionxt})  by contradiction. Assume that (\ref{eq:assumptionxt}) is wrong.  Then there exists a block code with  erasures that satisfies 
\begin{align*}
\PCX{\tilde{\DECS}(\tilde{\tilde{\dmes}})}{\tilde{\dmes}}
&\leq e^{-\blx (\obhexp{T}+\so{1})}
& \PCX{\DECS(\Era)}{\tilde{\dmes}}
&\leq e^{-\blx (T+\so{1})}
\\
\PCX{\tilde{\DECS}(\tilde{\dmes})}{\tilde{\tilde{\dmes}}}
&\leq e^{-\blx (\obhexp{T}+\so{1})}
&\PCX{\DECS(\Era)}{\tilde{\tilde{\dmes}}}
&\leq  e^{-\blx (T+\so{1})}
  \end{align*}
Enlarge the decoding region of $\tilde{\dmes}$ by taking its union with the erasure region: 
\begin{equation*}
\DECS'(\tilde{\dmes})=\DECS(\tilde{\dmes}) \cup \DECS(\Era)
\qquad
\DECS'(\tilde{\tilde{\dmes}})=\DECS(\tilde{\tilde{\dmes}}) 
\qquad
\DECS'(\Era)= \emptyset.
\end{equation*}
The resulting code  is an erasure free code with 
\begin{align*}
\PCX{\DECS'(\tilde{\tilde{\dmes}})}{\tilde{\dmes}}
&\leq e^{-\blx (\obhexp{T}+\so{1})}
&\mbox{and}   &&
\PCX{\DECS'(\tilde{\dmes})}{\tilde{\tilde{\dmes}}}
&\leq e^{-\blx (\min\{\obhexp{T},T\}+\so{1})}
  \end{align*}
Since $T_0<T \leq T^*$, $\obhexp{T}\geq T$, this contradicts with Lemma \ref{lem:E2=F2} thus equation (\ref{eq:assumptionxt}) holds.

\item $r=r_h(R,\EXa)$: Apply  Theorem \ref{thm:con1} with $s=0$, $L=1$, $L_1=1$ and $\blx_1=\max\{\ell: \PE  (M, \ell, 1, 0) >\Per \ln \tfrac{1}{\Per} \}$,
\begin{align}
\notag
\PE  (M, \blx, 1, \Per)
&\geq \PE (M, \blx_1, 1, 0)  \PE (2, \blx-\blx_1, 1, \tfrac{\Per}{\PE (M, \blx_1, 1, 0)})\\
&\geq \Per \ln \tfrac{1}{\Per} \PE (2, \blx-\blx_1, 1, \tfrac{1}{-\ln \Per})
\label{eq:newconx1a}
\end{align}
Note that for $\blx_1=\max\{\ell: \PE  (M, \ell, 1, 0) >\Per \ln \tfrac{1}{\Per} \}$,
  \begin{equation*}
%\label{eq:newconx1b}
\liminf_{\blx \rightarrow \infty} \tfrac{\blx_1}{\blx} \TEXn(\tfrac{R \blx}{\blx_1})= \EXa
      \end{equation*}
Then as a result of Lemma \ref{lem:harimp} we have,
  \begin{equation*}
%\label{eq:newconx1c}
\liminf_{\blx \rightarrow \infty} \tfrac{\blx_1}{\blx} \hexpn(\tfrac{R \blx}{\blx_1})\geq \EXa
      \end{equation*}
Then
\begin{equation}
\label{eq:newconx1d}
\liminf_{\blx \rightarrow \infty} \tfrac{\blx_1}{\blx} \geq \tfrac{R}{r_h(R,\EXa)}
\end{equation}
Assume for the moment that for any $\epsilon_{\blx}$ such that $\liminf_{\blx \rightarrow \infty}  \epsilon_{\blx} =0$
\begin{equation}
\label{eq:newconx1e}
\liminf_{\blx \rightarrow \infty} \tfrac{- \ln\PE (2, \blx, 1, \epsilon_{\blx} )}{\blx} \leq \obhexp{0}
\end{equation}
Then taking the logarithm of both sides of the equation (\ref{eq:newconx1a}), dividing both sides by $\blx$, taking the limit as $\blx$ tends to infinity and substituting equations  (\ref{eq:newconx1d}) and (\ref{eq:newconx1e})  we get,
\begin{equation}
\label{eq:newconx1f}
\TEX \left(R,\EXa \right) \leq \EXa+ (1-\tfrac{\EXa}{\hexpn(r_h(R,\EXa))})  \obhexp{0}
\end{equation}
Note that, Theorem \ref{thm:con2} for  $r=r_h(R,\EXa)$  case is equivalent to (\ref{eq:newconx1f}). Identity given in (\ref{eq:newconx1e}) follows from an analysis similar to the one used for establishing 
(\ref{eq:assumptionxt}), in which  but instead of Lemma \ref{lem:E2=F2}, we use a simple typicality argument   like \cite[Corollary 1.2]{CK}.
\end{itemize}
\end{proof}
We have set $L_1=1$ in the proof.   If instead of $L_1=1$ we had chosen $L_1$ to be a subexponential function of $\blx$ which grew to infinity with $\blx$, the logic and  the mechanics of the proof would still work but we would have replaced $\obhexp{T}$  with $\TEX(0,\EXa)$, while keeping the term including $\hexpn(R)$ the same. Since the best known upper bound for $\TEX(0,\EXa)$ is $\obhexp{\EXa}$ for $\EXa\leq T^*$ final result is same for case with feedback.\footnote{In binary symmetric channels  these result can be strengthened  using the value of $\TEXn(0)$, \cite{zig-c}. However those changes will improve the upper bound on error exponent only at low rates and high erasure exponents.} On the other hand for the case without feedback, which is not the main focus of this paper, this  does make a difference. By choosing $L_1$ to be a function of block length that goes to infinity  subexponentially with block length  one can use  Telatar's converse result \cite[Theorem 4.4]{emre} on the error exponent at zero rate and zero erasure exponent without feedback.

%p 19, discussion after the proof regarding the choice of L_1.  May be  better to say, something like "If instead of L_1=1 we had chosen L_1 to be a subexponential function of n which grew to infinity with n, the logic of the mechanics of the proof would still work but we would have replaced ...".  (The way it is written it sounds like L_1=1 is in the class of subexponential divergent functions).

In Figure \ref{fig:exp3},  the upper and lower bounds we have derived for error exponent are plotted as a function of  erasure exponent for a binary symmetric channel with cross over probability $\epsilon=0.25$ at  rate  $R=8.62 \times 10^{-2}$ nats per channel use. Solid lines are lower bounds to the error exponent for block codes  with feedback, which have been established in Section \ref{sec:ach},  and without feedback,  which was   established previously, \cite{for}, \cite{CK}, \cite{emre}.  Dashed lines are the upper bounds obtained using Theorem \ref{thm:con2}.  

Note that all four curves  meet at a point on bottom right,  this is the point that corresponds to the error exponent of block codes at rate $R=8.62 \times 10^{-2}$ nats per channel use and its values are the same with and without feedback since we are on a symmetric channel and our rate is over the critical rate.  Any point to the lower right of this point is achievable both with and without feedback.

\begin{figure}[ht]
\setlength{\unitlength}{1pt}
\hspace{-0.8cm}
\includegraphics[scale=0.50]{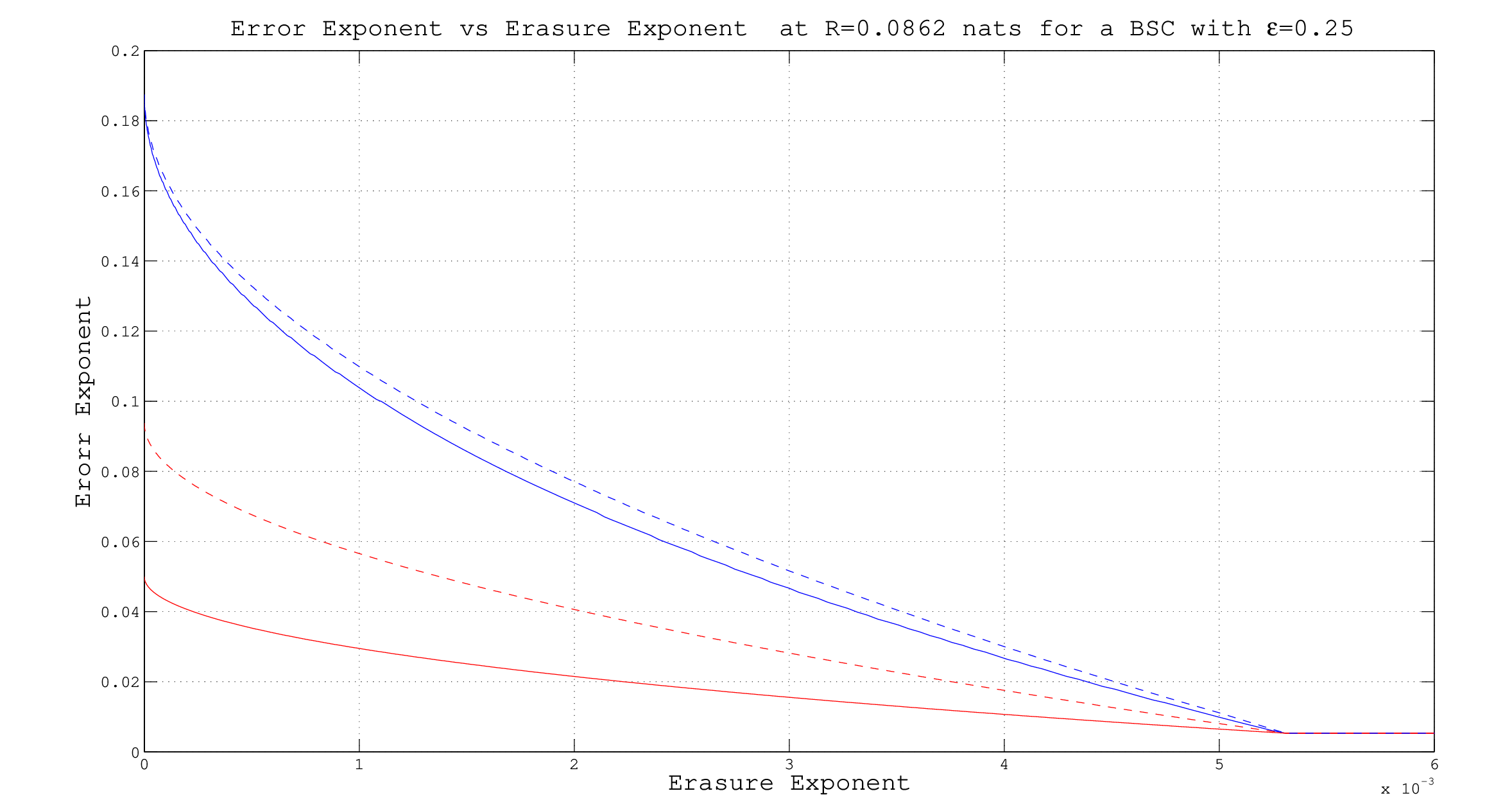}
\caption{Error Exponent vs Erasure Exponent}
\label{fig:exp3}
\end{figure}
The proximity of the inner and the outer bound demonstrated in Figure \ref{fig:exp3} is not  particular to the channel we have chosen. A discussion of the closeness of the  inner and outer bounds are given in Section \ref{sec:disc}.

\section{Erasure Exponent of Error-Free Codes:$ \TEXz  (R)$}\label{sec:efree}
For all DMCs which have one or more zero probability transitions, for all rates below capacity,  $R\leq  \CX$ and for small enough $\EXa$'s, $\EXo(R,\EXa)=\infty$. For such $(R,\EXa)$ pairs, coding scheme we have described in Section \ref{sec:ach}  gives us an error free code. The connection between the erasure exponent of error free  block codes, and error exponent of  block codes with erasures is not confined to this particular encoding scheme. In order to explain those connections in more detail let us first define the error-free codes more formally.
\begin{definition}
  A sequences $\SC_0$ of block codes with feedback is an  error-free reliable sequence  iff 
  \begin{align*}
  \Pe^{(\blx)}&=0  \quad\forall  \blx,  & &\mbox{and}  &
\limsup\nolimits_{\blx \rightarrow \infty} (\Per^{(\blx)}+\tfrac{1}{|\mesS^{(\blx)}|})&=0.
 \end{align*}
\end{definition}
The highest rate achievable for error-free reliable codes is the  zero-error capacity with feedback and erasures, $\CX_{x,0}$.

If all  the transition probabilities are positive i.e. $\min_{\dinp,\dout} \CT{\dinp}{\dout}=\delta>0$, then $\tfrac{\PCX{\dfsy^{\blx}}{\dmes}}{\PCX{\dfsy^{\blx}}{\tilde{\dmes}}}\geq (\tfrac{\delta}{1-\delta})^{n}$ for all $\dmes,\tilde{\dmes} \in {\mesS}$ and $\dfsy^{\blx} \in {\fsyS}^{\blx}$. Thus we have
\begin{equation}
\PCX{\dmes}{\dfsy^{\blx}} \geq (\tfrac{\delta}{1-\delta})^{n}  \PCX{\tilde{\dmes}}{\dfsy^{\blx}} \qquad \forall \dmes,\tilde{\dmes} \in \mesS , \forall \dfsy^{\blx} \in {\fsyS}^{\blx}
\end{equation}
Consequently we have 
$\Pe\geq \tfrac{(e^{\blx R}-1)\delta^{\blx}}{(e^{\blx R}-1)\delta^{\blx}+(1-\delta)^{\blx}}$
and $\CX_{x,0}$ is zero. On the other hand as an immediate consequence of the encoding scheme suggested by Yamamoto and Itoh in \cite{ito},  if there is one or more zero probability transitions,  $\CX_{x,0}$ is equal to channel capacity $\CX$.
\begin{definition}\label{def:zer}
For all DMCs with at least one $(\dinp,\dout)$ pair such that $\CT{\dinp}{\dout}=0$, $\forall R \leq \CX $  erasure exponent of error free block codes with feedback is defined as 
\begin{equation}
  \TEXz (R) \triangleq  \displaystyle{\sup_{ \SC_0 : \substack{ R (\SC_0) \geq R }} \EXa(\SC_0)}.
\end{equation} 
 \end{definition}
For any erasure exponent,  $\EXa$ less than $\TEXz(R)$, there is an error-free reliable sequence, i.e. there is a reliable sequence with infinite error exponent:
\begin{equation}
\label{eq:boundoncexq0}
\EXa \leq \TEXz(R) \Rightarrow  \TEX(R,\EXa)=\infty. 
\end{equation}
More interestingly if $\EXa> \TEXz(R)$ then $\TEX(R,\EXa)<\infty$. In order to see this let $\delta$ be the minimum non-zero transition probability. Then for any $\dmes, \tilde{\dmes} \in \mesS$ and $\dfsy\in \fsyS$ such that $\PCX{\dmes}{\dfsy^{\blx}} \PCX{\tilde{\dmes}}{\dfsy^{\blx}}>0$ we have $\PCX{\dfsy^{\blx}}{\dmes}\geq \delta^{n} \PCX{\dfsy^{\blx}}{\tilde{\dmes}}$. Thus if
 $\PCX{\est \notin \{\mes,\Era \}}{\fsy^{\blx}}\neq 0$ then  $\PCX{\est \notin \{\mes,\Era \}}{\fsy^{\blx}}\geq \tfrac{\delta^{\blx}}{1+\delta^{\blx}}$. Using this we get,
\begin{align}
\notag
\EX{\IND{\PCX{\est \notin \{\mes,\Era \}}{\fsy^{\blx}}\neq 0}}
&\leq  \tfrac{1+\delta^{\blx}}{\delta^{\blx}} \EX{\IND{\PCX{\est \notin \{\mes,\Era \}}{\fsy^{\blx}}\neq 0} \PCX{\est \notin \{\mes,\Era\}}{\fsy^{\blx}}}\\
\label{eq:boundoncexq}
&= (1+\delta^{-\blx}) \PX{\est \notin \{\mes,\Era \}}
\end{align}
Equation (\ref{eq:boundoncexq}) reveals that the total probability of $\dfsy^{\blx}$'s at which receiver chooses to decode to a message rather than declaring an erasure despite the fact that it is not certain about the message is upper bounded by $(1+\delta^{-\blx})$ times the undetected error probability. Thus if we replace the decoder with a new decoder which   declares an erasure unless it is sure about the transmitted message, i.e. unless there is a message with posterior probability one, resulting erasure probability $\Per'$ will be bounded in terms of original error and erasure probabilities as follows,
\begin{equation}
\label{eq:boundoncexq2}
\Per'\leq \Per+ (1+\delta^{-\blx})\Pe.
\end{equation}
Thus by changing the decoding rule, any length $\blx$  code with error probability $\Pe$ and erasure probability $\Per$ can be transformed into  error free code with erasure probability $\Per'$, where $\Per'$ satisfies equation (\ref{eq:boundoncexq2}).  Using this transformation we can change any code with errors-and-erasure decoding into a error free block code with erasures. Evidently we can use the very same  transformation to convert reliable sequences into error-free reliable sequences. Considering  error and erasure exponents of the original  reliable sequences and  erasure  exponents of resulting error free reliable sequences we get, 
\begin{equation}
\label{eq:boundoncexq3}
\TEXz(R)\geq \min \{\EXa,\TEX(R,\EXa)+\ln \delta\} \qquad \forall R,\EXa.
\end{equation}
Consequently,
\begin{equation}
\label{eq:boundoncexq4}
\EXa> \TEXz(R) \Rightarrow \TEX(R,\EXa)\leq \TEXz(R)-\ln \delta<\infty.  
\end{equation}
As a result of equations (\ref{eq:boundoncexq0}) and  (\ref{eq:boundoncexq4}) we can conclude that $\TEX(R,\EXa)=\infty$ if and only if $\EXa\leq \TEXz(R)$. In a sense like the error exponent of erasure free block codes,  $\TEXn(R)$, erasure exponent of the error free bock codes, $\TEXz(R)$,  gives a partial description of $\TEXn(R,\EXa)$.  $\TEXn(R)$ gives the value of error exponents below which erasure exponent can be pushed to infinity and $\TEXz(R)$ gives the value of erasure exponent below which error exponent can be pushed to infinity.

Below the erasure exponent of zero-error codes, $\TEXz(R)$, is investigated separately for two families of channels:   Channels which have a positive zero error capacity, i.e.   $\CX_0>0$ and  Channels which have zero zero-error capacity, i.e.  $\CX_0=0$. 

\subsection{Case  1:  $\CX_0>0$} 
\begin{theorem}
  \label{thm:nonzero}
  For a DMC if $\CX_0>0$ then,
  \begin{equation*}
\hexp(R)\geq \TEXz(R)   \geq   \spexp(R).
  \end{equation*}
\end{theorem}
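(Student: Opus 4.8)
The plan is to prove the two inequalities separately: the upper bound $\hexp(R)\geq \TEXz(R)$ by reducing an error-free code with erasures to an erasure-free code and invoking Haroutunian's bound (\ref{eq:har1}), and the lower bound $\TEXz(R)\geq \spexp(R)$ by a two-phase feedback scheme in which the transmitter, knowing the true message and (through feedback) the communication-phase output, decides whether the receiver can decode, and then uses the positive zero-error capacity to convey that decision and resolve residual ambiguity.

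For the upper bound, I would start from any error-free reliable sequence $\SC_0$ with $R_{\SC_0}\geq R$ whose erasure exponent approaches $\TEXz(R)$; each such code has $\Pe=0$ and erasure probability $\Per$. From it I build an erasure-free code of the same rate by replacing every erasure with a fixed message $m_0$: set $\est'=\est$ when $\est\neq\Era$ and $\est'=m_0$ otherwise. Since the original code makes no errors, the only way the new code can err is when the original erased while $\mes\neq m_0$, so its error probability satisfies $\Pe'\leq \Per$ and hence its error exponent is at least the erasure exponent of $\SC_0$. Haroutunian's bound (\ref{eq:har1}) with $L=1$ states that the error exponent of any erasure-free feedback code of rate $\geq R$ is at most $\hexp(R)$ (using that $\hexp$ is non-increasing), which gives $\TEXz(R)\leq \hexp(R)$. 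This direction uses neither the value of $\CX_0$ nor any achievability machinery; it needs only that error-free codes exist at rate $R$, which holds because $\CX_0>0$ forces a zero transition and hence $\CX_{x,0}=\CX>R$.

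For the lower bound I would use a length-$\blx$ two-phase code. In the communication phase of length $\blx_1=\lceil \ts\blx\rceil$ I use a fixed-composition code of composition $P$ and rate $\tfrac{R}{\ts}$ satisfying Lemma \ref{lem:pac}; through feedback the transmitter learns $\out^{\blx_1}$. I then call the true message \emph{typical} when $\MI{P}{\Ctype{\out^{\blx_1}}{\dinp^{\blx_1}(\mes)}}>\tfrac{R}{\ts}$. In the control phase of length $\blx-\blx_1$ the transmitter uses a positive-rate zero-error code (available since $\CX_0>0$) to tell the receiver, with certainty, either (i) to erase, when the true message is not typical, or (ii) the index of $\mes$ within the list $\mathcal L(\out^{\blx_1})$ of all messages that are typical for $\out^{\blx_1}$, a list the transmitter can compute from $\out^{\blx_1}$. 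Because the resolving index is delivered with zero error, the receiver never errs, so the scheme is error-free, and an erasure occurs exactly when $\MI{P}{\Ctype{\out^{\blx_1}}{\dinp^{\blx_1}(\mes)}}\leq \tfrac{R}{\ts}$. Summing $\PCX{\Vshell{V}{\dinp^{\blx_1}(\mes)}}{\mes}$ over the conditional types $V$ with $\MI{P}{V}\leq \tfrac{R}{\ts}$ via (\ref{eq:pb1})--(\ref{eq:pb2}) bounds this probability by $e^{-\blx_1\spexp(\frac{R}{\ts},P)}$, so the erasure exponent per unit block length is $\ts\,\spexp(\tfrac{R}{\ts},P)$; letting $\ts\to 1$ and maximizing over $P$ yields $\spexp(R)$. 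The decisive point is that, because the transmitter knows the true message, the accept/erase decision rests on the true codeword's conditional type rather than on a receiver decoding contest, which replaces the random-coding exponent $\rexp$ by the sphere-packing exponent $\spexp$.

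The main obstacle is controlling the cost of the zero-error resolution phase: I must ensure that the list $\mathcal L(\out^{\blx_1})$ can be indexed using $\so{\blx}$ (equivalently, a vanishing fraction $\delta\blx$) of channel uses at the positive zero-error rate $\CX_0$, for otherwise $\ts$ cannot be pushed to $1$. I would bound the list with Corollary \ref{cor:cor1}, showing the expected number of competing typical messages is subexponential, and I would fold the rare event $\{|\mathcal L(\out^{\blx_1})|> e^{\delta\blx\CX_0}\}$ into the erasure event; the remaining work is to verify that this overflow event has exponent at least $\spexp(\tfrac{R}{\ts},P)$, so it does not degrade the erasure exponent, after which taking $\delta\to 0$ completes the bound. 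A useful consistency check is that for symmetric channels $\hexp(R)=\spexp(R)$, so the two bounds pinch and determine $\TEXz(R)$ exactly.
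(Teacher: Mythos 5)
Your upper bound $\hexp(R)\geq\TEXz(R)$ is essentially the paper's own argument: force a decision at erasures, note that the resulting erasure-free code has error probability at most the original erasure probability, and invoke Haroutunian's bound. That half is fine.

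The lower bound $\TEXz(R)\geq\spexp(R)$ has a genuine gap, and it sits exactly at the step you flag as ``remaining work'': the tools you name cannot close it. First, Corollary \ref{cor:cor1} (and Lemma \ref{lem:pac}) bound the probability that the output falls in a \emph{union} of shell intersections, $\Vshell{V}{\dinp^{\blx_1}(\mes)}\cap\bigcup_{\tilde{\dmes}\neq\mes}\Vshell{\hat{V}}{\dinp^{\blx_1}(\tilde{\dmes})}$; they do not bound the \emph{number} of competing codewords whose $\hat{V}$-shells contain a given output sequence, which is what $|\mathcal{L}(\out^{\blx_1})|$ is. The sum over competitors exceeds the union count precisely by the multiplicity you are trying to control, so deriving a list-size bound from the packing lemma alone is circular. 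Second, even granting a subexponential expected list size, Markov's inequality only gives $\PX{|\mathcal{L}(\out^{\blx_1})|>e^{\delta\blx\CX_0}}\leq \mathrm{poly}(\blx)\,e^{-\delta\blx\CX_0}$, i.e.\ an overflow exponent of $\delta\CX_0$. Since you must send $\delta\to 0$ (equivalently $\ts\to1$) to make the non-typicality exponent approach $\spexp(R,P)$, folding the overflow event into the erasure event drives the overall erasure exponent to $\min\{(1-\delta)\spexp(\tfrac{R}{1-\delta},P),\,\delta\CX_0\}\to 0$, destroying the bound. Your type-counting estimate of the non-typicality probability via (\ref{eq:pb1})--(\ref{eq:pb2}) is correct; the scheme can in fact be rescued, but only with a much stronger list-size guarantee, e.g.\ a random-coding argument showing that, conditioned on $\out^{\blx_1}$, the number of typical competitors in an i.i.d.\ codebook is binomial with polynomial mean (so that overflow beyond $\mathrm{poly}(\blx)$ is doubly-exponentially rare), or the list-size form of the packing lemma in \cite{CK}. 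Neither is in your write-up.

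This difficulty is exactly what the paper's proof is engineered to avoid. It uses a first-phase (non-feedback) code with a \emph{constant} list size $L=\bigl\lceil \tfrac{\partial}{\partial R}\spexp(R,P_R^*)\bigr\rceil$; by convexity of $\spexp$ in $R$ (the result cited from \cite[p.~196]{CK}), this constant list size already makes the probability that the true message falls outside the list decay with exponent $\spexp(R)$. The second phase then needs to resolve only $L+1$ possibilities with zero error, so it has \emph{constant} length $\blx_2=\lceil\ln(L+1)/\CX_0\rceil$: the time-sharing fraction tends to $1$ automatically, and no overflow event ever arises.
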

\begin{proof}
 If zero-error capacity is strictly greater then zero, i.e.  $\CX_0>0$,  then one can achieve the sphere packing exponent, with zero error probability using  a  two phase scheme. In the first phase  transmitter uses a length $\blx_1=\lceil e^{\blx_1 R}\rceil$ block code without feedback  with a list decoder of size $L=\left\lceil \tfrac{\partial}{\partial R} \spexp(R,P_{R}^{*}) \right\rceil$   where $P_{R}^{*}$ is the input distribution satisfying $\spexp(R)=\spexp(R,P_{R}^{*})$. Note that with this list size the  sphere packing exponent\footnote{Indeed this upper bound on error probability is tight exponentially for block codes without feedback.}  is achievable at rate $R$. Thus correct message is in the list with at least probability  $(1-e^{-\blx_1 \spexp(R)})$, see \cite[Page 196]{CK}.  In the second phase transmitter uses a zero error code, of length\footnote{For some DMCs with $\CX_0>0$ and for some $L$ one may  need more than $ \lceil\tfrac{\ln (L+1)}{\CX_0}\rceil$ time units to convey one of the $(L+1)$ messages without any errors, because $\CX_0$ itself is defined as a limit. But even in those cases we are guaranteed to have a fixed amount of time for that transmissions, which does not change with $\blx_1$. Thus  above argument holds as is even in those cases.} $\blx_2= \lceil\tfrac{\ln (L+1)}{\CX_0}\rceil$  with $L+1$ messages, to tell the receiver whether the correct message is in that list or not, and the correct message itself if it is in the list.   Clearly such a feedback code with two phases is error free, and it has erasures only when there exists an error in the first phase. Thus the erasure probability of the over all code is upper bounded by $e^{-\blx_1 \spexp(R)}$. Note that $\blx_2$ is fixed for a given $R$. Consequently as the length of the first phase, $\blx_1$, grows to infinity the rate and erasure exponent of $(\blx_1+\blx_2)$ long block code converges to the rate and error exponent of $\blx_1$ long code of the first phase, i.e. to $R$ and  $\spexp(R)$. Thus
 \begin{equation*}
   \TEXz(R)   \geq   \spexp(R).
 \end{equation*}
Any  error free  block code with erasures can be forced to decode, at erasures. The resulting fixed length code has  an error probability no larger than the erasure probability of the original code. However  we know that, \cite{har}, error probability of the erasure free block codes with feedback decreases with an exponent no larger than $\hexp(R)$. Thus,
\begin{equation*}
\TEXz(R)   \leq \hexp(R).
  \end{equation*}
This upper bound on the erasure exponent also follows from the converse result we present in the next section, Theorem \ref{thm:con4}.
  \end{proof}
For symmetric channels $\hexp(R)=\spexp(R)$ and Theorem \ref{thm:nonzero} determines the erasure exponent of error-free codes on symmetric channels with non-zero zero-error-capacity completely. 

\subsection{Case 2: $\CX_0=0$}
This case is more involved than the previous one. We first establish an upper bound on $\TEXz(R)$ in terms of the improved version of Haroutunian's bound, i.e. Lemma  \ref{lem:harimp},  and the erasure exponent of error-free codes at zero rate, $\TEXz(0)$.  Then we show that $\TEXz(0)$ is equal to the erasure exponent error-free block codes with two messages, $\XEtwo$,  and bound  $\XEtwo$ from below.

For any $M$, $\blx$ and $L$, $\PE(M,\blx, L,\Per)=0$ for large enough $\Per$. We denote the minimum of such $\Per$'s by $\PERA(M,\blx, L)$. Thus we can write $\XEtwo$ as 
\begin{equation*}
\XEtwo=\liminf_{\blx \rightarrow \infty} \PERA \left( 2, \blx, 1  \right). 
\end{equation*}

\begin{theorem}
\label{thm:con3}
      For any $\blx$, $M$, $L$,  $\blx_1 \leq \blx$ and $L_1$,  minimum erasure probability of fixed length error-free block codes  with feedback, $\PERA(M,\blx, L)$,  satisfies
      \begin{equation}
      \label{eq:thmcon3}
\PERA  (M, \blx, L)  \geq \PE (M, \blx_1, L_1, 0) \PERA \left( L_1+1, \blx-\blx_1, L  \right). 
      \end{equation}
\end{theorem}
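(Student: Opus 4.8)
The plan is to mirror the proof of Theorem~\ref{thm:con1} almost verbatim, the only new ingredient being that the error-freeness of the long code is inherited by the shorter codes obtained after conditioning on the first-phase output. First I would fix an arbitrary length $\blx$ error-free block-code with $M$ equally likely messages and list size $L$, so that $\Pe=0$, and split its output at time $\blx_1$ into $\YA$ and $\YB$. Exactly as in the proof of Theorem~\ref{thm:con1}, conditioning on $\YA$ produces for each first-phase output a length $(\blx-\blx_1)$ block-code with a priori distribution $\apri_{\YA}(\dmes)=\PCX{\dmes}{\YA}$ over the messages, and
\begin{equation*}
\Pe=\sum\nolimits_{\YA} \PX{\YA}\,\Pe(\YA), \qquad \Per=\sum\nolimits_{\YA} \PX{\YA}\,\Per(\YA).
\end{equation*}
Because $\Pe=0$ and each $\Pe(\YA)\geq 0$, every conditional sub-code with $\PX{\YA}>0$ is itself error-free.

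The key step is an error-free counterpart of Lemma~\ref{lem:ero-era}, namely $\Per\geq \XLD{\apri}{K}\,\PERA(K+1,\blx,L)$, which I would read off directly from Lemma~\ref{lem:ero-era} rather than repeat its convexity argument. Applying that lemma to a conditional sub-code with $K=L_1$ gives
\begin{equation*}
0=\Pe(\YA)\geq \XLD{\apri_{\YA}}{L_1}\,\PE\!\left(L_1+1,\blx-\blx_1,L,\tfrac{\Per(\YA)}{\XLD{\apri_{\YA}}{L_1}}\right),
\end{equation*}
so when $\XLD{\apri_{\YA}}{L_1}>0$ the minimum error probability on the right must vanish. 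By the defining property of $\PERA$, the quantity $\PE(L_1+1,\blx-\blx_1,L,s)$ equals zero exactly when $s\geq \PERA(L_1+1,\blx-\blx_1,L)$, whence $\Per(\YA)\geq \XLD{\apri_{\YA}}{L_1}\,\PERA(L_1+1,\blx-\blx_1,L)$; the degenerate case $\XLD{\apri_{\YA}}{L_1}=0$ makes this trivial.

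It then remains to average over $\YA$ and recycle one inequality already established inside Theorem~\ref{thm:con1}. Summing the last bound weighted by $\PX{\YA}$ yields $\Per\geq \PERA(L_1+1,\blx-\blx_1,L)\sum\nolimits_{\YA}\PX{\YA}\XLD{\apri_{\YA}}{L_1}$, and equation~(\ref{eq:feb2}) with $s=0$ gives $\sum\nolimits_{\YA}\PX{\YA}\XLD{\apri_{\YA}}{L_1}\geq \PE(M,\blx_1,L_1,0)$; combining the two and taking the infimum over all length $\blx$ error-free codes proves the theorem. I do not anticipate a genuine obstacle, since the estimates are routine once the decomposition is in place. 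The single point requiring care is logical rather than computational: recognizing that error-freeness descends to every conditional code and that, through Lemma~\ref{lem:ero-era}, the vanishing of $\Pe$ is converted into the erasure inequality $s\geq\PERA$ by the very definition of $\PERA$. This replaces the threshold-on-$\Per$ bookkeeping that appears in Theorem~\ref{thm:con1}.
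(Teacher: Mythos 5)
Your proof is correct, but it takes a different route from the paper's. The paper proves Theorem~\ref{thm:con3} as a one-step corollary of the already-established Theorem~\ref{thm:con1}: it sets $s=0$ and $\Per=\PERA(M,\blx,L)$ in equation~(\ref{eq:thmcon1}), notes that the left-hand side $\PE\left(M,\blx,L,\PERA(M,\blx,L)\right)$ vanishes by the very definition of $\PERA$, splits off the degenerate case $\PE(M,\blx_1,L_1,0)=0$ (where the claim is trivial), and concludes that the second factor must vanish, which by the definition of $\PERA(L_1+1,\blx-\blx_1,L)$ as the minimum such erasure probability gives the inequality after dividing through. You instead re-open the proof of Theorem~\ref{thm:con1}: you redo the conditioning on $\YA$, observe that error-freeness descends to every conditional sub-code, apply Lemma~\ref{lem:ero-era} pointwise in $\YA$ to convert $\Pe(\YA)=0$ into $\Per(\YA)\geq\XLD{\apri_{\YA}}{L_1}\PERA(L_1+1,\blx-\blx_1,L)$, and then average using equation~(\ref{eq:feb2}) with $s=0$. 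Every step of yours checks out, including the per-$\YA$ degenerate case $\XLD{\apri_{\YA}}{L_1}=0$ and the final passage from an arbitrary error-free code to the infimum defining $\PERA(M,\blx,L)$. What the paper's route buys is economy: no duplication of the decomposition or of equation~(\ref{eq:feb2}), just one global case split plus the defining property $\PE\left(M,\blx,L,\PERA(M,\blx,L)\right)=0$. What yours buys is transparency: it exhibits exactly where zero error probability forces the erasure bound (at each conditional sub-code separately), at the cost of repeating machinery that Theorem~\ref{thm:con1} already packages.
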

Like  Theorem \ref{thm:con1},  Theorem \ref{thm:con3} is correct both with and without feedback. Although 
$\PERA$'s and $\PE$  will be different in each case, the relationship between them given in equation (\ref{eq:thmcon3}) holds in both cases.
\begin{proof}
If $\PE (M, \blx_1, L_1, 0)=0$ theorem holds trivially. Thus we assume henceforth that  $\PE (M, \blx_1, L_1, 0)>0$. Using Theorem \ref{thm:con1} with $\Per=\PERA  (M, \blx, L)$ we get
\begin{equation*}
\PE \left(M,\blx,L,\PERA  (M, \blx, L) \right)\geq \PE (M, \blx_1, L_1, 0) \PE \left(L_1+1, (\blx-\blx_1), L, \tfrac{\PERA  (M, \blx, L)}{\PE (M, \blx_1, L_1, 0) } \right). 
\end{equation*}
Since $\PE \left(M,\blx,L,\PERA  (M, \blx, L) \right)=0$ and $\PE (M, \blx_1, L_1, 0)>0$ we have,
\begin{equation*}
  \PE \left(L_1+1, (\blx-\blx_1), L, \tfrac{\PERA  (M, \blx, L)}{\PE (M, \blx_1, L_1, 0)} \right)=0.
\end{equation*}
Thus 
\begin{equation*}
   \tfrac{\PERA  (M, \blx, L)}{\PE (M, \blx_1, L_1, 0)} \geq  \PERA \left(L_1+1, (\blx-\blx_1), L \right).
\end{equation*}  
\end{proof}
As we have done in the errors-and-erasures case we can convert this into a bound on exponents. If we use
the improved version of Haroutunian's bound, i.e. Lemma  \ref{lem:harimp},  as an upper bound on the error exponent of erasure free block codes we get the following.
\begin{theorem}
\label{thm:con4}
For any rate $R \geq 0$ for any $\ts \in \left[  \tfrac{R}{\CX},1 \right]$
\begin{equation*}
\TEXz \left(R \right) \leq \ts \hexpn \left(\tfrac{R}{\ts} \right) + (1-\tst) \TEXz \left(0 \right)
\end{equation*}
\end{theorem}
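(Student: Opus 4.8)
The plan is to treat Theorem~\ref{thm:con3} as the gluing device, exactly as Theorem~\ref{thm:con1} was used to prove Theorem~\ref{thm:con2}, feeding into it the improved Haroutunian bound (Lemma~\ref{lem:harimp}) for the first phase and the two-message erasure exponent for the second. First I would reduce the statement to a claim about the minimum erasure probability of error-free codes: since $\PERA(M,\blx,1)$ is the smallest erasure probability of any error-free length-$\blx$ code with $M$ messages, every error-free reliable sequence $\SC_0$ with $R(\SC_0)\geq R$ has $\Per^{(\blx)}\geq\PERA(M_\blx,\blx,1)$, hence $\EXa(\SC_0)\leq\liminf_{\blx\to\infty}\tfrac{-\ln\PERA(M_\blx,\blx,1)}{\blx}$ with $\liminf_\blx\tfrac{\ln M_\blx}{\blx}\geq R$. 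It therefore suffices to upper bound this last exponent. I would then apply Theorem~\ref{thm:con3} with $L=1$, $L_1=1$ and $\blx_1=\lfloor\ts\blx\rfloor$, take $-\tfrac1\blx\ln(\cdot)$ of both sides of (\ref{eq:thmcon3}), and split the right-hand side into a phase-one term $\tfrac{\blx_1}{\blx}\cdot\tfrac{-\ln\PE(M_\blx,\blx_1,1,0)}{\blx_1}$ and a phase-two term $\tfrac{\blx-\blx_1}{\blx}\cdot\tfrac{-\ln\PERA(2,\blx-\blx_1,1)}{\blx-\blx_1}$.

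For the phase-one term, note that with $\blx_1=\lfloor\ts\blx\rfloor$ the effective rate of the first-phase code is $\tfrac{\ln M_\blx}{\blx_1}\to\tfrac{R}{\ts}$, and the hypothesis $\ts\geq\tfrac{R}{\CX}$ guarantees $\tfrac{R}{\ts}\leq\CX$. Since we are in the case $\CX_0=0$, Lemma~\ref{lem:harimp} applies and yields $\limsup_\blx\tfrac{-\ln\PE(M_\blx,\blx_1,1,0)}{\blx_1}\leq\hexpn(\tfrac{R}{\ts})$; together with $\tfrac{\blx_1}{\blx}\to\ts$ this bounds the phase-one contribution by $\ts\,\hexpn(\tfrac{R}{\ts})$ in the limit (if $R(\SC_0)>R$, monotonicity of $\hexpn$ only improves the bound). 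For the phase-two term, the factor $\PERA(2,\blx-\blx_1,1)$ is the minimum erasure probability of a two-message error-free code, whose exponent is by definition $\XEtwo$, and I would invoke the identity $\XEtwo=\TEXz(0)$ established in this section; with $\tfrac{\blx-\blx_1}{\blx}\to1-\ts$ this contributes $(1-\ts)\TEXz(0)$. Adding the two pieces gives the claimed bound $\ts\,\hexpn(\tfrac{R}{\ts})+(1-\ts)\TEXz(0)$.

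The step that requires genuine care --- and the one I expect to be the main obstacle --- is the passage to the limit, because $\TEXz(0)=\XEtwo$ is defined through a $\liminf$ over block lengths rather than a true limit, so I cannot simply add a $\limsup$ of the first term to a $\liminf$ of the second along the full sequence. Instead I would argue along a subsequence: pick $m_k\to\infty$ realizing $\liminf_m\tfrac{-\ln\PERA(2,m,1)}{m}=\TEXz(0)$, and observe that $\blx-\lfloor\ts\blx\rfloor$ increases by $0$ or $1$ when $\blx\to\blx+1$, hence attains every sufficiently large integer; one can therefore choose $\blx(k)$ with $\blx(k)-\lfloor\ts\,\blx(k)\rfloor=m_k$. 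Along this subsequence the phase-two term tends to $(1-\ts)\TEXz(0)$ while the phase-one term stays $\leq\ts\,\hexpn(\tfrac{R}{\ts})+\epsilon$ (a bound valid for all large $\blx$ by the $\limsup$ estimate), so $\liminf_\blx$ of the left-hand exponent is at most $\ts\,\hexpn(\tfrac{R}{\ts})+(1-\ts)\TEXz(0)+\epsilon$; letting $\epsilon\downarrow0$ finishes the argument. A secondary bookkeeping point is confirming that the improved exponent $\hexpn$, rather than $\hexp$, is legitimately available: this is precisely why phase one is run with list size $L_1=1$, so that Lemma~\ref{lem:harimp} applies verbatim.
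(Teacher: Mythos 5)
Your proposal is correct and follows essentially the same route the paper intends: Theorem \ref{thm:con3} (with $L=L_1=1$ and $\blx_1=\lfloor \ts\blx\rfloor$) as the gluing device, Lemma \ref{lem:harimp} for the erasure-free first phase, and Lemma \ref{lem:epfl} (i.e.\ $\TEXz(0)=\XEtwo$) for the two-message second phase --- the paper states Theorem \ref{thm:con4} without writing out these details, saying only that it follows ``as we have done in the errors and erasures case.'' Your subsequence argument handling the $\liminf$ in the definition of $\XEtwo$, and the observation that monotonicity of $\hexpn$ covers the case $R(\SC_0)>R$, are precisely the bookkeeping the paper leaves implicit.
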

Now let us focus on  the value of erasure exponent at zero rate:
\begin{lemma}
\label{lem:epfl}
For the channels which has zero zero-error capacity, i.e. $\CX_0=0$, erasure exponent of error free block codes at zero rate $\TEXz(0)$ is equal to the erasure exponent of error free block codes with two messages $\XEtwo$.
\end{lemma}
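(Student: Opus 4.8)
The plan is to establish the two inequalities $\TEXz(0)\le\XEtwo$ and $\TEXz(0)\ge\XEtwo$ separately. The first is a converse-type reduction that I expect to follow cleanly from Theorem \ref{thm:con3} together with the improved Haroutunian bound (Lemma \ref{lem:harimp}); the second is an achievability statement that I expect to be the main obstacle.

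For $\TEXz(0)\le\XEtwo$, I would take an arbitrary error-free reliable sequence $\SC_0$ with $M_\blx=|\mesS^{(\blx)}|\to\infty$, so that $\Per^{(\blx)}\ge\PERA(M_\blx,\blx,1)$, and apply Theorem \ref{thm:con3} with $L=L_1=1$ (hence $L_1+1=2$) and a split $\blx_1$ chosen sublinear but with vanishing first-block rate, e.g. $\blx_1=\lceil\sqrt{\blx\ln M_\blx}\rceil$, for which both $\blx_1/\blx\to0$ and $(\ln M_\blx)/\blx_1\to0$. This yields
\[
\PERA(M_\blx,\blx,1)\ge\PE(M_\blx,\blx_1,1,0)\,\PERA(2,\blx-\blx_1,1).
\]
Taking $-\tfrac1\blx\ln(\cdot)$, the first factor contributes $\tfrac{\blx_1}{\blx}\cdot\tfrac{-\ln\PE(M_\blx,\blx_1,1,0)}{\blx_1}$; by Lemma \ref{lem:harimp} the inner quotient is at most $\hexpn\!\big((\ln M_\blx)/\blx_1\big)+\so{1}\to\hexpn(0)=T^{*}<\infty$ (finiteness of $T^{*}$ uses $\CX_0=0$), and since $\blx_1/\blx\to0$ this term vanishes. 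The second factor contributes $\tfrac{\blx-\blx_1}{\blx}\cdot\tfrac{-\ln\PERA(2,\blx-\blx_1,1)}{\blx-\blx_1}$, whose $\liminf$ equals $\XEtwo$ because $\blx-\blx_1$ runs through a tail of the integers. Hence $\EXa(\SC_0)\le\XEtwo$, and the supremum over $\SC_0$ gives $\TEXz(0)\le\XEtwo$; positive-rate sequences are handled the same way, the first factor then contributing zero via the strong converse.

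For $\TEXz(0)\ge\XEtwo$ I must exhibit, for some $M_\blx\to\infty$ (which may grow as slowly as desired, keeping the rate $0$), error-free codes whose erasure exponent approaches $\XEtwo$. The natural building block is an optimal two-message error-free code, and the hard part is that the obvious combinations fail: $\lceil\log_2 M_\blx\rceil$ independent binary confirmations, or pairwise tests applied sequentially, force each of the $\Theta(\log M_\blx)$ (resp. $\Theta(M_\blx)$) distinctions into a block of length $\Theta(\blx/\log M_\blx)$ and drive the exponent to $0$, while a single communicate-then-confirm (Yamamoto--Itoh) pass caps the exponent at $\tfrac{T^{*}\XEtwo}{T^{*}+\XEtwo}<\XEtwo$ since first-phase errors become erasures at the rate set by the confirmation length. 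The approach I would take is instead a single length-$\blx$ block with a support/likelihood-based error-free decoder (decode $\dmes$ only when $\out^{\blx}$ is incompatible with every competing codeword, and erase otherwise), which is automatically error-free, so that the erasure event for the sent message is exactly that some competitor remains compatible; feedback is then used to steer the output so as to reject all surviving competitors at once rather than one at a time. Because the messages have rate $0$, the union bound over the $M_\blx-1$ competitors costs only $\ln M_\blx=\so{\blx}$ and is exponentially negligible, leaving the binding quantity the pairwise rejection exponent $\XEtwo$. Making precise the claim that a single adaptive per-message strategy attains the two-message exponent against the entire competitor set is the crux; this is exactly the content of the lower bound on $\XEtwo$ developed next, and since that construction produces families with $M_\blx\to\infty$ it delivers $\TEXz(0)\ge\XEtwo$ directly and closes the equality.
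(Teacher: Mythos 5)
The gap is in the achievability half, $\TEXz(0)\ge\XEtwo$, and it sits exactly where you yourself place ``the crux.'' The claim that feedback lets one strategy ``reject all surviving competitors at once'' at the pairwise exponent is never established, and deferring it to ``the lower bound on $\XEtwo$ developed next'' is circular: any bound on $\XEtwo$ is a statement about \emph{two} messages, whereas the entire content of this lemma is the passage from two messages to $M_\blx\to\infty$ messages (moreover, the bounds on $\PERA(2,\blx,1)$ the paper develops in Section \ref{subsec:xetwo}, Lemma \ref{lem:zerobound}, are converse bounds on erasure probability, not constructions, so there is nothing there to inherit). The union-bound picture is also substantively misleading. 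In the $z$-channel example of Section \ref{subsec:xetwo}, ruling out a competitor $\tilde{\dmes}$ requires slots in which the true message uses the full-support input while $\tilde{\dmes}$ uses the other input; attaining the two-message exponent $\tfrac{1}{2}\ln\tfrac{1}{q}$ against every competitor would require roughly half the slots to be of this type for \emph{every ordered pair}, which already for three messages would force every two codewords to differ in nearly every slot --- impossible with a binary input alphabet. So no fixed assignment of per-message input sequences plus a union bound can work; the adaptivity has to do real work, and your proposal does not say how. The paper's mechanism, which is what is missing, is repetition with retransmission: fix $\epsilon$ and a length-$\ell(\epsilon)$ error-free two-message code with erasure probability at most $\epsilon$; to send $r$ bits in $\blx=k\ell(\epsilon)$ slots, transmit the bits one at a time, repeating a bit whenever it is erased. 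Every bit is decoded error-free, so the code is error-free, and a message erasure requires more than $k-r$ erasures among $k$ independent trials, whence $\Per\le e^{-k\DIV{1-\frac{r}{k}}{\epsilon}}$; letting $r/k\to0$ (so $M_\blx=2^{r}\to\infty$ at rate zero) gives exponent tending to $\tfrac{-\ln\epsilon}{\ell(\epsilon)}$, which approaches $\XEtwo$ along good two-message codes. Note this is precisely a ``sequential pairwise'' scheme of the kind you dismissed: your dismissal assumed each bit is confined to a dedicated sub-block of length $\Theta(\blx/\log M_\blx)$, whereas retransmission lets every bit draw on the whole remaining budget, and the loss is controlled by a binomial large-deviations bound, not by shortening blocks.

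Your converse half, $\TEXz(0)\le\XEtwo$, is workable but heavier than necessary and contains one unjustified step. The paper applies Theorem \ref{thm:con3} with $\blx_1=0$ and $L=L_1=1$: a zero-length erasure-free code with $M$ messages has error probability $\PE(M,0,1,0)=\tfrac{M-1}{M}$, so $\PERA(M,\blx,1)\ge\tfrac{M-1}{M}\PERA(2,\blx,1)\ge\tfrac12\PERA(2,\blx,1)$ for every $M\ge2$ and every $\blx$; the inequality $\TEXz(0)\le\XEtwo$ follows immediately, with no appeal to Lemma \ref{lem:harimp}, no finiteness of $T^{*}$, and no rate hypothesis (positive-rate sequences are covered by the same one-line bound). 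In your version, the assertion that $\liminf_{\blx}\tfrac{-\ln\PERA(2,\blx-\blx_1,1)}{\blx-\blx_1}$ equals $\XEtwo$ ``because $\blx-\blx_1$ runs through a tail of the integers'' does not hold as stated: a liminf along a subsequence can only exceed the full liminf, and the range of $\blx\mapsto\blx-\blx_1(\blx)$ need not be co-finite, since $M_\blx$ (hence your $\blx_1$) may oscillate within a reliable sequence. This is repairable --- $-\ln\PERA(2,\cdot,1)$ is superadditive (on an erasure, rerun the code on the remaining slots), so by Fekete's lemma the limit exists --- but that repair is itself the repetition-on-erasure idea above; the choice $\blx_1=0$ avoids the issue entirely.
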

Note that unlike the two message case, $\XEtwo$, in the zero rate case $\TEXz(0)$ the number of messages are increasing with block length to infinity, thus we can not claim $\XEtwo=\TEXz(0)$  just as a result of their definitions.
\begin{proof}
If we write Theorem \ref{thm:con3} for $L=1$, $\blx_1=0$ and $L_1=1$
\begin{align}
\notag
\PERA  (M, \blx, 1)
&\geq  \PE  (M, 0, 1)  \PERA  (2, \blx, 1)\\
\notag
&=\tfrac{M-1}{M} \PERA  (2, \blx, 1) \qquad \forall M,\blx
\end{align}
Thus as an immediate result of the definitions of $\TEXz(0)$ and $\XEtwo$, we have $\TEXz(0)\leq \XEtwo$. 

In order to prove the equality one needs to prove $\TEXz(0) \geq \XEtwo$.  For doing that let  us assume that  it is possible to  send one bit with erasure probability $\epsilon$ with a block code of length  $\ell(\epsilon)$:
\begin{equation}
\epsilon \geq   \PERA  (2, \ell(\epsilon), 1)
\end{equation}
One can use this code to send $r$ bits, by repeating each bit whenever there exists an erasure. If the block length is $\blx=k \ell(\epsilon)$ then a message erasure occurs only when the number of bit erasures in $k$ trials is more then $k-r$. Let $\#e$ denote the number of erasures out of $k$ trials then
 \begin{align*}
\PX{\#e=l}&=\tfrac{k!}{(k-l)! l!} (1-\epsilon)^{k-l} \epsilon^l  &\mbox{and}&&
\Per&= \sum\nolimits_{l=k-r+1}^{k}  \PX{\#e=l}.
 \end{align*}
Thus
 \begin{align*}
\Per
&= \sum\nolimits_{l=k-r+1}^{k}  \tfrac{k!}{l! (k-l)!} (1-\epsilon)^{k-l}\epsilon^{l}\\
&= \sum\nolimits_{l=k-r+1}^{k}  \tfrac{k!}{l! (k-l)!} \left(\tfrac{l}{k}\right)^{l}  \left(1-\tfrac{l}{k}\right)^{k-l} e^{- [ l \ln \frac{l/k}{\epsilon} + (k-l) \ln\frac{1-l/k}{1-\epsilon}]}\\
&= \sum\nolimits_{l=k-r+1}^{k}  \tfrac{k!}{l! (k-l)!} \left(\tfrac{l}{k}\right)^{l}  \left(1-\tfrac{l}{k}\right)^{k-l}  e^{- k\DIV{\frac{l}{k}}{\epsilon}}.
 \end{align*}
Then for any $ \epsilon\leq 1-\tfrac{r}{k}$, we have 
\[\Per \leq  e^{- k \DIV{1-\frac{r}{k}}{\epsilon}}.\]
Evidently $\Per \geq \PERA(2^r, \blx, 1)$ for $\blx=k\ell(\epsilon)$. Thus, 
\begin{equation*}
\tfrac{-\ln \PERA(2^r, \blx, 1) }{\blx} \geq \tfrac{\DIV{1-\frac{r}{k}}{\epsilon}}{\ell(\epsilon)}.
\end{equation*}
Then $\tfrac{-\ln \epsilon}{\ell(\epsilon)}$ is an achievable erasure exponent for any sequence of $(r,k)$'s such that $\lim_{k \rightarrow \infty} \frac{r}{k}=0$, i.e. $\TEXz(0)   \geq \tfrac{-\ln \epsilon}{\ell(\epsilon)}$. Thus any exponent achievable for two message case is achievable for zero rate case: $\TEXz(0) \geq \XEtwo$.
\end{proof}

As a result of Lemma \ref{lem:zerobound}  which is presented in the next section we know that
\begin{equation*}
\PERA(2, \blx, 1)  \geq   (\sup_{s \in (0,.5)}  \beta(s))^{\blx}  \quad \mbox{where}  \quad \beta(s)=\min\nolimits_{\dinp, \tilde{\dinp}}  \sum\nolimits_{\dout}  \CT{\dinp}{\dout}^{(1-s)} \CT{\tilde{\dinp}}{\dout}^{s}.
\end{equation*}
Thus as a result of Lemma \ref{lem:epfl} we have
\begin{equation*}
  \TEXz(0) =\XEtwo \leq  -\ln \sup_{s \in (0,0.5)}  \beta(s).
\end{equation*}

\subsection{Lower  Bounds on $\PERA(2, \blx, 1)$}\label{subsec:xetwo}
Suppose  at time $t$ the correct message, $\mes$, is assigned to the input letter $\dinp$ and  the other message is assigned to  the input letter  $\tilde{\dinp}$, then the receiver can not to rule out the  incorrect message at time $t$ with probability $\sum_{y: \CT{\tilde{\dinp}}{\dout}>0} \CT{\dinp}{\dout}$. Using this fact one can prove that,
\begin{equation}
\label{eq:boundone}
\PERA(2, \blx, 1)  \geq \left(\min\nolimits_{\dinp,\tilde{\dinp}}  \sum\nolimits_{\dout: \CT{\tilde{\dinp}}{\dout}>0} \CT{\dinp}{\dout} \right)^{\blx}.
\end{equation}
Now let us consider channels whose transition probability matrix  $\CTM$ is of the form
\begin{equation}
\label{eq:exzchan}
  \CTM=\left[
 \begin{matrix}
    1-q         &   q  \\
    0           &   1           
   \end{matrix} \right].
\end{equation}
We denote the output letter  that can be reached from both of the input letters by $\tilde{\dout}$.  For the moment we consider only the  deterministic encoding schemes, i.e. $\fsy_{t}=\out_{t}$. Note that in the optimal encoding scheme,
\begin{equation*}
 \ENC_t(1,\dout^{t-1}) \neq  \ENC_t(2,\dout^{t-1}) 
 \qquad \forall t, ~~ \forall \dout^{t-1} \in \outS^{t-1}
\end{equation*}
Then 
\begin{equation}
\label{eq:optenccon}
 \PCX{\out_t=\tilde{\dout}}{\mes = 1, \dout^{t-1}}
 \PCX{\out_t=\tilde{\dout}}{\mes = 2, \dout^{t-1}}= q   
 \qquad \forall t, ~~ \forall \dout^{t-1} \in \outS^{t-1}
 \end{equation}
Furthermore if $\out^{\blx}=\tilde{\dout}^{\blx}$ then the receiver can not decode without errors, i.e. it has to declare an erasure.  Then,
\begin{align}
\PERA(2, \blx, 1)
\notag
&\mathop{\geq}^{} \tfrac{1}{2} (\PCX{\out^{\blx}=\tilde{\dout}\tilde{\dout} \ldots \tilde{\dout}}{{\mes} = 1}+\PCX{\out^{\blx}=\tilde{\dout}\tilde{\dout} \ldots \tilde{\dout}}{{\mes} = 2})\\
\notag
&\mathop{\geq}^{(a)} \sqrt{\PCX{\out^{\blx}=\tilde{\dout}\tilde{\dout} \ldots \tilde{\dout}}{{\mes} = 1}\PCX{\out^{\blx}=\tilde{\dout}\tilde{\dout} \ldots \tilde{\dout}}{{\mes} = 2}}\\
\label{eq:boundtwo-i}
&\mathop{=}^{(b)} q^{\frac{\blx}{2}}
\end{align}
where $(a)$ hods because arithmetic mean is larger than the geometric mean and $(b)$  follows from  the equation (\ref{eq:optenccon}).

For the $\CTM$  given in (\ref{eq:exzchan}) the bound given in (\ref{eq:boundtwo-i})   is very tight. If the  encoder assigns the first message to the input letter that always leads to $\tilde{\dout}$ and the second message to the other input letter  in first  $\lfloor \tfrac{\blx}{2} \rfloor$ time instances, and  does the flipped  assignment in the last   $\lceil \tfrac{\blx}{2} \rceil$ time instances, then an erasure happens with a probability less than $q^{\lfloor \tfrac{\blx}{2} \rfloor}$, i.e.  $\PERA(2, \blx, 1)\leq q^{\lfloor \tfrac{\blx}{2} \rfloor}$.

On the other hand for the $\CTM$ given in (\ref{eq:exzchan}), bound given in  equation (\ref{eq:boundone}) ensures only  $\PERA(2, \blx, 1)\geq q^{\blx}$, rather than $\PERA(2, \blx, 1)\geq q^{\lfloor \tfrac{\blx}{2} \rfloor}$.  Thus for the channel given (\ref{eq:exzchan}) the bound given in equation (\ref{eq:boundtwo-i}) is tighter than the one in  equation (\ref{eq:boundone}).

The  idea used in deriving the bound given in  equation (\ref{eq:boundtwo-i}) for this particular $\CTM$ can be applied to a general DMC to prove the following lower bound,
\begin{equation}
  \label{eq:boundtwo}
 \PERA(2, \blx, 1)   \geq  \left(\min\nolimits_{\dinp, \tilde{\dinp}} \sum\nolimits_{\dout}  \sqrt{ \CT{\dinp}{\dout} \CT{\tilde{\dinp}}{\dout}} \right)^{\blx}.
  \end{equation}
The  bound given in equation (\ref{eq:boundtwo}) is decaying exponentially in $\blx$, even when all entries of the $W$ are positive, however for those channels  the bound given in (\ref{eq:boundone}) implies $\PERA(2, \blx, 1)\geq 1$. Thus the bound given in (\ref{eq:boundtwo})  can not be  superior to the bound given in equation (\ref{eq:boundone}) in general. The  following bound implies  bounds given in both equation (\ref{eq:boundone}) and equation (\ref{eq:boundtwo}). Furthermore for certain channels it is strictly better than both.
\begin{lemma}
\label{lem:zerobound}
Erasure probability of all  error free block codes with  two messages is lower bounded as
  \begin{equation}
\label{eq:zerobound}
\PERA(2, \blx, 1)  \geq  (\sup_{s \in (0,.5)}  \beta(s))^{\blx}  \quad \mbox{where}  \quad \beta(s)=\min\nolimits_{\dinp, \tilde{\dinp}}  \sum\nolimits_{\dout}  \CT{\dinp}{\dout}^{(1-s)} \CT{\tilde{\dinp}}{\dout}^{s}
  \end{equation}
\end{lemma}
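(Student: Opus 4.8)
\emph{Proof proposal.} The plan is to reduce the claim to a Chernoff-type sum that telescopes cleanly under the feedback factorization. Fix any length-$\blx$ error-free block-code with two equiprobable messages and let $P_i(\dout^{\blx})\triangleq\PCX{\dout^{\blx}}{\mes=i}$, $i\in\{1,2\}$, be the two output distributions it induces. Because the forward channel is memoryless and the feedback encoder chooses the time-$t$ input for message $i$ as a function $\ENC_t(i,\dout^{t-1})$ of the past outputs, each $P_i$ still factorizes as $P_i(\dout^{\blx})=\prod_{t=1}^{\blx}\CT{\ENC_t(i,\dout^{t-1})}{\dout_t}$; this factorization is the only property of the code I will use, and it is precisely what makes the feedback harmless. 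The first step is to note that being error-free forces the decoder to erase on the whole ``confusable'' set $A\triangleq\{\dout^{\blx}:P_1(\dout^{\blx})>0 \text{ and } P_2(\dout^{\blx})>0\}$: committing to message $1$ at any point of $A$ would err under message $2$ with positive probability, and symmetrically, so $A\subseteq\DECS(\Era)$ (for a randomized decoder the same reasoning forces the erasure action with probability one on $A$). Hence
\begin{equation*}
\Per\geq\tfrac{1}{2}\bigl(\PCX{A}{\mes=1}+\PCX{A}{\mes=2}\bigr)=\tfrac{1}{2}\sum_{\dout^{\blx}\in A}\bigl(P_1(\dout^{\blx})+P_2(\dout^{\blx})\bigr).
\end{equation*}

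Next I would tilt. For any $s\in(0,1)$ the weighted arithmetic--geometric mean inequality gives $P_1+P_2\geq(1-s)P_1+sP_2\geq P_1^{1-s}P_2^{s}$ pointwise, and since the tilted product $P_1^{1-s}P_2^{s}$ has both exponents strictly positive it vanishes off $A$; therefore $\sum_{\dout^{\blx}\in A}P_1^{1-s}P_2^{s}=\sum_{\dout^{\blx}}P_1^{1-s}P_2^{s}$. Combining these facts reduces the problem to lower bounding the sum
\begin{equation*}
Z_{\blx}(s)\triangleq\sum_{\dout^{\blx}}P_1(\dout^{\blx})^{1-s}P_2(\dout^{\blx})^{s},\qquad\text{so that}\qquad\Per\geq\tfrac{1}{2}Z_{\blx}(s).
\end{equation*}

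The core step is a one-symbol recursion for $Z_{\blx}(s)$. Peeling off the last coordinate with the factorization above and abbreviating $x_1=\ENC_{\blx}(1,\dout^{\blx-1})$, $x_2=\ENC_{\blx}(2,\dout^{\blx-1})$, the inner sum over $\dout_{\blx}$ equals $\sum_{\dout_{\blx}}\CT{x_1}{\dout_{\blx}}^{1-s}\CT{x_2}{\dout_{\blx}}^{s}$, which is at least $\beta(s)$ by the very definition of $\beta(s)$ as a minimum over \emph{all} ordered input pairs. Crucially this bound is uniform in $\dout^{\blx-1}$, so even though $x_1,x_2$ depend on the observed history the factor pulls out, giving $Z_{\blx}(s)\geq\beta(s)\,Z_{\blx-1}(s)$. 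Since $Z_0(s)=1$, induction yields $Z_{\blx}(s)\geq\beta(s)^{\blx}$, whence every error-free two-message code obeys $\Per\geq\tfrac{1}{2}\beta(s)^{\blx}$ for every $s\in(0,1)$; minimizing over such codes gives $\PERA(2,\blx,1)\geq\tfrac{1}{2}\beta(s)^{\blx}$. Finally the symmetry $\beta(s)=\beta(1-s)$ (swap the two input letters in the minimization) shows the supremum over $s\in(0,1)$ is attained on $(0,\tfrac12]$, so optimizing over $s\in(0,\tfrac12)$ delivers the claimed bound $\tfrac{1}{2}\bigl(\sup_{s\in(0,.5)}\beta(s)\bigr)^{\blx}$.

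I do not expect a serious obstacle, since the argument is essentially a tilted induction; the two points that need the most care are (i) the structural claim $A\subseteq\DECS(\Era)$, which rests only on error-freeness, and (ii) verifying that feedback does not break the recursion, i.e. that the per-symbol factor is controlled by the worst-case input pair rather than by the particular history-dependent inputs the encoder selects. Both are settled by the single observation that $\beta(s)$ is defined as a minimum over input pairs and is therefore uniform over all output histories; everything else is the routine tilting and induction sketched above.
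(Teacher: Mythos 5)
Your proposal is correct and takes essentially the same route as the paper's own proof: error-freeness forces erasure on the confusable set, the pointwise bound $P_1+P_2\geq P_1^{1-s}P_2^{s}$ reduces the erasure probability to the tilted sum, and the same one-symbol peeling argument (with the per-symbol factor bounded below by $\beta(s)$ uniformly over output histories, which is exactly how feedback is neutralized) telescopes to $\beta(s)^{\blx}$ before taking the supremum over $s$. The only cosmetic differences are that you extend the sum from the confusable set to all of ${\outS}^{\blx}$ and package the telescoping as an explicit induction on $Z_{\blx}(s)$, whereas the paper carries the positivity constraint inside a single chain of inequalities.
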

Note that  bounds given  in equation (\ref{eq:boundone})  and (\ref{eq:boundtwo}) are  implied by $\lim_{s\rightarrow 0^+}\beta(s) $ and $\lim_{s\rightarrow 0.5^-}\beta(s)$  respectively. 

Although  {\small $\sum_{\dout}  \CT{\dinp}{\dout}^s \CT{\tilde{\dinp}}{\dout}^{1-s}$} is convex in $s$ on $(0,0.5)$ for all $(\dinp,\tilde{\dinp})$ pairs,  $\beta(s)$ is not convex in $s$ because of the minimization in its definition. Thus the supremum over $s$ does not necessarily occur on the boundaries. Indeed there are channels for which bound given in Lemma \ref{lem:zerobound} is strictly better than the bounds given in (\ref{eq:boundone}) and (\ref{eq:boundtwo}). Following is the transition probability matrix of  one such channel.

\begin{equation*}
\CTM=\left[
 \begin{matrix}
    0.1600  &  0.0200 &   0.2200  &  0.3000   & 0.3000\\
    0.0900  &  0.4000 &   0.2700  &  0.0002   & 0.2398\\
    0.1800  &  0.2000 &   0.3000  &  0.3200   &   0
   \end{matrix} \right]   
 \qquad  \qquad  \qquad 
 \begin{matrix}
\lim\limits_{s\rightarrow 0~}      \beta(s)=0.7000 \\
\lim\limits_{s\rightarrow 0.5}   \beta(s)=0.7027\\
 ~ ~~ \beta(0.18)=0.7299 . 
 \end{matrix}
\end{equation*}

\begin{proof} 
Let $\ers{t}$ and $\erst{t}{\dfsy^{t-1}}$ be,
\begin{align*}
\ers{t}   &=\{\dfsy^t: \PCX{\mes=1}{\dfsy^t} \PCX{\mes=2}{\dfsy^t}>0\}\\
\erst{t}{\dfsy^{t-1}} &=\{\dout_t: \PCX{\dout_t}{\mes=1,\dfsy^{t-1}} \PCX{\dout_t}{\mes=2,\dfsy^{t-1}}>0\}\\
\end{align*}
Then for any  error free code and for any $s\in (0,0.5)$ we have
\begin{align}
\Per
\notag
&=\EX{\IND{\ers{\blx}}}\\
\notag
&=\EX{\IND{\ers{\blx}}( \PCX{\mes=1}{\fsy^\blx} +\PCX{\mes=2}{\fsy^\blx})}\\
\notag
&=
\EX{\IND{\ers{\blx}}( (1-s)\PCX{\mes=1}{\fsy^\blx} +s\PCX{\mes=2}{\fsy^\blx})}
+
\EX{\IND{\ers{\blx}}( s\PCX{\mes=1}{\fsy^\blx} +(1-s)\PCX{\mes=2}{\fsy^\blx})}
\\
\label{eq:inteqs1}
&\geq\EX{\IND{\ers{\blx}}{\PCX{\mes=1}{\fsy^{\blx}}}^{1-s} {\PCX{\mes=2}{\fsy^{\blx}}}^s}
+\EX{\IND{\ers{\blx}}{\PCX{\mes=1}{\fsy^{\blx}}}^{s} {\PCX{\mes=2}{\fsy^{\blx}}}^{1-s}}
\end{align}
where the last inequality follows from the fact that arithmetic mean is lower bounded by the geometric mean. Furthermore,
\begin{align}
\notag
&\hspace{-.5cm}\EX{\IND{\ers{\blx}}{\PCX{\mes=1}{\fsy^{\blx}}}^{1-s} {\PCX{\mes=2}{\fsy^{\blx}}}^s}\\
\label{eq:inteqs2}
&=\EX{\ECX{\IND{\erst{\blx}{\fsy^{\blx-1}} } 
\left(\tfrac{ {\PCX{\mes=1}{\fsy^{\blx}}}}{{\PCX{\mes=1}{\fsy^{\blx-1}}}}\right)^{1-s}\!
\left(\tfrac{ {\PCX{\mes=2}{\fsy^{\blx}}}}{{\PCX{\mes=2}{\fsy^{\blx-1}}}}\right)^{s}  }{\fsy^{\blx-1}}
\!\IND{\ers{(\blx-1)}}{\PCX{\mes=1}{\fsy^{\blx-1}}}^{1-s} {\PCX{\mes=2}{\fsy^{\blx-1}}}^s}.
\end{align}
Note that,
\begin{align} 
\notag
\tfrac{ {\PCX{\mes=1}{\fsy^{\blx}}}}{{\PCX{\mes=1}{\fsy^{\blx-1}}}}
=\tfrac{ {\PCX{\mes=1}{\fsy^{\blx-1},\out_{\blx}}}}{{\PCX{\mes=1}{\fsy^{\blx-1}}}}\\
\label{eq:inteqs3}
=\tfrac{ {\PCX{\out_{\blx}}{\mes=1,\fsy^{\blx-1}}}}{{\PCX{\out_{\blx}}{\fsy^{\blx-1}}}}.
\end{align}
Similarly, 
\begin{align} 
\tfrac{\PCX{\mes=2}{\fsy^{\blx}}}{\PCX{\mes=2}{\fsy^{\blx-1}}}
\label{eq:inteqs4}
=\tfrac{ {\PCX{\out_{\blx}}{\mes=2,\fsy^{\blx-1}}}}{{\PCX{\out_{\blx}}{\fsy^{\blx-1}}}}.
\end{align}
Thus using equations  (\ref{eq:inteqs3}) and (\ref{eq:inteqs4}) we have
\begin{align} 
\ECX{\IND{\erst{\blx}{\fsy^{\blx-1}} } 
\left(\tfrac{ {\PCX{\mes=1}{\fsy^{\blx}}}}{{\PCX{\mes=1}{\fsy^{\blx-1}}}}\right)^{1-s}
\left(\tfrac{ {\PCX{\mes=2}{\fsy^{\blx}}}}{{\PCX{\mes=2}{\fsy^{\blx-1}}}}\right)^{s}  }{\fsy^{\blx-1}}
\notag
&=\ECX{\tfrac{{\PCX{\out_{\blx}}{\mes=1,\fsy^{\blx-1}}}^{1-s} {\PCX{\out_{\blx}}{\mes=2,\fsy^{\blx-1}}}^{s}}{{\PCX{\out_{\blx}}{\fsy^{\blx-1}}}}}{\fsy^{\blx-1}}\\
\notag
&= \sum_{\dout_{\blx}}  {\PCX{\dout_\blx}{\mes=1,\fsy^{\blx-1}}}^{1-s} {\PCX{\dout_\blx}{\mes=2,\fsy^{\blx-1}}}^s\\
\label{eq:inteqs5} 
&\geq \beta(s)  
\end{align}
where the last inequality follows from the definition of  $\beta(s)$ given in equation (\ref{eq:zerobound}).

Using equations (\ref{eq:inteqs2}) and (\ref{eq:inteqs5}) we get
\begin{align}
\EX{\IND{\ers{\blx}}{\PCX{\mes=1}{\fsy^{\blx}}}^{1-s} {\PCX{\mes=2}{\fsy^{\blx}}}^s}
\notag
&\geq\EX{\PCX{\mes=1}{\fsy^0}^{1-s}  \PCX{\mes=2}{\fsy^{0}}^s} \beta(s)^{\blx}\\
\label{eq:inteqs6} 
&\geq\tfrac{1}{2} \beta(s)^{\blx}.
\end{align}
If we follow a similar line of reasoning for the second term in (\ref{eq:inteqs1}) we get
\begin{align}
\EX{\IND{\ers{\blx}}{\PCX{\mes=1}{\fsy^{\blx}}}^{s} {\PCX{\mes=2}{\fsy^{\blx}}}^{1-s}}
\notag
&\geq\tfrac{1}{2} \beta(1-s)^{\blx}\\
\label{eq:inteqs7}
&=\tfrac{1}{2} \beta(s)^{\blx}.
\end{align}
Lemma follows from equations  (\ref{eq:inteqs1}),  (\ref{eq:inteqs6}) and  (\ref{eq:inteqs7}) by taking the supremum over $s\in (0,0.5)$.
\end{proof} 

\section{Discussion}\label{sec:disc}
The value of error exponent is not known for erasure free fixed length block codes with feedback on a general DMC. We do not even know if it is still upper bounded by sphere packing exponent for non-symmetric DMCs.  Yet the value of error exponent for fixed length block codes with feedback and errors-and-erasures decoding can be deduced, for the zero-erasure exponent case, from the results on the variable length block  codes \cite{bur}, \cite{ito}.  Our main  aim in this paper was establishing  upper and lower bounds  that  extend the  bounds at the  zero erasure exponent case gracefully and non-trivially to the positive erasure exponents values.  Our results are best understood in this framework and  should be interpreted accordingly.

By finding the optimal error exponent erasure exponent trade off, one solves the open problem of finding the optimal  error exponent of erasure free fixed length block codes with feedback. This is an important and difficult problem on its own right. We did not attempted to solve that problem, yet the inner and outer bounds we have derived for the case with erasure quantify how much we loose from the optimal performance by using the encoding schemes inspired by the optimal encoding schemes for variable length block codes.

We derived  inner bounds  using  two phase encoding schemes, which are known to be optimal at zero-erasure exponent case. We have improved the performance  of these two phase schemes at positive erasure exponent values by choosing relative durations of the phases considering the desired values of  rate and erasure exponent,  and by using a decoder that takes into account the outputs of both phases while deciding between decoding to a message and declaring an erasure. However within each phase the assignment  of messages to input letters is fixed.  In a general  feedback encoder, on the other hand, assignment of the messages to input symbols at each time  can depend on  the previous channel outputs and such encoding schemes have proven to improve the error  exponent at low rates,  \cite{zig}, \cite{yak}, \cite{burbsc}, \cite{NZ3}, \cite{NT} for some DMCs.  Using such an encoding in the communication phase will improve the performance at low rates.  In addition instead of committing to a fixed duration for the communication phase one might consider using a stopping time to switch from communication phase to the control phase.  However in order to apply those ideas effectively  for a general DMC, it seems one first needs to solve the problem for the erasure free block codes for a general DMC.  

We derived the outer bounds without making any assumption about the feedback encoding scheme.  Thus they are valid for any fixed length block code with  feedback and erasures. The principal idea of the straight line bound is making use of the bounds derived for different rate,  erasure exponent pairs by taking their convex combinations. This approach  can be interpreted as a generalization of the outer bounds used for variable length block codes,  \cite{bur}, \cite{pet}.  As it was the case for the inner bounds, it seems in order to improve the outer bounds one needs establish outer bounds on two  related problems, i.e.  on the error exponents of erasure free block codes with feedback and on the error exponent erasure exponent trade off at zero rate.

The inner and outer bounds we have derived do not coincide for arbitrary values of erasure exponent. But they do coincide for all channels at all rates at zero erasure exponent.
\begin{itemize}
 \item If the channel does not have a zero probability transition, both the inner bound and the outer bound are equal to $(1-\tfrac{R}{\CX})\DX$. 
 \item If the channel does have a zero probability transition, the inner bound is equal to infinity and there are  fixed length block codes with zero error probability for all large enough block lengths.
  \end{itemize}
Furthermore on the plane where erasure exponent is equal to the error exponent, the outer bound we have derived is loose only as much as the best outer bound we know for the error exponent of the erasure free block codes with feedback is loose. Thus the proximity we have observed between inner and outer bounds in Figure \ref{fig:exp3} is not peculiar to the particular channel we have chosen for Figure \ref{fig:exp3}. For all channels inner and outer bounds we have derived coincide on the upper left corner like they do in  Figure \ref{fig:exp3}. If the channel is symmetric and if we are considering a rate over critical rate they will also coincide in lower right corner.  Furthermore if the sphere packing exponent is shown to be an upper bound for the error exponent of erasure free fixed length block codes this behavior will extend to non-symmetric channels.

\section*{Acknowledgment} 
Authors are grateful to Emre Telatar for his encouragement on the problem and for  numerous  discussions on error-free codes. In particular the observations presented about $z$-channels are his and Lemma \ref{lem:epfl} was  proved in 2006 summer at Ecole Polytechnique Federale de Lausanne (EPFL).  
Authors are thankful to Tsachy Weissman and  Amos Lapidoth for bringing the  Shannon-Gallager result mentioned in
Elwyn Berlekamp's thesis to their attention, to Anant Sahai for various discussions on communication problems with feedback and to Robert G. Gallager for various discussions on the encoding scheme presented in Section \ref{sec:ach} and the two message error exponent trade off.  Authors would like to acknowledge the thorough review provided by the anonymous Reviewer B, which has helped them to  improve the presentation of the paper in general. In addition  the $(1-s)$ factor on the left hand side of Theorem \ref{thm:con1} was pointed out to the authors by  Reviewer B.

\appendix

\subsection{The Error Exponent Trade Off for Feedback Encoding Schemes with   Two Message  and  Erasure Free Decoders  :}\label{app:E2=F2}

In this section we will first establish an alternative expression for the $\bhexp{T}{\XSID}$ function  defined  in equation (\ref{eq:bhtexpd}) in Lemma \ref{lem:altexpfexpt}. After that we will prove
 that in a two message code with feedback on a DMC, if the error exponent of one of the messages is greater than   some $ T\geq T_0$ then the  error exponent of the other message cannot be greater than $ \obhexp{T}$, where $T_0$ and $ \obhexp{T}$ are defined in    (\ref{eq:tzero}) and  (\ref{eq:ber2}) respectively.  Furthermore we will prove that if the error probability of the one of the message is zero than the error probability of the other message cannot be lower than $e^{-\blx T_0}$; we will also prove that it can be as low as $e^{-\blx T_0}$, see Lemma \ref{lem:zeroerxlem}.   These results   will imply that the error performance of a two message code, does not improve with feedback. This result is attributed to Shannon and Gallager by Berlekamp in  \cite{ber}. 

\begin{lemma}\label{lem:altexpfexpt}
$\bhexp{T}{\XSID}$ defined  in equation (\ref{eq:bhtexpd}) is  equal to
\begin{equation*}
\bhexp{T}{\XSID}=
\left\{\begin{array}{cll}
\infty
& \mbox{if~} &T< \CDIV{\Xq{0}}{W_a}{\XSID}\\
\CDIV{\Xq{s}}{W_r}{\XSID}
& \mbox{if~} &T=\CDIV{\Xq{s}}{W_a}{\XSID} \quad \mbox{for~some~}  s \in [0,1]\\
\CDIV{\Xq{1}}{W_r}{\XSID}
& \mbox{if~} &T> \CDIV{\Xq{1}}{W_a}{\XSID} 
\end{array}\right\}
\end{equation*} 
where
\begin{equation*}
\XQ{s}{\dout}{\dinp,\tilde{\dinp}}=
\left\{
\begin{array}{lcl}
\tfrac{\IND{\CT{\tilde{\dinp}}{\dout}>0}}{\sum_{\tilde{\dout}:\CT{\tilde{\dinp}}{\tilde{\dout}}>0} \CT{\dinp}{\tilde{\dout}}} \CT{\dinp}{\dout}
&\mbox{ if }&s=0\\
\tfrac{\CT{\dinp}{\dout}^{1-s} \CT{\tilde{\dinp}}{\dout}^s }{\sum_{\tilde{\dout}}\CT{\dinp}{\tilde{\dout}}^{1-s} \CT{\tilde{\dinp}}{\tilde{\dout}}^s}
&\mbox{ if }&s\in (0,1)\\
\tfrac{\IND{\CT{\dinp}{\dout}>0}}{\sum_{\tilde{\dout}:\CT{\dinp}{\tilde{\dout}}>0} \CT{ \tilde{\dinp}}{\tilde{\dout}}} \CT{\tilde{\dinp}}{\dout}
&\mbox{ if} &s=1\\
\end{array}\right\}
\end{equation*}
\end{lemma}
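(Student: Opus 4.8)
The plan is to treat the defining optimization (\ref{eq:bhtexpd}) as a convex program and solve it by Lagrangian duality, reading off the optimizer as a geometric (exponentially tilted) mixture of the two reference channels. Since $\CDIV{\XSOD}{W_r}{\XSID}$ and $\CDIV{\XSOD}{W_a}{\XSID}$ are both convex in $\XSOD$ and the feasible set $\{\XSOD:\CDIV{\XSOD}{W_a}{\XSID}\leq T\}$ is convex, a minimizer exists and is characterized by the Karush--Kuhn--Tucker conditions. First I would note that both divergences split additively as $\sum_{\dinp,\tilde{\dinp}}\XSID(\dinp,\tilde{\dinp})(\cdots)$ over the pairs in $\supp{\XSID}$, while the constraint couples the rows only through the single scalar budget $T$. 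Consequently a single multiplier $\lambda\geq 0$ suffices and the per-pair subproblems share the same $\lambda$; this is precisely why the answer is a one-parameter family.

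Next I would form the Lagrangian $\CDIV{\XSOD}{W_r}{\XSID}+\lambda\big(\CDIV{\XSOD}{W_a}{\XSID}-T\big)$ and minimize over each row $\XSOD(\cdot|\dinp,\tilde{\dinp})$ on the simplex. Stationarity, with a normalization multiplier, gives $\XSOD(\dout|\dinp,\tilde{\dinp})\propto \CT{\tilde{\dinp}}{\dout}^{1/(1+\lambda)}\CT{\dinp}{\dout}^{\lambda/(1+\lambda)}$. Setting $s=1/(1+\lambda)$ identifies this minimizer with $\Xq{s}$ as defined in the statement, with $\lambda\in(0,\infty)$ matching $s\in(0,1)$. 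In the active-constraint regime complementary slackness forces $\CDIV{\Xq{s}}{W_a}{\XSID}=T$, which pins down $s$, and the optimal value is then $\CDIV{\Xq{s}}{W_r}{\XSID}$, giving the middle case of the claim.

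Then I would dispose of the two boundary regimes. For $T$ large the constraint is slack, so $\lambda=0$, $s=1$, and the unconstrained minimizer of $\CDIV{\cdot}{W_r}{\XSID}$ is $\Xq{1}$ (namely $W_r$ restricted to the support of $W_a$), yielding $\CDIV{\Xq{1}}{W_r}{\XSID}$, with threshold exactly $T=\CDIV{\Xq{1}}{W_a}{\XSID}$. For $T$ small the issue is feasibility: any $\XSOD$ with finite objective must be absolutely continuous with respect to $W_r$, and among such $\XSOD$ the quantity $\CDIV{\XSOD}{W_a}{\XSID}$ is minimized by $\Xq{0}$ ($W_a$ restricted to the support of $W_r$); hence for $T<\CDIV{\Xq{0}}{W_a}{\XSID}$ the feasible set contains no point of finite objective and $\bhexp{T}{\XSID}=\infty$. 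To knit the three regimes together I would verify that $s\mapsto \CDIV{\Xq{s}}{W_a}{\XSID}$ is continuous and nondecreasing on $[0,1]$, so that it sweeps $[\CDIV{\Xq{0}}{W_a}{\XSID},\CDIV{\Xq{1}}{W_a}{\XSID}]$ and every admissible $T$ is met by a unique $s$; the convexity and monotonicity of $\bhexp{\cdot}{\XSID}$ already asserted in the main text can be invoked to confirm that the curve $\big(\CDIV{\Xq{s}}{W_a}{\XSID},\CDIV{\Xq{s}}{W_r}{\XSID}\big)$ is the lower boundary.

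I expect the main obstacle to be the careful treatment of zero transition probabilities. The clean tilting and KKT argument presumes an interior optimum, but when some $\CT{\dinp}{\dout}$ or $\CT{\tilde{\dinp}}{\dout}$ vanish the optimizer lies on the boundary of the simplex and one of the divergences can be infinite; this is exactly what the indicator-restricted definitions of $\Xq{0}$ and $\Xq{1}$ encode. I would handle the endpoints $s=0$ and $s=1$ as limits $s\to 0^{+}$ and $s\to 1^{-}$ of the interior family, checking that the normalizations and support restrictions in the statement emerge correctly in the limit and that these limits indeed deliver the constrained minima at the extreme feasible values of $T$.
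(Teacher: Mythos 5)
Your proposal is correct and follows essentially the same route as the paper's proof: both form the Lagrangian $\CDIV{\XSOD}{W_r}{\XSID}+\lambda\left(\CDIV{\XSOD}{W_a}{\XSID}-T\right)$, identify the per-pair minimizer as the tilted distribution $\Xq{s}$ with $s=\tfrac{1}{1+\lambda}$, and read off the three regimes from the value of the multiplier. The paper executes this as an explicit min-sup exchange followed by maximization of the resulting dual function over $\lambda$, treating the $s=0,1$ endpoints as limits, whereas you phrase the same duality via KKT and complementary slackness and handle the infeasible and slack regimes by direct support/absolute-continuity arguments -- a difference of execution, not of method.
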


\begin{proof}
\begin{align}
\bhexp{T}{\XSID} 
\notag
&=\min_{\XSOD: \CDIV{\XSOD}{W_a}{\XSID}\leq T  } \CDIV{\XSOD}{W_r}{\XSID}\\ \
\notag
&\mathop{=}\min_{\XSOD} \sup_{\lambda>0} \CDIV{\XSOD}{W_r}{\XSID}+\lambda(\CDIV{\XSOD}{W_a}{\XSID}-T)\\ 
\notag
&\mathop{=}^{(a)}\sup_{\lambda>0}\min_{\XSOD} \CDIV{\XSOD}{W_r}{\XSID}+\lambda(\CDIV{\XSOD}{W_a}{\XSID}-T)\\ 
\notag
&=\sup_{\lambda>0} \min_{\XSOD}-\lambda T+ (1+\lambda)  \sum\nolimits_{\dinp,\tilde{\dinp},\dout} \XSID(\dinp,\tilde{\dinp}) U(\dout|\dinp,\tilde{\dinp})\ln \tfrac{ U(\dout|\dinp,\tilde{\dinp})}{ \CT{\dinp}{\dout}^{\frac{\lambda}{1+\lambda}} \CT{\tilde{\dinp}}{\dout}^{\frac{1}{1+\lambda}}}\\
\label{eq:etftp}
&\mathop{=}^{(b)}\sup_{\lambda>0} -\lambda T - (1+\lambda) \sum\nolimits_{\dinp,\tilde{\dinp}} \XSID(\dinp,\tilde{\dinp})  \ln \sum_{\dout} \CT{\dinp}{\dout}^{\frac{\lambda}{1+\lambda}} \CT{\tilde{\dinp}}{\dout}^{\frac{1}{1+\lambda}} 
\end{align}
where $(a)$ follows from convexity of $\CDIV{\XSOD}{W_r}{\XSID}+\lambda(\CDIV{\XSOD}{W_a}{\XSID}-T)$  in $U$ and linearity (concavity) of it in $\lambda$; $(b)$ holds because minimizing $U$ is $\Xq{s}$ for $s=\tfrac{1}{1+\lambda}$. The function on the right hand side of (\ref{eq:etftp}) is maximized at a positive and finite $\lambda$ iff there is a $\lambda$  such that $\CDIV{\Xq{\frac{1}{1+\lambda}}}{W_a}{\XSID}=T$. Thus by substituting $\lambda=\tfrac{1-s}{s}$ we get
\begin{equation}
\label{eq:SS0}
\bhexp{T}{\XSID}=
\left\{\begin{array}{cll}
\infty
& \mbox{if~} & T<  \lim\nolimits_{s \rightarrow 0^+} \CDIV{\Xq{s}}{W_a}{\XSID}\\
\lim\nolimits_{s \rightarrow 0^+} \CDIV{\Xq{s}}{W_r}{\XSID}
& \mbox{if~} &T=\lim\nolimits_{s \rightarrow 0^+}\CDIV{\Xq{s}}{W_a}{\XSID}\\
\CDIV{\Xq{s}}{W_r}{\XSID}
& \mbox{if~} &T=\CDIV{\Xq{s}}{W_a}{\XSID} \quad \mbox{for~some~}  s \in (0,1)\\
\lim\nolimits_{s \rightarrow 1^-} \CDIV{\Xq{s}}{W_r}{\XSID}
& \mbox{if~} &T=\lim\nolimits_{s \rightarrow 1^-}\CDIV{\Xq{s}}{W_a}{\XSID}\\
\lim\nolimits_{s \rightarrow 1^-} \CDIV{\Xq{s}}{W_r}{\XSID}
& \mbox{if~} &T>   \lim\nolimits_{s \rightarrow 1^-} \CDIV{\Xq{s}}{W_a}{\XSID}\\
\end{array}\right\}
\end{equation}  
Lemma follows from  the definition $\Xq{s}$ at $s=0,1$   and equation (\ref{eq:SS0}). 
\end{proof}

Now we are ready to present the proof of Lemma \ref{lem:E2=F2}

\begin{proofs}{Lemma \ref{lem:E2=F2}}
Our proof is very much like the one for the converse part of \cite[Theorem 5]{SGB}, except few modifications that allow us to handle the fact that encoding schemes we are considering are feedback encoding schemes. Like \cite[Theorem 5]{SGB} we construct a probability measure $\XPT{T}{\cdot}$ on ${\fsyS}^{\blx}$ as a function of $T$ and the encoding scheme. Then we bound the error probability of each  message from below using the probability of the decoding region of the other message under $\XPT{T}{\cdot}$. We consider probability measures on  ${\fsyS}^{\blx}$ rather than $\out^{\blx}$ to include the possible randomization in the encoding and decoding schemes.

For any  $T\geq T_0$ and $\XSID$, let $\WS{T}{\XSID}$ be
\begin{equation}
\label{eq:SS}
\WS{T}{\XSID}=
\left\{\begin{array}{cll}
0
& \mbox{if~} 
&T< \CDIV{\Xq{0}}{W_a}{\XSID}\\
s
&\mbox{if~} 
&\exists s \in [0,1]  \mbox{s.t. } \CDIV{\Xq{s}}{W_a}{\XSID}= T\\
1
& \mbox{if~} &T> \CDIV{\Xq{1}}{W_a}{\XSID} 
\end{array}\right\}.
\end{equation}
Recall that  
\begin{align*}
T_0 &=\max_{\dinp,\tilde{\dinp}} -\ln \sum\nolimits_{\dout: \CT{\tilde{\dinp}}{\dout}>0} \CT{\dinp}{\dout}
&&\mbox{and}&
\CDIV{\Xq{0}}{W_a}{\XSID}
&=-\sum\nolimits_{\dinp,\tilde{\dinp}} \XSID(\dinp,\tilde{\dinp}) \ln \sum\nolimits_{\dout: \CT{\tilde{\dinp}}{\dout}>0} \CT{\dinp}{\dout}.
\end{align*}
Then  for all $\XSID$ we have
\begin{equation}
\label{eq:SS1c}
 T_0\geq \CDIV{\Xq{0}}{W_a}{\XSID}.
\end{equation}
Thus as a result of definition of $\WS{T}{\XSID}$ and equation (\ref{eq:SS1c})  we have
\begin{equation}
\label{eq:SS1b}
  \CDIV{\Xq{\WS{T}{\XSID}}}{W_a}{\XSID}\leq T \qquad \forall T\geq T_0. 
\end{equation}
Using Lemma \ref{lem:altexpfexpt},  definition of $\WS{T}{\XSID}$ and equation (\ref{eq:SS1c}) we can also conclude that
\begin{equation}
\label{eq:SS1a}
 \CDIV{\Xq{\WS{T}{\XSID}}}{W_r}{\XSID}= \bhexp{T}{\XSID}\leq \obhexp{T} \qquad \forall T\geq T_0.
\end{equation}

Note that given  $\fsy^{t-1}=\dfsy^{t-1}$  channel input letters assigned to each message at time $t$, $\ENC_t(\dmes_1,\dfsy^{t-1})$ and $\ENC_t(\dmes_2,\dfsy^{t-1})$, are fixed  for any feedback encoding schemes,     $\ENC_t(\cdot): \{\dmes_1,\dmes_2\}  \times \fsyS^{t-1}$. Thus the corresponding $\XSID$ is given by:
\begin{equation}
\label{eq:SS2}
  \XSID(\dinp, \tilde{\dinp})=\left\{
    \begin{array}{lcc}
    0 & \mbox{if~} & (\dinp, \tilde{\dinp}) \neq (\ENC_t(\dmes_1,\dfsy^{t-1}), \ENC_t(\dmes_2,\dfsy^{t-1}))\\
    1 & \mbox{if~} & (\dinp, \tilde{\dinp})   =   (\ENC_t(\dmes_1,\dfsy^{t-1}), \ENC_t(\dmes_2,\dfsy^{t-1}))
   \end{array} \right\}.
\end{equation}
Then for any $T\geq T_0$  let $  \XPTC{T}{\dout_{t}}{\dfsy^{t-1}}$ be 
\begin{equation}
\label{eq:SS3}
  \XPTC{T}{\dout_{t}}{\dfsy^{t-1}}=\XQ{\WS{T}{\XSID}}{\dout_t}{\ENC_t(\dmes_1,\dfsy^{t-1}),\ENC_t(\dmes_2,\dfsy^{t-1})}.
\end{equation}
Furthermore let us assume that the conditional distribution of $\rsy_{t}$   given $(\mes, \fsy^{t-1},\out_t)$ under $\XPT{T}{\cdot}$ be identical to the conditional distribution of $\rsy_{t}$  given $(\mes, \fsy^{t-1},\out_t)$  under $\PX{\cdot}$, i.e. the original conditional distribution. 

 Note that as a result of equation (\ref{eq:SS1b})  and equation (\ref{eq:SS1a}) we have
\begin{align*}
\XETC{T}{ \ln \frac{\XPTC{T}{\out_t}{\fsy^{t-1}}}{\PCX{\out_t}{\mes=\dmes_1,\fsy^{t-1}}}}{\fsy^{t-1}} &\leq T
&&\mbox{and}&
\XETC{T}{ \ln \frac{\XPTC{T}{\out_t}{\fsy^{t-1}}}{\PCX{\out_t}{\mes=\dmes_2,\fsy^{t-1}}}}{\fsy^{t-1}} &\leq \obhexp{T}
&& \mbox{w.p.}1
\end{align*}
Now we make a standard measure change argument,
\begin{align}
\PCX{\out_t}{\mes=\dmes_1,\fsy^{t-1}}
\notag
%&= e^{\ln \frac{\PCX{\out_t}{\mes=\dmes_1,\fsy^{t-1}}}{   \XPTC{T}{\out_t}{\fsy^{t-1}}}}   \XPTC{T}{\out_t}{\fsy^{t-1}}\\
%\notag
&= e^{-\ln \frac{\XPTC{T}{\out_t}{\fsy^{t-1}}}{\PCX{\out_t}{\mes=\dmes_1,\fsy^{t-1}}}}   \XPTC{T}{\out_t}{\fsy^{t-1}}\\
\notag
&= e^{-\XETC{T}{\ln \frac{   \XPTC{T}{\out_t}{\fsy^{t-1}}}{\PCX{\out_t}{\mes=\dmes_1,\fsy^{t-1}}} }{\fsy^{t-1}}} e^{\chi_{t,\dmes_1}(\out_t|\fsy^{t-1}) }  \XPTC{T}{\out_t}{\fsy^{t-1}}\\
\label{eq:SS5b} 
&\geq e^{-T} e^{\chi_{t,\dmes_1}(\out_t|\fsy^{t-1}) }  \XPTC{T}{\out_t}{\fsy^{t-1}}  
\end{align}
where
\begin{equation}
\label{eq:SS5a}
  \chi_{t,\dmes_1}(\out_t|\fsy^{t-1})=\XETC{T}{\ln \tfrac{\XPTC{T}{\out_t}{\fsy^{t-1}}}{\PCX{\out_t}{\mes=\dmes_1,\fsy^{t-1}}}}{\fsy^{t-1}}  -\ln \tfrac{\XPTC{T}{\out_t}{\fsy^{t-1}}}{\PCX{\out_t}{\mes=\dmes_1,\fsy^{t-1}}}
 \end{equation} 
For  $\dmes=\dmes_1,\dmes_2$ let $\chi(\dmes)$ be
\begin{equation}
\label{eq:defchi}
\chi(\dmes)=\left\{\dfsy^{\blx}:|\sum\nolimits_{t=1}^{\blx}  \chi_{t,\dmes}(\out_t|\fsy^{t-1})|
 \leq 4  \sqrt{\blx} \ln\tfrac{1}{P_{min}}\right\}  
\end{equation}
For any event ${\sf B}$ measurable in the sigma field generated by ${\fsy}^{\blx}$ as a result of equation equations  (\ref{eq:SS5b}) we have
\begin{align}
\PX{\sf B}
\notag
&\geq \EX{ \IND{{\sf B}} \IND{\chi(\dmes_1)}}\\
\notag
&\geq  e^{-\blx T} e^{- 4  \sqrt{\blx} \ln\tfrac{1}{P_{min}} } \XET{T}{ \IND{{\sf B}} \IND{\chi(\dmes_1)}}\\
\label{eq:SS5c}
&\geq  e^{-\blx T} e^{- 4  \sqrt{\blx} \ln\tfrac{1}{P_{min}} } \XPT{T}{\{ {\sf B}  \mbox{~and~} \chi(\dmes_1)\}}
\end{align}
Following a similar line of reasoning we get,
\begin{equation}
\label{eq:SS6b}
\PCX{\out_t}{\mes=\dmes_2,\fsy^{t-1}}
\geq e^{-\blx \obhexp{T}}  e^{\chi_{t,\dmes_2}(\out_t|\fsy^{t-1}) }  \XPTC{T}{\out_t}{\fsy^{t-1}}  
\end{equation}
where 
\begin{equation}
\label{eq:SS6a}
\chi_{t,\dmes_2}(\out_t|\fsy^{t-1})=\XETC{T}{\ln \tfrac{\XPTC{T}{\out_t}{\fsy^{t-1}}}{\PCX{\out_t}{\mes=\dmes_2,\fsy^{t-1}}}}{\fsy^{t-1}}  -\ln \tfrac{\XPTC{T}{\out_t}{\fsy^{t-1}}}{\PCX{\out_t}{\mes=\dmes_2,\fsy^{t-1}}}
\end{equation}
and for any event ${\sf B}$  measurable in the sigma field generated by ${\fsy}^{\blx}$  we have
\begin{align}
\PX{\sf B}
\label{eq:SS6c}
&\geq  e^{-\blx \obhexp{T}} e^{- 4  \sqrt{\blx} \ln\tfrac{1}{P_{min}} } \XPT{T}{\{ {\sf B}  \mbox{~and~} \chi(\dmes_2)\}}.
\end{align}
Note that for  $\dmes=\{\dmes_1,\dmes_2\}$ and  $t\in \{1,2,\ldots,\blx\}$,
\begin{subequations}
 \label{eq:SS7}
\begin{align}
 \XETC{T}{\chi_{t,\dmes}(\out_t|\fsy^{t-1})}{\fsy^{t-1}}
&=0   
&&\forall \dfsy^{t-1}\in \fsyS^{t-1}\\
 \XETC{T}{(\chi_{t,\dmes}(\out_t|\fsy^{t-1}))^2}{\fsy^{t-1}}
&\leq 4 (\ln P_{min})^2
&&\forall \dfsy^{t}\in \fsyS^{t}\\
 \XETC{T}{ \chi_{t,\dmes}(\out_t|\fsy^{t-1})  \chi_{t-k,\dmes}(\out_{t-k}|\fsy^{t-k-1}) }{\fsy^{t-1}}
&=0   
&&\forall \dfsy^{t-1}\in \fsyS^{t-1}\quad \forall k\{1,2,\ldots,t-1\}
\end{align}
\end{subequations}
Thus as a result of equation (\ref{eq:SS7}), for  $\dmes \in \{\dmes_1,\dmes_2\}$
\begin{subequations}
 \label{eq:SS8}
\begin{align}
 \XET{T}{ \sum_{t=1}^{\blx} \chi_{t,\dmes}(\out_t|\fsy^{t-1})}
&=0\\
 \XET{T}{\left (\sum_{t=1}^{\blx} \chi_{t,\dmes}(\out_t|\fsy^{t-1}) \right)^2}
&\leq 4\blx (\ln P_{min})^2.
\end{align}
\end{subequations}
Using equation (\ref{eq:SS8}) and Chebychev's inequality we conclude that,
\begin{equation*}
  \XPT{T}{\chi(\dmes)}\geq 3/4  \qquad \dmes=\dmes_1,\dmes_2 
\end{equation*}
Hence,
\begin{equation*}
 \XPT{T}{  \chi(\dmes_1) \cap    \chi(\dmes_2)}\geq 1/2
\end{equation*}
Thus either the total probability of intersection of $\chi(\dmes_1) \cap    \chi(\dmes_2)$ with the decoding region of the second message is equal to or larger than  $1/4$ or the total probability of intersection of $\chi(\dmes_1) \cap    \chi(\dmes_2)$ with the decoding region of the first message is  strictly larger than  $1/4$. Then the lemma follows from equations  (\ref{eq:SS5c}) and (\ref{eq:SS6c}).

\end{proofs}
As we have noted previously $T_0$ does have an operational meaning it is the maximum error exponent first message can have, when the error probability of the second message is zero. 
\begin{lemma}\label{lem:zeroerxlem}
For any feedback encoding scheme with two messages, if $\Pem{\dmes_2}=0$  then $\Pem{\dmes_1} \geq e^{-\blx T_0}$. Furthermore there does exist  an encoding scheme such that  $\Pem{\dmes_2}=0$  then $\Pem{\dmes_1} = e^{-\blx T_0}$.
\end{lemma}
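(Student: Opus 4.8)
The plan is to prove the lower bound $\Pe_{\dmes_1}\geq e^{-\blx T_0}$ by a one-step-at-a-time reachability estimate, and then to match it with an explicit non-adaptive code. For the converse, since the decision rule is erasure free the two decoding regions $\DECS(\dmes_1)$ and $\DECS(\dmes_2)$ partition $\outS^{\blx}$. The hypothesis $\Pe_{\dmes_2}=0$ reads $\PCX{\DECS(\dmes_1)}{\mes=\dmes_2}=0$, and because this is a sum of non-negative terms, every string with $\PCX{\dout^{\blx}}{\mes=\dmes_2}>0$ must lie in $\DECS(\dmes_2)$. I would set
\[
A\triangleq\{\dout^{\blx}:\CT{\ENC_t(\dmes_2,\dout^{t-1})}{\dout_t}>0\ \text{for all}\ t\in[1,\blx]\},
\]
the set of outputs reachable when $\dmes_2$ is transmitted, which is exactly the support of $\PCX{\cdot}{\mes=\dmes_2}$. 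Hence $A\subseteq\DECS(\dmes_2)$ and therefore $\Pe_{\dmes_1}=\PCX{\DECS(\dmes_2)}{\mes=\dmes_1}\geq\PCX{A}{\mes=\dmes_1}$.

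Next I would lower bound $\PCX{A}{\mes=\dmes_1}$ by telescoping over time. Let $A_t$ be the event that $\dout^{t}$ is reachable under $\dmes_2$. Conditioned on a reachable past $\dout^{t-1}$, message $\dmes_1$ feeds the channel the input $\ENC_t(\dmes_1,\dout^{t-1})$, while the reachability constraint only restricts $\dout_t$ to the support of $\CT{\ENC_t(\dmes_2,\dout^{t-1})}{\cdot}$, so the one-step probability of remaining in $A$ is
\[
\sum_{\dout_t:\CT{\ENC_t(\dmes_2,\dout^{t-1})}{\dout_t}>0}\CT{\ENC_t(\dmes_1,\dout^{t-1})}{\dout_t}\ \geq\ \min_{\dinp,\tilde{\dinp}}\sum_{\dout:\CT{\tilde{\dinp}}{\dout}>0}\CT{\dinp}{\dout}=e^{-T_0},
\]
the last equality being the definition of $T_0$ in (\ref{eq:tzero}). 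The chain rule then gives $\PCX{A_t}{\mes=\dmes_1}\geq e^{-T_0}\PCX{A_{t-1}}{\mes=\dmes_1}$, and iterating over $t=1,\dots,\blx$ yields $\PCX{A}{\mes=\dmes_1}\geq e^{-\blx T_0}$, proving the bound.

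For achievability, let $(\dinp^{*},\tilde{\dinp}^{*})$ attain the maximum in (\ref{eq:tzero}). I would take the non-adaptive code that transmits $\dinp^{*}$ at every time instant for $\dmes_1$ and $\tilde{\dinp}^{*}$ at every time instant for $\dmes_2$, with the erasure-free rule that decodes $\dmes_2$ if and only if $\CT{\tilde{\dinp}^{*}}{\dout_t}>0$ for every $t$. When $\dmes_2$ is sent every output symbol is reachable from $\tilde{\dinp}^{*}$, so the received string is always decoded as $\dmes_2$ and $\Pe_{\dmes_2}=0$. When $\dmes_1$ is sent the channel uses are independent and the error event is exactly the reachability event above, whose probability is $\big(\sum_{\dout:\CT{\tilde{\dinp}^{*}}{\dout}>0}\CT{\dinp^{*}}{\dout}\big)^{\blx}=e^{-\blx T_0}$, matching the lower bound.

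The computations here are light; the only point requiring care is the feedback, which forces the converse to be assembled step by step through the chain rule rather than as a product of identically distributed factors, together with the verification that $A\subseteq\DECS(\dmes_2)$ is forced by the zero-error hypothesis. I expect this bookkeeping with the feedback-dependent inputs to be the main (and only mild) obstacle.
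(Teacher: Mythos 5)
Your proposal is correct, and it reaches both halves of the lemma by the paper's own key quantities; the only real divergence is how the converse is packaged. Your achievability code --- repeat the maximizing pair at every time instant and decode to $\dmes_2$ exactly when every $\dout_t$ lies in the support of $\CT{\dinp_2^*}{\cdot}$ --- is literally the code in the paper. For the converse, the paper does not telescope the chain rule explicitly; it reuses the measure-change machinery from its proof of Lemma \ref{lem:E2=F2}: it defines the auxiliary measure $\XPTC{T}{\dout_t}{\dout^{t-1}}=\XQ{0}{\dout_t}{\ENC_t(\dmes_1,\dout^{t-1}),\ENC_t(\dmes_2,\dout^{t-1})}$ (message $\dmes_1$'s output distribution conditioned on staying inside the support of message $\dmes_2$'s), observes that
\begin{equation*}
\XPTC{T}{\dout_t}{\dout^{t-1}}\leq e^{T_0}\PCX{\dout_t}{\mes=\dmes_1,\dout^{t-1}}
\qquad\mbox{and}\qquad
\XPTC{T}{\dout_t}{\dout^{t-1}}\leq \IND{\PCX{\dout_t}{\mes=\dmes_2,\dout^{t-1}}>0},
\end{equation*}
and concludes that $\Pe_{\dmes_2}=0$ forces $\XPT{T}{\est=\dmes_2}=1$, whence $\Pe_{\dmes_1}\geq e^{-\blx T_0}\XPT{T}{\est=\dmes_2}=e^{-\blx T_0}$. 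Your two facts are exactly these two facts in disguise: your containment $A\subseteq\DECS(\dmes_2)$ plays the role of $\XPT{T}{\est=\dmes_2}=1$ (the support of $\XPT{T}$ sits inside your $A$), and your one-step retention bound $\geq e^{-T_0}$ is precisely the paper's bound $e^{T_0}$ on the reciprocal of the normalizing constant of $\XQ{0}$; composing either one over $t=1,\dots,\blx$ gives the same estimate. What your version buys is self-containedness: no auxiliary measure is needed, just the chain rule on the reachable set, which is arguably the cleaner way to prove this particular lemma in isolation. What the paper's version buys is uniformity: the same $\Xq{s}$ tilting apparatus with $s\in(0,1)$ is what proves the full two-message trade-off in Lemma \ref{lem:E2=F2}, so the present lemma appears there as the boundary case $s=0$ of an argument already on the page.
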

\begin{proof}
Let us use a construction similar to the one used in the proof of Lemma \ref{lem:E2=F2}
\begin{equation*}
  \XPTC{T}{\out_{t}}{\fsy^{t-1}}=\XQ{0}{\out_t}{\ENC_t(\dmes_1,\fsy^{t-1}),\ENC_t(\dmes_2,\fsy^{t-1})}.
\end{equation*}
Recall that
\begin{equation*}
\XQ{0}{\dout_t}{\dinp,\tilde{\dinp}}=\tfrac{\IND{\CT{\tilde{\dinp}}{\dout}>0}}{\sum_{\tilde{\dout}:\CT{\tilde{\dinp}}{\tilde{\dout}}>0}\CT{\dinp}{\tilde{\dout}}} \CT{\dinp}{\dout} 
\end{equation*}
Thus
\begin{align*}
\XPTC{T}{\out_{t}}{\fsy^{t-1}}&\leq e^{T_0}   \PCX{\out_{t}}{\mes=\dmes_1,\fsy^{t-1}}\\
\XPTC{T}{\out_{t}}{\fsy^{t-1}}&\leq \IND{ \PCX{\out_{t}}{\mes=\dmes_2,\fsy^{t-1}}>0}
\end{align*}
As we did in the proof of Lemma \ref{lem:E2=F2} we will assume that conditional distribution of $\rsy_t$ given $(\mes, \fsy^{t-1},\out_t)$ under $\XPT{T}{\cdot}$ is identical to the conditional distribution of $\rsy_{t}$  given $(\mes, \fsy^{t-1},\out_t)$  under $\PX{\cdot}$, i.e. the original conditional distribution. 

Then for any event ${\sf B}$ measurable in the sigma field generated by ${\fsy}^{\blx}$ we  have
 \begin{align}
\label{eq:SS10a}
\PCX{{\sf B}}{\mes=\dmes_1} 
&\geq e^{-\blx T_0}  \XPT{T}{{\sf B}}\\  
\label{eq:SS10b}
\PCX{{\sf B}}{\mes=\dmes_2}
&\geq  e^{\blx \ln P_{min}} \XPT{T}{{\sf B}}
 \end{align}
where  $P_{min}$ is the minimum non-zero element of $W$. 

  Since $\Pem{\dmes_2}\!=\!0$ equation (\ref{eq:SS10b}) implies that  $\XPT{T}{\est\!\neq \!\dmes_2}\!=\!0$ and $\XPT{T}{\est \neq \dmes_1}\!=1$. Using  this fact together with  equation (\ref{eq:SS10a}) we conclude that
\begin{equation}
  \label{eq:E2=F2N}
\Pem{\dmes_1} \geq e^{-\blx T_0}.
\end{equation}
 Let us assume that maximizing x-pair  in (\ref {eq:tzero})   is $(\dinp_1^*,\dinp_2^*)$ i.e. $T_0= -\ln \sum_{\dout:\CT{\dinp_2^*}{\dout}>0} \CT{\dinp_1^*}{\dout}$. If the the encoding scheme sends  $\dinp_1^*$ for      the first message and $\dinp_2^*$ for the second message, and  the decoder  decodes to second message unless $\out_t=\dout^*$ for some  $t\in \{1,2,\ldots,\blx\}$ and for some  $\dout^*$ such that $\CT{\dinp_2^*}{\dout^*}=0$. Then ${\Pe}_{\dmes_2}=0$ and ${\Pe}_{\dmes_1}=e^{-\blx T_0}$.
\end{proof}

\subsection{Convexity of  $\EXo (R,\EXa, \ts,P, \XSID)$ in $\ts$:}\label{app:convexityints}
\begin{lemma}
\label{lem:precon}
  For any  probability distribution $P$ on input alphabet $\inpS$,  $\cesp(P,Q,R)$ is convex in $(Q,R)$  pair. 
\end{lemma}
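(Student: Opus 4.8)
The plan is to work directly from the variational definition $\cesp(R,P,Q) = \min_{V:\, \MI{P}{V}\leq R,\, \Ymar{PV}=Q} \CDIV{V}{\CTM}{P}$ (with the convention that a minimum over the empty set is $+\infty$), and to show that the optimal value of this parametrized convex program is jointly convex in the two parameters $(Q,R)$ that enter the constraints, with $P$ held fixed. The standard device is to exhibit, for a convex combination of two parameter points, a single feasible channel whose objective value is dominated by the corresponding convex combination of the two optimal values; the piecewise description in (\ref{eq:cesp}) is not needed, and convexity is cleanest to read off from this raw min-form.

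Concretely, I would fix $\lambda \in [0,1]$ and two parameter points $(R_1,Q_1)$, $(R_2,Q_2)$, and write $(R_\lambda,Q_\lambda) = \lambda(R_1,Q_1) + (1-\lambda)(R_2,Q_2)$. If either $\cesp(R_1,P,Q_1)$ or $\cesp(R_2,P,Q_2)$ is infinite the claimed inequality is trivial, so I may assume both are finite. Since the constraint set $\{V:\MI{P}{V}\leq R_i,\ \Ymar{PV}=Q_i\}$ is a compact subset of the product of simplices and $\CDIV{V}{\CTM}{P}$ is lower semicontinuous, the minimum is attained; let $V_1,V_2$ be minimizers. I then form $V_\lambda = \lambda V_1 + (1-\lambda)V_2$, which, being a rowwise convex combination of stochastic matrices, is again stochastic.

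The heart of the argument is checking that $V_\lambda$ is feasible for $(R_\lambda,Q_\lambda)$ and has small objective value, using three elementary facts, all for fixed $P$: (i) the output marginal $\Ymar{PV}$ is linear in $V$, so $\Ymar{PV_\lambda} = \lambda Q_1 + (1-\lambda)Q_2 = Q_\lambda$; (ii) the mutual information $\MI{P}{V}$ is convex in $V$, so $\MI{P}{V_\lambda} \leq \lambda\MI{P}{V_1} + (1-\lambda)\MI{P}{V_2} \leq \lambda R_1 + (1-\lambda)R_2 = R_\lambda$; and (iii) the conditional divergence $\CDIV{V}{\CTM}{P} = \sum_{\dinp} P(\dinp)\, \DIV{V(\cdot|\dinp)}{\CTM(\cdot|\dinp)}$ is convex in $V$ (the Kullback--Leibler divergence being jointly convex, hence convex in its first argument with $\CTM$ fixed), so $\CDIV{V_\lambda}{\CTM}{P} \leq \lambda\CDIV{V_1}{\CTM}{P} + (1-\lambda)\CDIV{V_2}{\CTM}{P}$. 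Facts (i) and (ii) establish feasibility of $V_\lambda$, whence $\cesp(R_\lambda,P,Q_\lambda) \leq \CDIV{V_\lambda}{\CTM}{P}$; combining with (iii) and the optimality of $V_1,V_2$ yields $\cesp(R_\lambda,P,Q_\lambda) \leq \lambda\cesp(R_1,P,Q_1) + (1-\lambda)\cesp(R_2,P,Q_2)$, which is the asserted joint convexity.

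None of the three ingredients poses a genuine obstacle, as they are standard properties of mutual information and relative entropy. The only points requiring care are bookkeeping rather than substance: confirming that minimizers exist so that $V_1,V_2$ are well defined (handled by the compactness and lower-semicontinuity remarks above), and disposing of the degenerate cases where a constraint set is empty, i.e. where $\cesp$ equals $+\infty$, for which the inequality holds vacuously.
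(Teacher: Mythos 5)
Your proof is correct and is essentially the paper's own argument: both rest on the same three facts (linearity of $\Ymar{PV}$ in $V$, convexity of $\MI{P}{V}$ in $V$, and convexity of $\CDIV{V}{\CTM}{P}$ in $V$), with the convex combination of optimizers shown feasible for the averaged constraint point. The only difference is presentational bookkeeping — you explicitly invoke compactness and dispose of the infinite cases, while the paper phrases the same step as a joint minimization followed by an enlargement of the feasible set — so there is nothing substantive to add.
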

\begin{proof}
Note that
\begin{equation*}
\gamma \cesp(R_a,P,Q_a)+(1-\gamma) \cesp(R_b,P,Q_b)
=\min_{V_a,V_b:\substack{ \MI{P}{V_a}\leq R_a~~\MI{P}{V_b}\leq R_b\\ \Ymar{P V_a}=Q_a~~\Ymar{P V_b}=Q_b}}  \gamma \CDIV{V_a}{W}{P}+(1-\gamma) \CDIV{V_b}{W}{P}
\end{equation*}
Using the convexity of $\CDIV{V}{W}{P}$  in $V$ and Jensen's inequality we get,
\begin{equation*}
\gamma \cesp(R_a,P,Q_a)+(1-\gamma) \cesp(R_b,P,Q_b)
\geq \min_{V_a,V_b:\substack{ \MI{P}{V_a}\leq R_a~~\MI{P}{V_b}\leq R_b\\ \Ymar{P V_a}=Q_a~~\Ymar{P V_b}=Q_b}} \CDIV{V_\gamma}{W}{P}
\end{equation*}
where $V_{\gamma}=\gamma V_a+(1-\gamma)V_b$.

If the set that a minimization is done over is enlarged, then the resulting minimum does not increase. Using this fact together with the convexity of $\MI{P}{V}$ in $V$  and  Jensen's inequality we get,
 \begin{align*}
\gamma \cesp(R_a,P,Q_a)+(1-\gamma) \cesp(R_b,P,Q_b)
&\geq\min_{V_\gamma :\substack{ \MI{P}{V_\gamma}\leq R_\gamma \\ \Ymar{P V_\gamma}=Q_\gamma}}  \CDIV{V_\gamma}{W}{P}\\
&=\cesp(R_{\gamma},P,Q_{\gamma})
\end{align*}
where $R_{\gamma}=\gamma R_a+(1-\gamma)R_b$, $Q_{\gamma}=\gamma Q_a+(1-\gamma)Q_b$.
\end{proof}

\begin{lemma}
\label{lem:conveityints}
  For all $(R,\EXa,P,\XSID)$ quadruples such that $\rexp(R,P)\geq\EXa$,  $\EXo (R,\EXa, \ts,P, \XSID)$ is a convex function of $\ts$ on the interval $[\ts^*(R, \EXa, P),1]$ where $\ts^*(R, \EXa, P)$ is the unique solution\footnote{The equation $\ts\rexp(\tfrac{R}{\ts},P)=0$ has multiple solutions; we choose the minimum of those to be the $\ts^*$ i.e., $\ts^*(R, 0, P)=\tfrac{R}{\MI{P}{W}}$.} of $\ts\rexp(\tfrac{R}{\ts},P)=\EXa$.
\end{lemma}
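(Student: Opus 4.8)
The plan is to work with the alternative expression (\ref{eq:altoptdef}) for $\EXo(R,\EXa,\ts,P,\XSID)$ and to exploit the fact that every $\ts$-dependent term appearing in it is a \emph{perspective} of a convex function. First I would observe that on the interval $[\ts^*(R,\EXa,P),1]$ we are always in the second branch of (\ref{eq:altoptdef}): since $\rexp(\cdot,P)$ is non-increasing in its rate argument and non-negative, the map $\ts\mapsto \ts\rexp(\tfrac{R}{\ts},P)$ is a product of two non-negative non-decreasing functions of $\ts$, hence non-decreasing, so $\ts\geq \ts^*(R,\EXa,P)$ forces $\ts\rexp(\tfrac{R}{\ts},P)\geq \EXa$. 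Thus on this interval $\EXo(R,\EXa,\ts,P,\XSID)$ equals the constrained minimum
\[
\min \; \ts\cesp(\tfrac{R_2}{\ts},P,Q)+ R_1-R +(1-\ts)\bhexp{\tfrac{T}{1-\ts}}{\XSID}
\]
over $(Q,R_1,R_2,T)$ with $R_1\geq R_2\geq R$, $T\geq 0$ and $\ts\cesp(\tfrac{R_1}{\ts},P,Q)+R_2-R+T\leq \EXa$. The feasible set may be taken compact (the simplex of output distributions $Q$ is compact and the constraint bounds $R_2,T$, while $R_1$ may be restricted to a bounded range since beyond saturation of $\cesp$ it only hurts the objective), so the minimum is attained; alternatively one may run the argument below with $\epsilon$-optimal points.

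Next, fix $\ts_a,\ts_b\in[\ts^*(R,\EXa,P),1]$, $\gamma\in[0,1]$ and set $\ts_\gamma=\gamma\ts_a+(1-\gamma)\ts_b$. Let $(Q_a,R_{1,a},R_{2,a},T_a)$ and $(Q_b,R_{1,b},R_{2,b},T_b)$ be minimizers at $\ts_a$ and $\ts_b$. The crux is to combine these two points with the correct \emph{perspective weights}: set $\lambda_a=\tfrac{\gamma\ts_a}{\ts_\gamma}$, $\lambda_b=\tfrac{(1-\gamma)\ts_b}{\ts_\gamma}$ (so $\lambda_a+\lambda_b=1$) for the terms carrying the factor $\ts$, and $\mu_a=\tfrac{\gamma(1-\ts_a)}{1-\ts_\gamma}$, $\mu_b=\tfrac{(1-\gamma)(1-\ts_b)}{1-\ts_\gamma}$ (so $\mu_a+\mu_b=1$) for the term carrying $1-\ts$. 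I would then propose as the candidate point at $\ts_\gamma$ the values $Q_\gamma=\lambda_a Q_a+\lambda_b Q_b$, $R_{i,\gamma}=\gamma R_{i,a}+(1-\gamma)R_{i,b}$ and $T_\gamma=\gamma T_a+(1-\gamma)T_b$. The arithmetic that makes everything fit is the pair of perspective identities $\tfrac{R_{i,\gamma}}{\ts_\gamma}=\lambda_a\tfrac{R_{i,a}}{\ts_a}+\lambda_b\tfrac{R_{i,b}}{\ts_b}$ and $\tfrac{T_\gamma}{1-\ts_\gamma}=\mu_a\tfrac{T_a}{1-\ts_a}+\mu_b\tfrac{T_b}{1-\ts_b}$, which hold precisely for this choice of weights.

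With these in hand the proof becomes a sequence of convexity bounds. Applying the joint convexity of $\cesp(\cdot,P,\cdot)$ in $(R,Q)$ (Lemma \ref{lem:precon}) to the pairs $(\tfrac{R_{i,a}}{\ts_a},Q_a)$ and $(\tfrac{R_{i,b}}{\ts_b},Q_b)$ with the weights $\lambda_a,\lambda_b$, then multiplying through by $\ts_\gamma$ (using $\ts_\gamma\lambda_a=\gamma\ts_a$ and $\ts_\gamma\lambda_b=(1-\gamma)\ts_b$), yields $\ts_\gamma\cesp(\tfrac{R_{i,\gamma}}{\ts_\gamma},P,Q_\gamma)\leq \gamma\,\ts_a\cesp(\tfrac{R_{i,a}}{\ts_a},P,Q_a)+(1-\gamma)\,\ts_b\cesp(\tfrac{R_{i,b}}{\ts_b},P,Q_b)$ for $i=1,2$. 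The same perspective/convexity step applied to $\bhexp{\cdot}{\XSID}$ (convex in its first argument, by the remark after (\ref{eq:bhtexpd})) with the weights $\mu_a,\mu_b$ and the factor $1-\ts_\gamma$ gives the analogous bound for the $\bhexp$ term. The linear pieces $R_1-R$ and $R_2-R+T$ split exactly under $\gamma$-averaging, and $R_1\geq R_2\geq R$, $T\geq 0$ are preserved. Combining the $i=1$ bound with the averaged linear terms shows the candidate satisfies the constraint with value $\leq \gamma\EXa+(1-\gamma)\EXa=\EXa$, hence is feasible at $\ts_\gamma$; combining the $i=2$ bound, the averaged $R_1-R$, and the $\bhexp$ bound shows its objective value is at most $\gamma\EXo(R,\EXa,\ts_a,P,\XSID)+(1-\gamma)\EXo(R,\EXa,\ts_b,P,\XSID)$. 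Since $\EXo(R,\EXa,\ts_\gamma,P,\XSID)$ is the minimum over all feasible points, convexity in $\ts$ follows.

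The main obstacle is the bookkeeping of two different perspective weightings: the $\cesp$ terms are perspectives in $\ts$ (weights $\lambda$) whereas the $\bhexp$ term is a perspective in $1-\ts$ (weights $\mu$), yet a single combined pair $(Q_\gamma,T_\gamma)$ must be used consistently in both the objective and the constraint. One must verify that the $\lambda$-weighting of $Q_a,Q_b$ is exactly the weighting under which $\tfrac{R_{1,\gamma}}{\ts_\gamma}$ and $\tfrac{R_{2,\gamma}}{\ts_\gamma}$ become $\lambda$-averages of $\tfrac{R_{i,a}}{\ts_a},\tfrac{R_{i,b}}{\ts_b}$, so that Lemma \ref{lem:precon} can be invoked with a single set of weights for the $(R,Q)$ pair; this is exactly what forces $Q_\gamma=\lambda_aQ_a+\lambda_bQ_b$ rather than the naive $\gamma$-average. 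Everything else is the standard fact that the perspective of a convex function is jointly convex.
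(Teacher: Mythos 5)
Your proof is correct and takes essentially the same route as the paper's: the same combined point with perspective weights, namely $Q_\gamma=\tfrac{\gamma\ts_a}{\ts_\gamma}Q_a+\tfrac{(1-\gamma)\ts_b}{\ts_\gamma}Q_b$ together with plain $\gamma$-averages for $R_1,R_2,T$, followed by the joint convexity of $\cesp(\cdot,P,\cdot)$ from Lemma \ref{lem:precon} and the convexity of $\bhexp{\cdot}{\XSID}$ in its first argument. The only cosmetic difference is that the paper runs the argument on the joint minimization itself (so attainment of minimizers never arises), whereas you fix minimizers and patch attainment via compactness or $\epsilon$-optimal points, which is equally valid.
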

\begin{proof}
For any $P$ such that $\rexp(R,P)$ is non-negative, convex and decreasing function of $R$ in the interval $[0,\MI{P}{W}]$. Thus $\ts E_r(\tfrac{R}{\ts},P)$ is strictly increasing continuous function of $\ts \in [\tfrac{R}{\MI{P}{W}},1]$.  Furthermore for $\ts=\tfrac{R}{\MI{P}{W}}$, $\ts E_r(\tfrac{R}{\ts},P)=0$ and  for  $\ts=1$, $\ts E_r(\tfrac{R}{\ts},P)  \geq \EXa$. Thus $\ts\rexp(\tfrac{R}{\ts},P)=\EXa$ has a unique solution.

Note that for any $\gamma \in [0,1]$
\begin{align*}
\notag
\gamma  \EXo (R,\EXa, \ts_a, P, \XSID) &+ (1-\gamma) \EXo (R,\EXa, \ts_b, P, \XSID) \\
&=\hspace{-.5cm} \min_{ 
\substack{ 
\hspace{.7cm}Q_a, R_{1a}, R_{2a}, T_a, Q_b, R_{1b}, R_{2b}, T_b:\\
\hspace{.3cm}R_{1a} \geq R_{2a}\geq R~~T_{a} \geq 0\\ 
\hspace{.3cm}R_{1b} \geq R_{2b}\geq R~~T_{b} \geq 0\\ 
\hspace{.7cm}\ts_a \cesp(\frac{R_{1a}}{\ts_a},P,Q_a)+R_{2a}-R+T_a\leq \EXa\\
\hspace{.7cm}\ts_b \cesp(\frac{R_{1b}}{\ts_b},P,Q_b)+R_{2b}-R+T_b\leq \EXa}}
\hspace{-0.5cm}
\begin{array}{l}
\gamma\left[\ts_a\cesp(\tfrac{R_{2a}}{\ts_a},P,Q_a)+R_{1a}-R+ (1-\ts_a)\bhexp{\tfrac{T_a}{1-\ts_a}}{\XSID} \right]\\
\qquad +(1-\gamma)\left[\ts_b\cesp(\tfrac{R_{2b}}{\ts_b},P,Q_b)+ R_{1b}-R+(1-\ts_b)\bhexp{\tfrac{T_b}{1-\ts_b}}{\XSID} \right]
\end{array}
\\
&\geq \hspace{-1.5cm} \min_{ 
\substack{
\hspace{-.1cm}Q_\gamma, R_{1\gamma}, R_{2\gamma} T_{\gamma}:\\ 
\hspace{.1cm}R_{1\gamma} \geq R_{2\gamma} \geq R~~T_{\gamma}\geq 0 \\
\hspace{1.7cm}\ts_{\gamma}\cesp(\frac{R_{1\gamma}}{\ts_{\gamma}},P,Q_{\gamma})+ R_{2\gamma}-R +T_{\gamma}  \leq \EXa}}
\hspace{-1cm}
\ts_{\gamma}\cesp(\tfrac{R_{2\gamma}}{\ts_{\gamma}},P,Q_{\gamma})+ R_{1\gamma}-R + (1-\ts_{\gamma})\bhexp{\tfrac{T}{1-\ts_{\gamma}}}{\XSID}\\
&= \EXo (R,\EXa, \ts_{\gamma}, P, \XSID). 
\end{align*}
where $\ts_{\gamma}$, $T_{\gamma}$, $Q_{\gamma}$, $R_{1\gamma}$ and $R_{2\gamma}$  are given by,
\begin{align*}
\ts_{\gamma}&=\gamma \ts_a+(1-\gamma) \ts_b
&T_{\gamma}&=\gamma T_a+(1-\gamma) T_b
&Q_{\gamma}&=\tfrac{\gamma\ts_a}{\ts_{\gamma}}Q_a+\tfrac{(1-\gamma)\ts_b}{\ts_{\gamma}} Q_b\\
R_{1\gamma}&=\gamma R_{1a} +(1-\gamma) R_{1b}
&R_{2\gamma}&=\gamma R_{2a} +(1-\gamma) R_{2b}
& ~&
\end{align*}
The inequality follows from convexity arguments analogous to the ones used in the  proof of Lemma \ref{lem:precon}.
\end{proof}

\subsection{  $\max_{\XSID}\EXo (R,\EXa,\ts, P,\XSID) > \max_{\XSID}\EXo(R,\EXa,1, P,\XSID)$,~~ $\forall P\in \PSet{R}{\EXa}{\ts}$}\label{app:specialcase}
Let us first consider a control phase type $\XSID_{P}(\dinp_1,\dinp_2)=\frac{P(\dinp_1)P(\dinp_2) \IND{\dinp_1 \neq \dinp_2}}{1-\sum_{\dinp} (P(\dinp))^2}$ and establish,
\begin{equation}
\label{eq:specopta}
  \EXo (R,\EXa,\ts, P, \XSID_{P}) >  \EXo (R,\EXa,1, P, \XSID_{P}) \qquad \forall P \in   \PSet{R}{\EXa}{\ts}
\end{equation}
First consider 
\begin{align}
\CDIV{\XSOD}{W_a}{\XSID_{P}}
\notag
&=\tfrac{1}{1-\sum_{\dinp} (P(\dinp))^2} \sum_{\dinp_1,\dinp_2:\dinp_1\neq  \dinp_2}  P(\dinp_1)P(\dinp_2) \sum\nolimits_{\dout} \XSOD(\dout|\dinp_1,\dinp_2) \log \tfrac{\XSOD(\dout|\dinp_1,\dinp_2)}{\CT{\dinp_1}{\dout}}\\
\notag
&=\tfrac{1}{1-\sum_{\dinp} (P(\dinp))^2} \sum_{\dinp_1,\dinp_2:\dinp_1\neq  \dinp_2}  P(\dinp_1)P(\dinp_2) \sum\nolimits_{\dout} \XSOD(\dout|\dinp_1,\dinp_2) 
 \left[\log \tfrac{\XSOD(\dout|\dinp_1,\dinp_2)}{V_{\XSOD}(\dout|\dinp_1)}-
\log \tfrac{V_{\XSOD}(\dout|\dinp_1)}{\CT{\dinp_1}{\dout}} \right]\\
\label{eq:appb1}
&\geq \tfrac{1}{1-\sum_{\dinp} (P(\dinp))^2} \left[\MI{P}{\hat{V}_{\XSOD}}+\CDIV{V_{\XSOD}}{W}{P} \right]
\end{align}
where the last step follows from the log sum inequality and transition probability matrices $V_{\XSOD}$ and  $\hat{V}_{\XSOD}$ are given by 
\begin{align*}
V_{\XSOD}(\dout|\dinp_1)&=\CT{\dinp_1}{\dout} P(\dinp_1)  +\sum\nolimits_{\dinp_2:\dinp_2\neq \dinp_1} \XSOD(\dout|\dinp_1,\dinp_2) P(\dinp_2)\\
\hat{V}_{\XSOD}(\dout|\dinp_2)&=\CT{\dinp_2}{\dout} P(\dinp_2) +\sum\nolimits_{\dinp_1:\dinp_1\neq \dinp_2} \XSOD(\dout|\dinp_1,\dinp_2) P(\dinp_1).  
\end{align*}
Using a similar line of reasoning  we get, 
\begin{equation}
\label{eq:appb2}
 \CDIV{\XSOD}{W_r}{\XSID_{P}} \geq \tfrac{1}{1-\sum_{\dinp} (P(\dinp))^2} \left[ \CDIV{\hat{V}_{\XSOD}}{W}{P}+\MI{P}{V_{\XSOD}} \right]
\end{equation}
Note that for all $P\in \PSet{R}{\EXa}{\ts}$ if use the inequalities (\ref{eq:appb1}) and (\ref{eq:appb2}) together the definition of $ \EXo$ given in equation (\ref{eq:prand}) and  (\ref{eq:thm1}) we get, 
\begin{equation*}
  \EXo (R,\EXa,\ts(R,\EXa), P, \XSID_{P}) \geq  \EXo (R,\EXa,1, P, \XSID_{P})  +\delta_{P}
\end{equation*}
for some $\delta_P>0$. Consequently for all $P\in \PSet{R}{\EXa}{\ts}$, equation (\ref{eq:specopta}) holds. 

Note that  for all $\XSID$ and for all $P\in \PSet{R}{\EXa}{\ts}$
\begin{equation*}
  \EXo (R,\EXa,1, P, \XSID_{P})=  \EXo (R,\EXa,1, P, \XSID). 
\end{equation*}
Thus we have:
\begin{equation}
  \label{eq:specopt}
\max_{\XSID}\EXo (R,\EXa,\ts, P,\XSID) > \max_{\XSID}\EXo(R,\EXa,1, P,\XSID) \qquad  \forall P\in \PSet{R}{\EXa}{\ts}.
\end{equation}

\newcommand{\noopsort}[1]{} \newcommand{\printfirst}[2]{\#1}
  \newcommand{\singleletter}[1]{\#1} \newcommand{\switchargs}[2]{\#2\#1}

\end{document}